\numberwithin{equation}{section} 
\newtheoremstyle{myprop}
  {6pt}
  {3pt}
  {\itshape}
  {}
  {\scshape}
  {.}
  {.5em}
  {}
\theoremstyle{myprop}
\newtheorem{proposition}{Proposition}[section]
\newtheorem{corollary}[proposition]{Corollary}
\newtheorem{conjecture}[proposition]{Conjecture}
\theoremstyle{remark}
\newtheoremstyle{example}
  {6pt}
  {3pt}
  {\small\sffamily}
  {}
  {\small\sffamily\slshape}
  {.}
  {.5em}
  {}
\theoremstyle{example}
\newtheorem{example}[proposition]{Example}
\renewcommand{\ge}{\geqslant}
\newcommand{\bigqquad}{\qquad \qquad \qquad \qquad \qquad}
\definecolor{shadecolor}{rgb}{.9, .9, .9}
\newenvironment{code}%
   {\footnotesize\snugshade\verbatim}%
   {\endverbatim\endsnugshade}
\begin{document}

\title{Extending one-factor copulas}
\author{
    Nathan Uyttendaele\footnote{Universit\'e catholique de Louvain, Institut de Statistique, Biostatistique et Sciences Actuarielles, Voie du Roman Pays 20, B-1348 Louvain-la-Neuve, Belgium.} \\
    \small na.uytten@gmail.com \\ \small {\bf corresponding author}
	\and
	Gildas Mazo$^*$ \\
	\small gildas.mazo@uclouvain.be
}
\date{\today}
\maketitle

\begin{abstract}
So far, one-factor copulas induce conditional independence with respect to a latent factor. In this paper, we extend one-factor copulas to conditionally dependent models. This is achieved through new representations which allow to build new parametric factor copulas with a varying conditional dependence structure. We discuss estimation and properties of these representations. In order to distinguish between conditionally independent and conditionally dependent factor copulas, we provide a novel statistical test which does not assume any parametric form for the conditional dependence structure. Illustrations of our framework are provided through examples, numerical experiments, as well as a real data analysis.
\end{abstract}

{\bf Keywords:}
Archimedean copula,
dependence,
conditional,
hierarchical Archimedean copula,
factor,
copula,
latent,
independence,
test.

\section{Introduction}
\label{sec:introduction}
Factor copulas refer to copulas which can be expressed by means of unobserved variables, the factors. Most of the time, only one univariate (and continuous) factor $X_0$ is used, and thus one talks about one-factor copulas. They are a hot topic of research at the moment, being written about by \cite{krupskii2013factor}, \cite{joeBook2014}, \cite{krupskiiJoeFactor2015}, \cite{NPCOFTF} or \cite{oh2015modelling}. Nonetheless, the scope of current one-factor copulas is still limited when it comes to the construction of parametric models.

First, only factors with a uniform distribution are currently considered to the literature. Yet, in applications, the identification of a factor may implicitely assume estimating its distribution. 
Second, studying the factor's impact on the dependence structure is not allowed, either. Indeed, in current one-factor copulas, only conditional independence is allowed. That is, the variables $U_1,\dots,U_d$, whose joint distribution is a copula, are independent conditionally on the factor $X_0=x_0$. This means that, for all $u_1,\dots,u_d\in [0,1]$,
\begin{align*}
  P(U_1\leq u_1,\dots,U_d\leq u_d\vert X_0=x_0)=\prod_{i=1}^d P(U_i\leq u_i\vert X_0=x_0).
  \end{align*}
As a result, current one-factor copulas are written \citep{krupskii2013factor} as

\begin{align}\label{eq:current one-factor copulas}
C(u_1,\dots,u_d)&=\int_0^1 \prod_{i=1}^d P(U_i\leq u_i\vert X_0=x_0) \, dx_0\\
&=\int_0^1 \prod_{i=1}^d \frac{\partial C_i(u_i, x_0)}{\partial x_0} \, dx_0\notag \\
&=\int_0^1 \prod_{i=1}^d C_{i\vert 0}(u_i\vert x_0) \, dx_0,\notag
\end{align}
where $C_i$ is some bivariate copula and $X_0$ has a uniform distribution.

As \eqref{eq:current one-factor copulas} allows to see, the task of modeling only amounts to choose a parametric form for $P(U_i\leq u_i\vert X_0=x_0), i \in \{1, \ldots, d\}$.
What if the practitioner assumes that the dependence grows with the factor's value? Or, what if the dependence structure remains the same, but is not conditional independence?

This paper introduces new representations for one-factor copulas, allowing one to construct novel parametric models. These representations cover many models of the literature in a nontrivial way, as seen in Section \ref{modelsexploration}. Section \ref{propertiesofEOFCs} explores various properties of the new representations, while Section \ref{sec:methods} explores estimation and also proposes a novel test to assess whether conditional independence may hold or not, without assuming any parametric form for the dependence structure. Section \ref{sec:numericalExperiments} presents the numerical experiments used to illustrate the testing procedure as well as a real data analysis.

\section{Extended one-factor copulas and nested extended one-factor copulas \label{sec:spectralRepresentation}}

Consider the \emph{the law of total probability}, 
\begin{align}\label{eq:law of total probability}
  C(u_1,\dots,u_d)&=P(U_1\leq u_1,\dots,U_d\leq u_d)\notag\\
  &=\int
  P(U_1\leq u_1,\dots,U_d\leq u_d\vert X_0=x_0) f_{0}(x_0)\, dx_0,
\end{align}
where the integral is taken over the support of $X_0$ and $f_0$ denotes its density. Equation \eqref{eq:law of total probability} is the one from which the formula of current one-factor copulas originated, given by \eqref{eq:current one-factor copulas}. One can easily see that one-factor copulas are a reformulation of the law of total probability in which the factor $X_0$ is uniformly distributed on $[0,1]$ and the variables $U_1,\dots,U_d$ are assumed to be independent conditionally on the factor.

To extend one-factor copulas, in addition to letting the density of $X_0$, $f_{0}$, be unspecified, it is proposed to reconsider the decomposition of 
\begin{equation*}
P(U_1\leq u_1,\dots,U_d\leq u_d\vert X_0=x_0).
\end{equation*}

Given $X_0=x_0$, certainly the vector $(U_1,\dots,U_d)$ has a distribution function, but it is not, in general, a copula, because $U_i\vert X_0=x_0$ is not, in general, uniformly distributed. By Sklar's theorem \citep{sklar1959fonction,nelsen_introduction_2006}, 
\begin{equation*}
P(U_1\leq u_1,\dots,U_d\leq u_d\vert X_0=x_0)
\end{equation*}
can be decomposed as a copula and a set of marginal distributions, that is,

\begin{multline}\label{eq:conditional Sklar}
 P(U_1\leq u_1,\dots,U_d\leq u_d\vert X_0=x_0)= \\ C_{x_0}(P(U_1\leq u_1\vert X_0=x_0),\dots,P(U_d\leq u_d\vert X_0=x_0)).
\end{multline}

If we let $x_0$ vary, both the copula $C_{x_0}$ and the margins $P(U_i\leq u_i\vert X_0=x_0)$, $i=1,\dots,d$, will be, in fact, conditional distributions.

\begin{example}\label{ex:general Gaussian factor copula}
  Consider \eqref{eq:law of total probability} with $X_0$ following an exponential distribution, as 
  \begin{align}\label{eq:f_0 exponential}
    f_{0}(x_0)=e^{-x_0},\qquad x_0>0.
  \end{align}
  Moreover, in \eqref{eq:conditional Sklar}, assume that
  \begin{align*}
    P(U_i\leq u_i\vert X_0=x_0)
    =\int_0^{u_i}\frac{\Gamma(1+x_0)}{\Gamma(x_0)}
    (1-t)^{x_0-1}\, dt,
  \end{align*}
  where 
  \begin{align}\label{eq:gamma function}
    \Gamma(z)=\int_0^\infty t^{z-1}e^{-t}\,dt,
    \qquad z>0,
  \end{align}
  is the well known Gamma function.
  Finally, assume that the density of $C_{x_0}$, $c_{x_0}$ is
  \begin{align}\label{eq:Gaussian copula} 
    c_{x_0}(u_1,\dots,u_d)
    =(\text{det }R(x_0))^{-1/2}
    \exp\left[
      -\frac{1}{2}z^\top([R(x_0)]^{-1}-I)z\right],
    \end{align}
where $z=(z_1,\dots,z_d)$, $z_i$ is the quantile of order $u_i$ of the standard normal distribution, $I$ is the $d\times d$ identity matrix, and
\begin{align}\label{eq:exchangeable correlation matrix}
  R(x_0)=
  \begin{pmatrix}
    1& & & \\
     &\ddots&\beta(x_0)& \\
     &\beta(x_0)&\ddots& \\
     & & &1
  \end{pmatrix},
\end{align}
where
\begin{align*}
  \beta(x_0)=e^{-x_0}.
\end{align*}
\end{example}
In Example \ref{ex:general Gaussian factor copula}, for a fixed $x_0$, the copula $C_{x_0}$ is a multivariate Gaussian copula with an exchangeable correlation matrix with parameter $\beta(x_0)=e^{-x_0}$. Likewise, the distribution of $U_i$ given $X_0=x_0$ is a beta distribution with parameters $1$ and $x_0$. By Sklar's theorem, $P(U_i\leq u_i\vert X_0=x_0)$ and $C_{x_0}$ can be set independently.

\begin{example}\label{ex:general Gumbel factor copula}
  Consider \eqref{eq:law of total probability} with $X_0$ following a Pareto distribution, as
  \begin{align}\label{eq:f_0 pareto}
    f_{0}(x_0)=x_0^{-2},\qquad x_0>1.
  \end{align}
  Moreover, in \eqref{eq:conditional Sklar}, assume that 
  \begin{align*}
    C_{x_0}(u_1,\dots,u_d)=
    \exp\left[-
      \bigg( (-\log u_1)^{x_0}+\ldots+(-\log u_d)^{x_0} \bigg)^{1/x_0}
    \right].
  \end{align*}
\end{example}
In Example \ref{ex:general Gumbel factor copula}, for a fixed $x_0$, the copula $C_{x_0}$ is recognized to be a Gumbel-Hougaard copula with parameter $x_0$, see \citet[p. 153]{nelsen_introduction_2006}. The margins $P(U_i\leq u_i\vert X_0=x_0)$, $i=1,\dots,d$ were not specified.

While examples such as Example \ref{ex:general Gaussian factor copula} and Example \ref{ex:general Gumbel factor copula} could be multiplied endlessly, there is a representation, presented below, which permits to get them all, and build  general parametric one-factor copulas quite easily. 
So, in view of both the law of total probability \eqref{eq:law of total probability} and the ``conditional Sklar's theorem'' \eqref{eq:conditional Sklar}, every one-factor copula can be written as
\begin{align}\label{eq:representation 1}
  &C(u_1,\dots,u_d)\notag\\
  =&\int C_{x_0}[P(U_1\leq u_1\vert X_0=x_0),\dots,
  P(U_d\leq u_d\vert X_0=x_0)] f_{0}(x_0)\, dx_0,
\end{align}
where, as in Examples \ref{ex:general Gaussian factor copula} and \ref{ex:general Gumbel factor copula}, $C_{x_0}$ is to be understood as a collection, running over $x_0$, of well defined $d$-variate copulas and where the integral is taken over the support of $X_0$. In representation \eqref{eq:representation 1}, as well as in Example \ref{ex:general Gaussian factor copula} and Example \ref{ex:general Gumbel factor copula}, letting $x_0$ vary induces a collection of copulas $\{C_{x_0}\}$ which reflects the change in the dependence structure as the factor varies.
For instance, in the Example \ref{ex:general Gaussian factor copula}, $C_{x_0}\to \Pi$ (pointwise) as $x_0\to\infty$, where $\Pi$ denotes the independence copula, that is, $\Pi(u_1,\dots,u_d)=u_1\cdots u_d$ for all $u_1,\dots,u_d\in[0,1]$. 
On the other hand, if $x_0\to 0$, then $C_{x_0}\to M$, where $M$ denotes the Fr\'echet-Hoeffding bound for copulas, that is, $M$ represents the complete positive dependence structure, with $M(u_1,\dots,u_d)=\min(u_1,\dots,u_d)$ for all $u_1,\dots,u_d\in[0,1]$.
The opposite happens in Example \ref{ex:general Gumbel factor copula}. We have that $C_{x_0}\to M$ whenever $x_0\to 0$ and $C_{x_0}\to \Pi$ whenever $x_0\to 1$.

Representation \eqref{eq:representation 1} can be recast in terms of standard uniform variables only. Let $Q_0=F_0^{-1}$ be the inverse of the factor's distribution function $F_0$. By the change of variables $u_0=F_0(x_0)$ in \eqref{eq:representation 1}, we have
\begin{align}\label{eq:representation 2}
  &C(u_1,\dots,u_d)\notag\\
  =&\int C_{x_0}[ 
    P(U_1\leq u_1\vert U_0=F_0(x_0)),\dots,P(U_d\leq u_d\vert U_0=F_0(x_0))] f_{0}(x_0)\, dx_0 \notag\\
  =&\int_{0}^{1} C_{Q_0(u_0)}^\diamond[ 
    P(U_1\leq u_1\vert U_0=u_0),\dots,P(U_d\leq u_d\vert U_0=u_0)]\,du_0\notag \\
    =&\int_{0}^{1}  C_{Q_0(u_0)}^\diamond[
      C_{1\vert 0}(u_1\vert u_0),\dots,C_{d\vert 0}(u_d\vert u_0)]\, du_0 \notag \\
	    =&\int_{0}^{1}  C_{Q_0(u_0)}^\diamond\bigg[\frac{\partial C_{1}(u_1,u_0)}{\partial u_0}, \ldots, \frac{\partial C_{d}(u_d,u_0)}{\partial u_0}\bigg]\, du_0,
\end{align}
where $(U_i,U_0)\sim C_{i}$, $i=1,\dots,d$. 

Examples \ref{ex:general Gaussian factor copula} and \ref{ex:general Gumbel factor copula} can be recast in view of \eqref{eq:representation 2}.

\begin{example}[continuation of Example \ref{ex:general Gaussian factor copula}]\label{ex:Gaussian u_0}
  From \eqref{eq:f_0 exponential}, we have $Q_0(u_0)=-\log(1-u_0)$, hence, for a fixed $u_0\in(0,1)$, $C_{Q_0(u_0)}^\diamond$ is a multivariate Gaussian copula with  correlation matrix given by
  \begin{align*}
  R(u_0)=
  \begin{pmatrix}
    1& & & \\
     &\ddots&\beta(u_0)& \\
     &\beta(u_0)&\ddots& \\
     & & &1
  \end{pmatrix},\qquad
  \beta(u_0)=e^{-Q_0(u_0)}=1-u_0.
\end{align*}
Furthermore, since $P(U_i\leq u_i\vert U_0=u_0)=P(U_i\leq u_i\vert X_0=Q_0(u_0))$, we have
\begin{align*}
  C_{i\vert 0}(u_i\vert u_0)
  =\int_0^{u_i}\frac{\Gamma(1+Q_0(u_0))}{\Gamma(Q_0(u_0))}
    (1-t)^{Q_0(u_0)-1}\, dt,
\end{align*}
so that the underlying bivariate copula is
\begin{align*}
  C_{i}(u_i,u_0)=\int_0^{u_0}\int_0^{u_i}
  \frac{\Gamma(1+Q_0(t))}{\Gamma(Q_0(t))}
  (1-y)^{Q_0(t)-1}\,dt\, dy,
\end{align*}
for $i=1,\dots,d$.
\end{example}
In Example \ref{ex:Gaussian u_0}, note that $u_0=0$ implies $\beta(u_0)=1$, and thus $C_{Q_0(u_0)}^\diamond$ is the Fr\'echet-Hoeffding bound $M$. Likewise, $u_0=1$ implies $\beta(u_0)=0$ (by continuity), and thus $C_{Q_0(u_0)}^\diamond$ is the independence copula. 

\begin{example}[continuation of Example \ref{ex:general Gumbel factor copula}]\label{ex:Gumbel u_0}
    From \eqref{eq:f_0 pareto}, we have $Q_0(u_0)=1/(1-u_0)$, hence, for a fixed $u_0\in(0,1)$, 
    \begin{align*}
      &C_{Q_0(u_0)}^\diamond(u_1,\dots,u_d)
      =\exp\left[-
        \left( (-\log u_1)^{\beta(u_0)}+\ldots+(-\log u_d)^{\beta(u_0)} \right)^{1/\beta(u_0)}
      \right],\qquad \\ &\beta(u_0)=Q_0(u_0)=\frac{1}{1-u_0},
    \end{align*}
that is, $C_{Q_0(u_0)}^\diamond$ is a multivariate Gumbel-Hougaard copula with parameter given by 
\begin{equation*}
\beta(u_0)=Q_0(u_0)=1/(1-u_0).
\end{equation*}
\end{example}
In Example \ref{ex:Gumbel u_0}, $u_0=0$ implies $\beta(u_0)=1$, and thus $C_{Q_0(u_0)}^\diamond$ is the independence copula. Likewise, $u_0=1$ implies $\beta(u_0)=\infty$ (by continuity), and thus $C_{Q_0(u_0)}^\diamond$ is the Fr\'echet-Hoeffding bound. In short, we simply replaced the $x_0$'s of Example \ref{ex:general Gaussian factor copula} and Example \ref{ex:general Gumbel factor copula} by $Q_0(u_0)$.

Mathematically, both representations \eqref{eq:representation 1} and \eqref{eq:representation 2} are of course equivalent. It is worth stressing that these representations are better not to be taken as plain mathematical results, but rather as a convenient way to generate new parametric one-factor copula models, as was shown in the above examples. The advantage of the representation in \eqref{eq:representation 2} is that it involves copulas only and allows an easy comparison with \eqref{eq:current one-factor copulas}. This last equation indeed corresponds to \eqref{eq:representation 2} with $C_{Q_0(u_0)}^\diamond=\Pi$, $x_0=u_0$ and $X_0=U_0$. Representation \eqref{eq:representation 1} is however more convenient when one adopts a point of view centered on the factor itself.

Both representations \eqref{eq:representation 1} and \eqref{eq:representation 2} contain two $d$-dimensional copulas: one, $C_{x_0}$ or $C_{Q_0(u_0)}^\diamond$, that's part of the integrand, and one, $C(u_1, \ldots, u_d)$, that's the result of the integral. In the rest of this paper, the latter will often be referred to as the \textit{outer} copula, while the former will often be referred to as the \textit{inner} copula. In the rest of this paper, the copulas $C_1, \ldots, C_d$ in \eqref{eq:representation 2} will be referred to as the bivariate copulas or \textit{linking} copulas.

The inner copula in both representations is usually assumed to be a copula with one parameter, this parameter being explicitly related to $u_0$ through $Q_0$ in representation \eqref{eq:representation 2}. However, many $d$-dimensional copula have more than one parameter. In such cases, each parameter can be assumed to be related to $u_0$ through a different mapping $Q_l$, $l \in \{1, \ldots, p\}$, where $p$ is the number of parameters of the related inner copula, and one can write:

\begin{equation} \nonumber
 C(u_1,\dots,u_d)
	    =\int_{0}^{1}  C_{\{Q_l(u_0)\}}^\diamond\bigg[\frac{\partial C_{1}(u_1,u_0)}{\partial u_0}, \ldots, \frac{\partial C_{d}(u_d,u_0)}{\partial u_0}\bigg]\, du_0.
\end{equation}
In general however, we will just write $C_{\{Q_l(u_0)\}}^\diamond$ as $C_{Q_0(u_0)}^\diamond$ or even $C_{u_0}^\diamond$.

Since any $d$-dimensional copula can be used in representations \eqref{eq:representation 1} and \eqref{eq:representation 2} as inner copula, one can also consider using an outer copula as inner copula, nesting our construction in itself.

Let us rewrite expression \eqref{eq:representation 2} as

\begin{equation} \label{preNEOFC}
 C(u_1,\dots,u_d)
  =\int_{0}^{1}  C_{t_1}^{\diamond_1}\bigg[H_{11}(u_1,t_1), \ldots, H_{d1}(u_d,t_1)\bigg]\, dt_1,
\end{equation}
that is, we rewrite $C_{Q_0(u_0)}^\diamond$ as $C_{t_1}^{\diamond_1}$, replace $\frac{\partial C_{i}(u_i,u_0)}{\partial u_0}$ by $H_{i1}(u_i, t_1)$ and $u_0$ by $t_1$.

Next, define $C_{t_1}^{\diamond_1}$ as

\begin{equation} \nonumber
C_{t_1}^{\diamond_1}(v_1,\dots,v_d)=\int_0^1 C_{t_2}^{\diamond_2}\big(H_{12}(v_1 , t_2), \ldots, H_{d2}(v_d , t_2)\big) dt_2,
\end{equation}
where $C_{t_2}^{\diamond_2}$ is some $d$-variate copula, the parameters of which being each related to $t_2$ through a different mapping, and where $H_{i2}(v_i, t_2)=\frac{\partial C_{i2}(v_i , t_2)}{\partial t_2}$, $\{C_{i2}\}$ being a set of bivariate copulas, with $i \in \{1, \ldots, d\}$.

The outer copula $C(u_1,\dots,u_d)$ in Equation \eqref{preNEOFC} is then
\begin{equation} \nonumber
\begin{split}
&C(u_1,\dots,u_d) \\
=&\int_0^1 \int_0^1 C_{t_2}^{\diamond_2}\big(H_{12}(H_{11}(u_1 , t_1) , t_2), \ldots, H_{d2}(H_{d1}(u_d , t_1) , t_2)\big) \,dt_2 \,dt_1.
\end{split}
\end{equation}

Note that in the above nesting scheme, the $d$-variate copula $C_{t_1}^{\diamond_1}$ actually does not depend on $t_1$. In general, we will always assume that only the last $d$-variate copula $C_{t_w}^{\diamond_w}$ in a nesting chain actually depends on its related variable $t_w$, in order to avoid an extra layer of complexity. While working with this last assumption, a nested extended one-factor copula (NEOFC), hereafter denoted as $C_w^\odot$, can be written as

\begin{equation} \label{NEOFC}
\begin{split}
&C_w^\odot(u_1,\dots,u_d) \\
&=\int_{[0,1]^w} C_{t_w}^{\diamond_w}\big(G_{1}(u_1 ; t_w, \ldots, t_1), \ldots, G_{d}(u_d ; t_w, \ldots, t_1)\big) \,dt_w\ldots dt_1,
\end{split}
\end{equation}
where $w, d\geq2$, $G_i(u_i ; t_w, \ldots, t_1)$ is

\begin{equation} \label{Gi}
G_i(u_i ; t_w, \ldots, t_1)=H_{iw}^{t_w} \circ \dots \circ H_{i1}^{t_1}(u_i),
\end{equation} 
and $H_{ij}^{t_j}(\bullet)\equiv H_{ij}(\bullet, t_j)$, with
\begin{equation} \label{Hijs}
\begin{split}
H_{i1}(\bullet, t_1)=&\frac{\partial C_{i1}(\bullet , t_1)}{\partial t_1}, \\
H_{i2}(\bullet, t_2)=&\frac{\partial C_{i2}(\bullet , t_2)}{\partial t_2}, \\
\vdots& \\
H_{iw}(\bullet, t_w)=&\frac{\partial C_{iw}(\bullet , t_w)}{\partial t_w}.
\end{split}
\end{equation}

As an example, if $w=3$, we have
\begin{equation} \nonumber
\begin{split}
G_i(u_i ; t_3, t_2, t_1)=&H_{i3}^{t_3} \circ H_{i2}^{t_2} \circ H_{i1}^{t_1}(u_i) \\
=&\frac{\partial C_{i3}(H_{i2}^{t_2} \circ H_{i1}^{t_1}(u_i) , t_3)}{\partial t_3} \\
=&\frac{\partial C_{i3}(\frac{\partial C_{i2}(H_{i1}^{t_1}(u_i) , t_2)}{\partial t_2} , t_3)}{\partial t_3} \\
=&\frac{\partial C_{i3}(\frac{\partial C_{i2}(\frac{\partial C_{i1}(u_i , t_1)}{\partial t_1} , t_2)}{\partial t_2} , t_3)}{\partial t_3}.
\end{split}
\end{equation}

The block of equations (\ref{Hijs}) reveals that there is a set of $w$ bivariate copulas underlying $G_i(u_i ; t_w, \ldots, t_1)$. Since $i \in \{1, \ldots, d\}$, this means that there is always a set of $d\times w$ bivariate copulas underlying any NEOFC. The set $\{C_{i1}\}$, $i \in \{1, \ldots, d\}$, is the first layer of linking copulas, while the set $\{C_{iw}\}$ corresponds to the last layer of linking copulas.

To summarize, a one-factor copula (OFC) can be written as
\begin{equation} \label{OFC:finaleq}
C(u_1,\dots,u_d)=\int_{0}^1 \prod_{i=1}^d \bigg(\frac{\partial C_i(u_i, u_0)}{\partial u_0}\bigg) \,du_0,
\end{equation}
an extended one-factor copula (EOFC) as
\begin{equation} \label{EOFC:finaleq}
C(u_1,\dots,u_d)=\int_{0}^1 C_{u_0}^{\diamond}\bigg(\frac{\partial C_1(u_1, u_0)}{\partial u_0}, \ldots, \frac{\partial C_d(u_d, u_0)}{\partial u_0}\bigg) \,du_0,
\end{equation}
and a nested extended one-factor copual (NEOFC) as
\begin{equation} \label{NEOFC:finaleq}
\begin{split}
&C_w^\odot(u_1,\dots,u_d) \\
&=\int_{[0,1]^w} C_{t_w}^{\diamond_w}\big(G_{1}(u_1 ; t_w, \ldots, t_1), \ldots, G_{d}(u_d ; t_w, \ldots, t_1)\big) \,dt_w\ldots dt_1.
\end{split}
\end{equation}

Note that EOFCs are a special case of NEOFCs: setting $w=1$ in the last expression above means falling back to extended one-factor copulas. Also note that, while the inner copula of a NEOFC is assumed to depend on only one factor, $T_w$, a nested extended one-factor copula actually encompasses more than one factor: from $T_1$ to $T_w$.

\section{Properties of EOFCs and NEOFCs} \label{propertiesofEOFCs}
In this section, various properties of EOFCs and NEOFCs are given. Identifiability issues are also addressed.

\subsection{On the inner copula} \label{ontheinnercopula}

In order to generate new parametric families of one-factor copulas using \eqref{eq:representation 1} or \eqref{eq:representation 2}, one can act on 3 components: the bivariate or linking copulas $C_{i}$, $i \in \{1,\dots,d\}$, the factor's distribution, represented either by its density $f_0$ or by its quantile function $Q_0$, and the set of multivariate copulas $\{C_{x_0}\}=\{C^\diamond_{Q_0(u_0)}\}$. Depending on the choice for the inner copula $C_{x_0}$, three different forms of factor copulas can be made. This is illustrated by the following example.

\begin{example}\label{ex:complete example}
    Let $X_0$ follow an exponential distribution with parameter $\lambda>0$,
  \begin{align*}
    f_{0}(x_0)=\lambda e^{-\lambda x_0},\qquad x_0>0.
  \end{align*}
  
  For $i \in \{1,\dots,d\}$, let $C_{i}$ be a Clayton copula so that
  \begin{align}\label{eq:formula for Clayton bivariate copula}
    C_{i}(u_i,u_0)=[(u_i^{-\alpha_i}+u_0^{-\alpha_i}-1)]^{-1/\alpha_i}
    \qquad \alpha_i\ge 0.
  \end{align}
 
 Finally, let $c_{x_0}$, the density of $C_{x_0}$, be as in \eqref{eq:Gaussian copula} where $R(x_0)$ is as in \eqref{eq:exchangeable correlation matrix} and where
\begin{align}\label{eq:correlation in complete example}
 \beta(x_0)=e^{-\beta_0 -\beta_1 x_0},\qquad \beta_0,\beta_1\ge 0.
\end{align}
\end{example}

In Example \ref{ex:complete example}, we have built a parametric model for one-factor copulas which allow for different features. First, the number of parameters, $d+3$, is linear in $d$, the dimension. Second, as will be seen in Section \ref{sec:methods}, assuming a parametric form for the factor's distribution, one can estimate the related parameter $\lambda$ by maximum pseudo-likelihood, thus learning valuable information about a variable which, by definition, is never directly observed. Finally, one can control the growth rate of the dependence structure, relative to the change of the factor's value. For instance, in \eqref{eq:correlation in complete example}, a decrease in $\beta_1$ leads to an increase in $\beta(x_0), \forall x_0$, $\beta(x_0)$ being the correlation parameter. If one set $\beta_1=0$, then $\beta(x_0)=\exp(-\beta_0)$, and thus the correlation parameter, hence the conditional copula $C_{x_0}$, does not depend on $x_0$ anymore: we call this \emph{conditional invariance}, not to be mistaken with conditional independence. This last feature happens when $\beta_0=\infty$, implying a correlation parameter $\beta(x_0)=0$.

The three different forms of factor copulas are now formalized.

\begin{description}
\item[Conditional independence.] Conditional independent one-factor copulas are those such that the inner copula is the independence copula. They correspond exactly to copulas of the form \eqref{eq:current one-factor copulas}, described in \cite{krupskii2013factor}, and their interpretation is such that, given the factor's value $X_0=x_0$, the variables $U_1,\dots,U_d$ are independent. Let us note that, even in this simple case, the obtained models are quite  reasonable and useful, as was demonstrated not only in \cite{krupskii2013factor} or \cite{oh2015modelling}, but also in view of the vast literature about conditional independent models, see \cite{skrondal2007latent}. In Section \ref{test}, we provide a novel procedure in order to test the assumption of conditional independence. 
\item[Conditional invariance.] Conditional invariant one-factor copulas are those such that, in \eqref{eq:representation 1} for instance, $C_{x_0}=C_{x_0'}$ for any $x_0$ and $x_0'$. That is, there is conditional dependence, but this conditional dependence remains the same regardless of the factor's value.
\item[Conditional noninvariance.] Conditional noninvariant one-factor copulas are those which are not conditionally invariant. Note that, \emph{a fortiori}, they are not conditionally independent either. Here, the conditional dependence structure is allowed to change with the factor's value. In Example \ref{ex:general Gaussian factor copula}, $\beta(x_0)\to 0$ as $x_0\to \infty$ and therefore $C_{x_0}\to \Pi$, the independence copula. On the opposite, $\beta(x_0)\to 1$ as $x_0\to 0$ and thus $C_{x_0}\to M$, the Fr\'echet-Hoeffding upper bound, characterizing complete positive dependence.
\end{description}

Natural parametric one-factor copulas can be built with the help of Kendall's tau and Spearman's rho. Recall that, given a bivariate copula $C$, Kendall's tau is a dependence coefficient in $[-1,1]$ defined by
\begin{align}\label{eq:Kendall's tau formula}
  \tau=4\int_{[0,1]^2}C(u,v)\,dC(u,v)-1.
\end{align}
A value of $\tau\approx 0$ hints at independence, and $\tau\approx -1$ (respectively $\tau\approx +1$) indicates negative (respectively positive) dependence. Example \ref{ex:Kendall's tau} illustrates the procedure.

\begin{example}\label{ex:Kendall's tau}
  Let $X_0$ follow a standard uniform distribution and let $C_{x_0}$ be as
  \begin{align*}   
    C_{x_0}(u_1,\dots,u_d)=
    (u_1^{-\tau^{-1}(x_0)}+\dots+u_d^{-\tau^{-1}(x_0)}-d+1)^{-1/\tau^{-1}(x_0)},
  \end{align*}
  where $\tau^{-1}$ is the inverse map of
  \begin{align}\label{eq:Kendall's tau of Clayton}
    \tau(\beta)=\frac{\beta}{\beta+2},\qquad \beta > 0.
  \end{align}
\end{example}

In Example \ref{ex:Kendall's tau}, for a fixed $x_0$, $C_{x_0}$ is recognized to be a Clayton copula with parameter $\tau^{-1}(x_0)=2x_0/(1-x_0)$ for $x_0\in(0,1)$. 
In general, the procedure works as follows. First, choose a parametric family of copulas, here the family of Clayton copulas
\begin{align}\label{eq:family of Clayton copulas}
  C_{\beta}(u_1,\dots,u_d)=
  (u_1^{-\beta}+\dots+u_d^{-\beta}-d+1)^{-1/\beta},\qquad \beta > 0.
\end{align}
Second, compute Kendall's tau (there is only one, since all pairs have the same distribution), which in this case is given by \eqref{eq:Kendall's tau of Clayton}. Third, choose the distribution of $X_0$ so that its support corresponds to the range of the map induced by \eqref{eq:Kendall's tau of Clayton}, here $(0,1)$. Fourth and last, replace $\beta$ by $\tau^{-1}(x_0)$ in \eqref{eq:family of Clayton copulas}.

The conditional dependence structure in Example \ref{ex:Kendall's tau} goes from conditional independence to conditional complete dependence. Indeed, when $x_0 \rightarrow 0$, $\beta(x_0) \rightarrow 0$ and $C_{\beta(x_0)} \rightarrow \Pi$. If $x_0\to 1$ instead, $\beta(x_0) \rightarrow \infty$ and $C_{\beta(x_0)} \rightarrow M$, the Fr\'echet-Hoeffding upper bound for copulas.
If one rather defines $\beta(x_0) = - \log(x_0)$, then $\beta(x_0) \rightarrow \infty$ when $x_0 \rightarrow 0$ and $C_{\beta(x_0)} \rightarrow M$. Hence, in one case the dependence increases with respect to the factor, while in the other case it decreases.

\subsection{Densities of EOFCs and NEOFCs}
The density of an EOFC, as defined by \eqref{preNEOFC} is straighforward to get:

\begin{equation} \label{densityEOFC}
\begin{split}
&c(u_1, \ldots, u_d)=\frac{\partial^d C(u_1,\dots,u_d)}{\partial u_1 \ldots \partial u_d} \\
=& \int_{0}^1 c_{t_1}^{\diamond_1}(H_{11}(u_1, t_1), \ldots, H_{d1}(u_d, t_1))\prod_{i=1}^d c_{i}(u_i , t_1) dt_1,
\end{split}
\end{equation}
which is the integral of the product of one $d$-variate density and $d$ bivariate densities, and where $c_{t_1}^{\diamond_1}$ is the density of $C_{t_1}^{\diamond_1}$ while $c_i$ is the density of $C_i$.

The density of a NEOFC, as defined by \eqref{NEOFC}, is however less straighforward to obtain and requires first to figure out the result of $\frac{\partial G_{i}}{\partial u_i}$, also refer to \eqref{Gi} and \eqref{Hijs}. Observe that (chain rule):
\begin{equation} \nonumber
\frac{\partial H_{ij}^{t_j} \circ H_{il}^{t_l}(u_i)}{\partial u_i}=\frac{\partial H_{ij}(H_{il}^{t_l}(u_i), t_j) }{\partial H_{il}^{t_l}(u_i)} \frac{\partial H_{il}^{t_l}(u_i)}{\partial u_i}.
\end{equation}

Using this last result, $\frac{\partial G_{i}}{\partial u_i}$ can now be written as

\begin{equation} \label{dGi}
\begin{split}
\frac{\partial G_{i}}{\partial u_i}=& \frac{\partial H_{iw}^{t_w} \circ \dots \circ H_{i1}^{t_1}(u_i)}{\partial u_i} \\
=& \frac{\partial H_{iw}(H_{i,w-1}^{t_{w-1}} \circ \dots \circ H_{i1}^{t_1}(u_i), t_w)}{\partial H_{i,w-1}^{t_{w-1}} \circ \dots \circ H_{i1}^{t_1}(u_i)} \frac{\partial H_{i,w-1}^{t_{w-1}} \circ \dots \circ H_{i1}^{t_1}(u_i)}{\partial u_i} \\
=& \prod_{k=w}^{3}\Bigg(\frac{\partial H_{ik}(H_{i,k-1}^{t_{k-1}} \circ \dots \circ H_{i1}^{t_1}(u_i), t_k)}{\partial H_{i,k-1}^{t_{k-1}} \circ \dots \circ H_{i1}^{t_1}(u_i)}\Bigg)\frac{\partial H_{i2}(H_{i1}^{t_1}(u_i), t_2)}{\partial H_{i1}^{t_1}(u_i)}\frac{\partial H_{i1}^{t_1}(u_i)}{\partial u_i}.
\end{split}
\end{equation}

The last expression in (\ref{dGi}) is the product of $(w-2)+1+1=w$ fractions. The last one is a bivariate copula density and the others each involve a bivariate density. Starting with the fraction on the right of the last expression in (\ref{dGi}), we indeed have that
\begin{equation} \nonumber
\frac{\partial H_{i1}^{t_1}(u_i)}{\partial u_i}= \frac{\partial H_{i1}(u_i, t_1)}{\partial u_i}=\frac{\partial^2 C_{i1}(u_i, t_1)}{\partial u_i \partial t_1}=c_{i1}(u_i, t_1).
\end{equation}

If we replace $H_{i,k-1}^{t_{k-1}} \circ \dots \circ H_{i1}^{t_1}(u_i)$ by $\bigcirc$ so that 
\begin{equation} \nonumber
\frac{\partial H_{ik}(H_{i,k-1}^{t_{k-1}} \circ \dots \circ H_{i1}^{t_1}(u_i), t_k)}{\partial H_{i,k-1}^{t_{k-1}} \circ \dots \circ H_{i1}^{t_1}(u_i)}=\frac{\partial H_{ik}(\bigcirc, t_k)}{\partial \bigcirc},
\end{equation}
we see that
\begin{equation}
\frac{\partial H_{ik}(\bigcirc, t_k)}{\partial \bigcirc}=\frac{\partial^2 C_{ik}(\bigcirc, t_k)}{\partial \bigcirc \partial t_k}=c_{ik}(\bigcirc, t_k).
\end{equation}

Therefore, we can write that
\begin{equation} \label{final_dGi}
\frac{\partial G_{i}}{\partial u_i}=\prod_{k=w}^{3}\Bigg(c_{ik}\bigg(H_{i,k-1}^{t_{k-1}} \circ \dots \circ H_{i1}^{t_1}(u_i), t_k\bigg)\Bigg)\times c_{i2}(H_{i1}^{t_1}(u_i), t_2) \times c_{i1}(u_i, t_1).
\end{equation}
  
As an example, we can expand (\ref{final_dGi}) for $w=3$:
\begin{equation}
\begin{split}
\frac{\partial G_{i}}{\partial u_i}=& c_{i3}(H_{i2}^{t_2}(H_{i1}^{t_1}(u_i)), t_3) c_{i2}(H_{i1}^{t_1}(u_i), t_2) c_{i1}(u_i, t_1) \\
=& c_{i3}(\frac{\partial C_{i2}(\frac{\partial C_{i1}(u_i, t_1)}{\partial t_1}  , t_2)}{\partial t_2}, t_3) c_{i2}(\frac{\partial C_{i1}(u_i, t_1)}{\partial t_1}, t_2) c_{i1}(u_i, t_1).
\end{split}
\end{equation}

The density of a NEOFC can now be written:
\begin{equation} \label{eq:density_EOFC}
\begin{split}
&c_w^\odot(u_1, \ldots, u_d)=\frac{\partial^d C_w^\odot(u_1,\dots,u_d)}{\partial u_1 \ldots \partial u_d} \\
=& \int_{[0,1]^w} c_{t_w}^{\diamond_w}\big(G_{1}(u_1 ; t_w, \ldots, t_1), \ldots, G_{d}(u_d ; t_w, \ldots, t_1)\big) \prod_{i=1}^d \frac{\partial G_{i}}{\partial u_i} dt_w\ldots dt_1 \\
=& \int_{[0,1]^w} c_{t_w}^{\diamond_w}\big(\{G_i\}\big) \\
\times& \prod_{i=1}^d \Bigg[\prod_{k=w}^{3}\Bigg(c_{ik}\bigg(H_{i,k-1}^{t_{k-1}} \circ \dots \circ H_{i1}^{t_1}(u_i), t_k\bigg)\Bigg)\times c_{i2}(H_{i1}^{t_1}(u_i), t_2) \times c_{i1}(u_i, t_1) \Bigg] dt_w\ldots dt_1,
\end{split}
\end{equation}
where $c_{t_w}^{\diamond_w}$ is the density of $C_{t_w}^{\diamond_w}$ and $\{G_i\}=\{G_{1}(u_1 ; t_w, \ldots, t_1),\ldots, G_{d}(u_d ; t_w, \ldots, t_1)\}$.

The density of a NEOFC, as displayed in the last expression of (\ref{eq:density_EOFC}), is thus made of two parts: a $d$-variate copula density on the left, $c_{t_w}^{\diamond_w}\big(\{G_i\}\big)$, and a product that involves $d\times w$ bivariate copula densities on the right of $c_{t_w}^{\diamond_w}\big(\{G_i\}\big)$.

\subsection{On the margins of EOFCs and NEOFCs}
The multivariate margins of an EOFC are still EOFCs. 

\begin{proof}
This can be easily seen by integrating $u_1$ out of Equation (\ref{densityEOFC}):
\begin{equation} \nonumber
\begin{split}
&\int_0^1 c(u_1, \ldots, u_d)\, du_1 \\
=& \int_{0}^1 \int_{0}^1 c_{t_1}^{\diamond_1}(H_{11}(u_1, t_1), \ldots, H_{d1}(u_d, t_1))\prod_{i=1}^d c_{i}(u_i , t_1) dt_1\,du_1 \\
=& \int_{0}^1 \int_{0}^1 c_{t_1}^{\diamond_1}(H_{11}(u_1, t_1), \ldots, H_{d1}(u_d, t_1))\prod_{i=1}^d \frac{\partial H_{i}(u_i , t_1)}{\partial u_i} dt_1\,du_1.
\end{split}
\end{equation}
The integral requires the following change of variable to be solved:
\begin{equation} \nonumber
\begin{split}
u =& H_{11}(u_1, t_1),\\
du =& \frac{\partial H_{11}(u_1, t_1)}{\partial u_1} du_1,
\end{split}
\end{equation}
leading to

\begin{equation} \nonumber
\int_{0}^1 \int_{0}^1 c_{t_1}^{\diamond_1}(u, H_{21}(u_2, t_1), \ldots, H_{d1}(u_d, t_1))\prod_{i=2}^d \frac{\partial H_{i1}(u_i , t_1)}{\partial u_i} dt_1\,du.
\end{equation}

After reorganizing this last expression, one gets:

\begin{equation} \nonumber
\begin{split}
&\int_0^1 c(u_1, \ldots, u_d)\, du_1 \\
=& \int_{0}^1 \prod_{i=2}^d \frac{\partial H_{i1}(u_i , t_1)}{\partial u_i} \int_{0}^1 c_{t_1}^{\diamond_1}(u, H_{21}(u_2, t_1), \ldots, H_{d1}(u_d, t_1)) \,du\,dt_1.
\end{split}
\end{equation}

The integral related to $du$ is the margin of $c_{t_1}^{\diamond_1}$ in $H_{21}, \ldots, H_{d1}$ when its first argument is integrated out. The final result is

\begin{equation} \nonumber
c^{[2:d]}(u_2, \ldots, u_d) = \int_{0}^1 c^{[2:d], {\diamond_1}}_{t_1}(H_{21}(u_2, t_1), \ldots, H_{d1}(u_d, t_1)) \prod_{i=2}^d \frac{\partial H_{i1}(u_i , t_1)}{\partial u_i} dt_1,
\end{equation}
which is the density of an EOFC. 
\end{proof}

In general, in order to get a given margin from an EOFC, one just needs to drop the variables that are no longer needed in (\ref{densityEOFC}).

Similarly to this last result, the multivariate margins of a NEOFC are also NEOFCs. 

\begin{proof}
The proof is similar to what was done in the case of EOFCs. Start from (\ref{eq:density_EOFC}) and integrate $u_1$ out:

\begin{equation}
\begin{split}
&\int_{0}^1 c_w^\odot(u_1, \ldots, u_d) du_1 \\
=& \int_{0}^1 \int_{[0,1]^w} c_{t_w}^{\diamond_w}\big(G_{1}(u_1 ; t_w, \ldots, t_1), \ldots, G_{d}(u_d ; t_w, \ldots, t_1)\big) \prod_{i=1}^d \frac{\partial G_{i}}{\partial u_i} dt_w\ldots dt_1\,du_1.
\end{split}
\end{equation}

Perform a change of variable: 
\begin{equation} \nonumber
\begin{split}
u&=G_{1}(u_1 ; t_w, \ldots, t_1), \\
du&=\frac{\partial G_{1}(u_1;t_w, \ldots, t_1)}{\partial u_1} du_1, \\
\end{split}
\end{equation}
leading to

\begin{equation} \nonumber
\int_{0}^1 \int_{[0,1]^w} c_{t_w}^{\diamond_w}\big(u, G_2, \ldots, G_{d}\big) \prod_{i=2}^d \frac{\partial G_{i}}{\partial u_i} dt_w\ldots dt_1\,du.
\end{equation}

Reorganize this last expression:

\begin{equation} \nonumber
\int_{[0,1]^w} \prod_{i=2}^d \frac{\partial G_{i}}{\partial u_i} \int_{0}^1 c_{t_w}^{\diamond_w}\big(u, G_2, \ldots, G_{d}\big)  du\,dt_w\ldots dt_1.
\end{equation}

The integral related to $du$ is just the $(d-1)$ margin of $c_{t_w}^{\diamond_w}$ in $G_2, \ldots, G_d$. One can now write:

\begin{equation}
\begin{split}
&\int_{0}^1 c_w^{\odot}(u_1, \ldots, u_d)\,du_1 = c_{w}^{[2:d], \odot}(u_2, \ldots, u_d) \\
=& \int_{[0,1]^w} c_{t_w}^{[2:d], \diamond_w}\big(G_2(u_2;t_w, \ldots, t_1), \ldots, G_{d}(u_d;t_w,\ldots, t_1)\big) \prod_{i=2}^d \frac{\partial G_{i}}{\partial u_i} dt_w\ldots dt_1,
\end{split}
\end{equation}
which is the expression of a NEOFC density.
\end{proof}

In general, the margins of a NEOFC are easy to get: just drop in expression (\ref{eq:density_EOFC}) the elements corresponding to the variables that are no longer of interest.

\subsection{The upper and lower Fréchet-Hoeffding bounds}

The upper Fréchet–Hoeffding bound is defined as $M(u_1, \ldots, u_d)=\min(u_1,\allowbreak \ldots, u_d)$. It is always a copula, and corresponds to comonotone random variables.

\begin{proposition} \label{upperF}
In \eqref{eq:representation 2}, let $C^\diamond_{Q_0(u_0)}=\Pi$ and let $C_i(u_i, u_0)=\min(u_i, u_0),\allowbreak \forall i \in \{1, \ldots, d\}$, where $\min(u_i, u_0)=M(u_i, u_0)$ is the upper Fréchet-Hoeffding bound on $(U_i, U_0)$. Then, $C(u_1, \ldots, u_d)=\min(u_1, \ldots, u_d)$.
\end{proposition}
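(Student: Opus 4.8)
The plan is to substitute the hypothesized choices directly into representation \eqref{eq:representation 2} and evaluate the resulting one-dimensional integral. Since $\Pi(v_1,\dots,v_d)=v_1\cdots v_d$, taking $C^\diamond_{Q_0(u_0)}=\Pi$ makes the inner copula collapse to a product, so \eqref{eq:representation 2} reduces to the conditional-independence form \eqref{OFC:finaleq}:
\begin{equation*}
C(u_1,\dots,u_d)=\int_0^1 \prod_{i=1}^d \frac{\partial C_i(u_i,u_0)}{\partial u_0}\,du_0 .
\end{equation*}
Note that the quantile function $Q_0$ never enters here, since $C^\diamond_{Q_0(u_0)}$ is set to $\Pi$ irrespective of $u_0$.

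Next I would identify $\frac{\partial}{\partial u_0}C_i(u_i,u_0)=\frac{\partial}{\partial u_0}\min(u_i,u_0)$. For fixed $u_i\in(0,1)$, the function $u_0\mapsto\min(u_i,u_0)$ equals $u_0$ for $u_0<u_i$ and equals $u_i$ for $u_0>u_i$, so its derivative is $\ind\{u_0<u_i\}$ for every $u_0\neq u_i$; equivalently, this is the conditional distribution function $C_{i\mid 0}(u_i\mid u_0)=P(U_i\le u_i\mid U_0=u_0)$ of the comonotone pair $(U_i,U_0)$, which exists for almost every $u_0$ and equals $\ind\{u_i\ge u_0\}$. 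The single value $u_0=u_i$ at which the derivative fails to exist is Lebesgue-null, hence immaterial to the integral.

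Substituting this into the display above and using the elementary identity $\prod_{i=1}^d \ind\{u_0\le u_i\}=\ind\{u_0\le \min(u_1,\dots,u_d)\}$, the integral becomes
\begin{equation*}
\int_0^1 \ind\{u_0\le \min(u_1,\dots,u_d)\}\,du_0=\min(u_1,\dots,u_d),
\end{equation*}
because $\min(u_1,\dots,u_d)\in[0,1]$. This is exactly the asserted identity $C(u_1,\dots,u_d)=\min(u_1,\dots,u_d)$.

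There is essentially no deep obstacle; the only point requiring a word of care is the non-differentiability of $u_0\mapsto\min(u_i,u_0)$ on the diagonal $u_0=u_i$, which is handled either by reading $\frac{\partial C_i}{\partial u_0}$ as the almost-everywhere-defined conditional distribution function associated with $M$, or simply by noting that altering the integrand on a null set does not change the integral. If one additionally wishes to record that this instance of \eqref{eq:representation 2} yields a genuine copula, it suffices to recall that $M(u_1,\dots,u_d)=\min(u_1,\dots,u_d)$ is a copula in every dimension.
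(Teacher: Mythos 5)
Your proposal is correct and follows essentially the same route as the paper: with $C^\diamond_{Q_0(u_0)}=\Pi$ the representation reduces to the product form, and $\frac{\partial}{\partial u_0}\min(u_i,u_0)$ equals $1$ for $u_0<u_i$ and $0$ for $u_0>u_i$, so the integrand is the indicator of $\{u_0\le\min(u_1,\dots,u_d)\}$ and the integral evaluates to $\min(u_1,\dots,u_d)$. The only (harmless) difference is that you phrase the computation via indicator functions and explicitly note the Lebesgue-null non-differentiability point, whereas the paper simply truncates the integration range to $[0,\min(u_1,\dots,u_d)]$.
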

\begin{proof} 
The outer copula is 
\begin{equation} \nonumber
C(u_1, \ldots, u_d)=\int_0^1 \frac{\partial \min(u_1, u_0)}{\partial u_0} \ldots \frac{\partial \min(u_d, u_0)}{\partial u_0} du_0,
\end{equation}
and since $\frac{\partial \min(u_i, u_0)}{\partial u_0}$ is 0 as soon as $u_0>u_i$, one can write:

\begin{equation} \nonumber
C(u_1, \ldots, u_d)=\int_0^{\min(u_1, \ldots, u_d)} \frac{\partial \min(u_1, u_0)}{\partial u_0} \ldots \frac{\partial \min(u_d, u_0)}{\partial u_0} du_0.
\end{equation}

Moreover, as long as $u_0<\min(u_1, \ldots, u_d)$, we have that $\frac{\partial \min(u_i, u_0)}{\partial u_0}=\frac{\partial u_0}{\partial u_0}=1, \forall i \in \{1, \ldots d\}$. Therefore, we have

\begin{equation} \nonumber
C(u_1, \ldots, u_d)=\int_0^{\min(u_1, \ldots, u_d)} 1 \ldots 1 \,du_0=\min(u_1, \ldots, u_d).
\end{equation}

\end{proof}
This result will turn out very useful to show that OFCs are not suited to model hierarchical dependence in Section \ref{modelsexploration}.

The 2-dimensional lower Fréchet-Hoeffding bound is usually written as $W(u_1,\allowbreak u_2)=\max(u_1+u_2-1, 0)$. Note that, in contrast to the upper Fréchet-Hoeffding bound, its generalization, $W(u_1, \ldots, u_d)=\max(d-1+\sum_{i=1}^d u_i, 0)$, is not a copula anymore.

\begin{proposition} \label{lowerF}
In \eqref{eq:representation 2}, let $d=2$, $C^\diamond_{Q_0(u_0)}=\Pi$, $C_1(u_1, u_0)=\max(u_1+u_0-1, 0)$ and $C_2(u_2, u_0)=\min(u_2, u_0)$. Then, $C(u_1, u_2)=\max(u_1+u_2-1, 0).$
\end{proposition}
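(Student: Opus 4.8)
The plan is to mirror the computation used for Proposition~\ref{upperF}: since the inner copula is again $\Pi$, representation~\eqref{eq:representation 2} collapses to
\[
C(u_1,u_2)=\int_0^1 \frac{\partial C_1(u_1,u_0)}{\partial u_0}\,\frac{\partial C_2(u_2,u_0)}{\partial u_0}\,du_0,
\]
so the whole task reduces to identifying the two conditional distribution functions $\partial C_i(u_i,u_0)/\partial u_0$ and integrating their product. The only new ingredient compared to Proposition~\ref{upperF} is that one linking copula is the lower Fréchet–Hoeffding bound $W$ instead of $M$, so I first need its partial derivative in $u_0$.

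First I would record the two derivatives. For $C_2(u_2,u_0)=\min(u_2,u_0)$, exactly as in Proposition~\ref{upperF}, one has $\partial C_2(u_2,u_0)/\partial u_0=\ind(u_0<u_2)$ (value $1$ while $u_0<u_2$, value $0$ once $u_0>u_2$). For $C_1(u_1,u_0)=\max(u_1+u_0-1,0)$, the function of $u_0$ is $0$ while $u_0<1-u_1$ and equals $u_1+u_0-1$ thereafter, hence $\partial C_1(u_1,u_0)/\partial u_0=\ind(u_0>1-u_1)$. The kinks at the single points $u_0=u_2$ and $u_0=1-u_1$ are irrelevant inside the integral, just as in Proposition~\ref{upperF}.

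Then I would multiply and integrate: the integrand is $\ind(u_0>1-u_1)\,\ind(u_0<u_2)=\ind(1-u_1<u_0<u_2)$, and since $0\le 1-u_1$ and $u_2\le 1$ the interval $(1-u_1,u_2)$ already lies inside $(0,1)$, so
\[
C(u_1,u_2)=\int_0^1 \ind\big(1-u_1<u_0<u_2\big)\,du_0=\max\big(u_2-(1-u_1),\,0\big)=\max(u_1+u_2-1,0),
\]
which is the assertion. I do not expect a real obstacle; the only delicate point is the a.e./one-sided nature of $\partial C_i/\partial u_0$ at the kinks of $W$ and $M$, which is harmless under the integral. If one wishes to avoid differentiating $W$ and $M$ altogether, one can instead observe that, given $U_0=u_0$, the laws of $U_1$ and $U_2$ are degenerate at $1-u_0$ and $u_0$ respectively, so conditionally $U_1=1-U_0$ and $U_2=U_0$, and with $U_0$ uniform this gives $C(u_1,u_2)=\Pr(1-U_0\le u_1,\ U_0\le u_2)=\Pr(1-u_1\le U_0\le u_2)=\max(u_1+u_2-1,0)$ directly.
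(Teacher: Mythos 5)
Your proposal is correct and follows essentially the same route as the paper: with the inner copula equal to $\Pi$, both reduce the claim to $\int_0^1 \frac{\partial \max(u_1+u_0-1,0)}{\partial u_0}\,\frac{\partial \min(u_2,u_0)}{\partial u_0}\,du_0$ and evaluate it by identifying where the partial derivatives of $W$ and $M$ in $u_0$ are $0$ or $1$ (the paper restricts the integral to $[0,u_2]$ and applies the fundamental theorem of calculus, you multiply the two indicators directly — the same computation). Your closing probabilistic remark, that conditionally $U_1=1-U_0$ and $U_2=U_0$, is a nice extra but not part of the paper's argument.
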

\begin{proof} 
The outer copula is
\begin{equation} \nonumber
C(u_1, u_2)=\int_0^1 \frac{\partial \max(u_1+u_0-1, 0)}{\partial u_0}\frac{\partial \min(u_2, u_0)}{\partial u_0} du_0,
\end{equation}
and since the integrand is 0 as long as $u_2<u_0$, one can write that

\begin{equation} \nonumber
C(u_1, u_2)=\int_{0}^{u_2} \frac{\partial \max(u_1+u_0-1, 0)}{\partial u_0} \frac{\partial \min(u_2, u_0)}{\partial u_0} du_0.
\end{equation}

Since, $\forall u_0 \in [0, u_2]$, $\frac{\partial \min(u_2, u_0)}{\partial u_0}=1$, one can write

\begin{align} \nonumber
C(u_1, u_2)&=\int_{0}^{u_2} \frac{\partial \max(u_1+u_0-1, 0)}{\partial u_0} du_0 \\ \notag
&=\max(u_1+u_0-1, 0)\vert_{u_0=u_2}-\max(u_1+u_0-1, 0)\vert_{u_0=0} \\ \notag
&=\max(u_1+u_2-1, 0), \notag
\end{align}
and the proof is complete.

\end{proof}

\subsection{Two degenerate cases} \label{extractionsub}

From Equation \eqref{EOFC:finaleq} or Equation \eqref{NEOFC:finaleq}, one can end up with 
\begin{equation} \nonumber
C(u_1, \ldots, u_d)=E_{U_0}\big[ C_{U_0}^\diamond(u_1, \ldots, u_d)\big]
\end{equation}
or with 
\begin{equation} \nonumber
C_w^\odot(u_1, \ldots, u_d)=E_{T_w} \big[ C^{\diamond_w}_{T_w}(u_1, \ldots, u_d)\big].
\end{equation}

\begin{proposition}
In Equation \eqref{EOFC:finaleq}, let $C_{i}$, $\forall i \in \{1, \ldots, d\}$, be such that $C_{i}(u_i, u_0)=u_i u_0$. Then, $C(u_1, \ldots, u_d)=E_{U_0}\big[ C_{U_0}^\diamond(u_1, \ldots, u_d)\big]$.
\end{proposition}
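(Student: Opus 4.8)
The plan is to plug the stated linking copulas directly into the EOFC formula \eqref{EOFC:finaleq} and show that the integrand collapses to $C_{u_0}^\diamond(u_1,\dots,u_d)$, so that the integral becomes a plain expectation over $U_0$. First I would record that $C_i(u_i,u_0)=u_iu_0=\Pi(u_i,u_0)$ is the independence copula on $(U_i,U_0)$, so its partial derivative with respect to $u_0$ is simply
\begin{equation} \nonumber
\frac{\partial C_i(u_i,u_0)}{\partial u_0}=\frac{\partial (u_iu_0)}{\partial u_0}=u_i,
\end{equation}
for every $i\in\{1,\dots,d\}$. In other words, the linking copulas being independence copulas means that, conditionally on $U_0=u_0$, the variable $U_i$ is still uniformly distributed, i.e. $C_{i\vert 0}(u_i\vert u_0)=u_i$.

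Next I would substitute these derivatives into \eqref{EOFC:finaleq}, giving
\begin{equation} \nonumber
C(u_1,\dots,u_d)=\int_0^1 C_{u_0}^\diamond(u_1,\dots,u_d)\,du_0.
\end{equation}
Finally I would observe that, since $U_0$ is standard uniform on $[0,1]$, this integral is exactly $E_{U_0}\big[C_{U_0}^\diamond(u_1,\dots,u_d)\big]$, which is the claimed identity. One may also remark that the same conclusion follows from the NEOFC expression \eqref{NEOFC:finaleq} with $w=1$, consistent with the way the proposition is stated as an instance of the "degenerate" case announced just before it.

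There is essentially no obstacle here: the only thing to be a little careful about is that the partial derivative $\partial C_i/\partial u_0$ is taken in the second argument (the factor), not the first, so that it returns $u_i$ rather than $u_0$; getting that slot right is all that the proof requires. No measurability or integrability subtleties arise because $C_{u_0}^\diamond$ is, by construction, a collection of copulas indexed by $u_0$ and hence bounded and (assumed) measurable in $u_0$.
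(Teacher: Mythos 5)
Your proposal is correct and follows exactly the paper's own argument: differentiate $C_i(u_i,u_0)=u_iu_0$ with respect to $u_0$ to get $u_i$, substitute into \eqref{EOFC:finaleq}, and recognize the resulting integral $\int_0^1 C_{u_0}^\diamond(u_1,\dots,u_d)\,du_0$ as $E_{U_0}\big[C_{U_0}^\diamond(u_1,\dots,u_d)\big]$ since $U_0$ is standard uniform. No gap; your additional remarks on the derivative slot and on the NEOFC case are consistent with the paper's surrounding discussion.
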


\begin{proof}
If $C_{i}$, $\forall i \in \{1, \ldots, d\}$, is such that $C_{i}(u_i, u_0)=u_i u_0$, and since we have that $\frac{\partial (u_i u_0)}{\partial u_0}=u_i$, Equation \eqref{EOFC:finaleq} becomes
\begin{equation} \nonumber
C(u_1,\dots,u_d)=\int_{0}^1 C_{u_0}^{\diamond}(u_1, \ldots, u_d) \,du_0=E_{U_0}\big[C_{U_0}^\diamond(u_1, \ldots, u_d)\big].
\end{equation}
\end{proof}

The reasoning is the same for a NEOFC: one just needs to let all linking copulas, for all layers, be the independence copula. Note that in case of conditional invariance, we have that $C(u_1, \ldots, u_d)=C^\diamond(u_1, \ldots, u_d)$ and $C_w^\odot(u_1, \ldots, u_d)=C^{\diamond_w}(u_1, \ldots, u_d)$, that is, the outer copula is directly equal to the inner one. This result is important as it means that any $d$-dimensional copula in the literature is an EOFC where the linking copulas are the independence copula and the inner copula is equal to the $d$-dimensional copula of interest.

A second degenerate case, regarding the linking copulas, is now presented.

\begin{proposition} \label{extraction1}
Let the inner copula be the independence copula and set all linking copulas to the upper Fréchet-Hoeffding bound, except for one, $C_l$. In $C(u_1, \ldots, u_d)$, set now all arguments to 1, except for $u_l$ and $u_k$, where $k$ is an arbitrary element of $\{1, \ldots d\}\backslash \{l\}$. The resulting bivariate margin of $C$, $C^{[l,k]}(u_l, u_k)$, is then equal to $C_l(u_l, u_k)$.
\end{proposition}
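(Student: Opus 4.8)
The plan is to start from the EOFC representation \eqref{EOFC:finaleq} with the prescribed choices and simplify the integrand factor by factor. With the inner copula equal to $\Pi$, the integrand is the product $\prod_{i=1}^d \frac{\partial C_i(u_i,u_0)}{\partial u_0}$, exactly as in \eqref{OFC:finaleq}. For every index $i \neq l$ the linking copula is $C_i(u_i,u_0)=M(u_i,u_0)=\min(u_i,u_0)$, so by the computation already carried out in the proof of Proposition \ref{upperF}, $\frac{\partial \min(u_i,u_0)}{\partial u_0}$ equals $1$ when $u_0<u_i$ and $0$ when $u_0>u_i$. The remaining factor is $\frac{\partial C_l(u_l,u_0)}{\partial u_0}=C_{l\vert 0}(u_l\vert u_0)$.

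Next I would set all arguments to $1$ except $u_l$ and $u_k$. For $i\notin\{l,k\}$ we have $u_i=1$, so $\min(u_i,u_0)=u_0$ on all of $(0,1)$ and the corresponding factor is identically $1$; these terms disappear. The only surviving Fréchet-Hoeffding factor is the one for index $k$, namely $\frac{\partial \min(u_k,u_0)}{\partial u_0}$, which restricts the range of integration to $u_0\in(0,u_k)$ and equals $1$ there. Hence
\begin{equation} \nonumber
C^{[l,k]}(u_l,u_k)=\int_0^{u_k} C_{l\vert 0}(u_l\vert u_0)\,du_0 = \int_0^{u_k}\frac{\partial C_l(u_l,u_0)}{\partial u_0}\,du_0.
\end{equation}
The last integral telescopes by the fundamental theorem of calculus to $C_l(u_l,u_k)-C_l(u_l,0)=C_l(u_l,u_k)$, since $C_l(u_l,0)=0$ for any copula. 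This is exactly $C_l(u_l,u_k)$, as claimed.

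The only point needing a little care — and the one I would treat as the main (minor) obstacle — is the handling of the boundary/measure-zero sets where the partial derivatives of $M$ fail to exist, i.e.\ $u_0=u_i$: since these are finitely many points they do not affect the integral, and the piecewise description of $\frac{\partial M(u_i,u_0)}{\partial u_0}$ used above (and already used in Proposition \ref{upperF}) is valid Lebesgue-almost everywhere. One should also note implicitly that $C_l$, being a bivariate copula, is absolutely continuous in $u_0$ for fixed $u_l$ enough for the fundamental theorem of calculus step, which is the standard regularity assumed throughout for the linking copulas. Everything else is the routine bookkeeping of discarding the trivial factors.
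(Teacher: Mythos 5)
Your proposal is correct and follows essentially the same route as the paper's proof: reduce to the product form with inner copula $\Pi$, use the piecewise derivative of $\min(u_k,u_0)$ to restrict the integral to $(0,u_k)$, and conclude by the fundamental theorem of calculus with $C_l(u_l,0)=0$. Your explicit justification of why the factors with $u_i=1$ drop out, and the remark on the measure-zero nondifferentiability points, are minor additions the paper leaves implicit.
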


\begin{proof}
Let us write the expression of $C^{[l,k]}(u_l, u_k)$ when the inner copula is the independence one and $C_k$ is the upper Fréchet-Hoeffding bound:

\begin{equation} \nonumber
C^{[l,k]}(u_l, u_k)=\int_{0}^1 \frac{\partial C_l(u_l, u_0)}{\partial u_0} \frac{\partial \min(u_k, u_0)}{\partial u_0} du_0.
\end{equation}
The function $\min(u_k, u_0)$ will return $u_0$ as long as $u_0<u_k$. Otherwise, $u_k$ is returned. Therefore, one can write

\begin{align} \nonumber
C^{[l,k]}(u_l, u_k)&=\int_{0}^{u_k} \frac{\partial C_l(u_l, u_0)}{\partial u_0} \frac{\partial u_0}{\partial u_0} du_0 + \int_{u_k}^1 \frac{\partial C_l(u_l, u_0)}{\partial u_0} \frac{\partial u_k}{\partial u_0} du_0 \\ \notag
&=\int_{0}^{u_k} \frac{\partial C_l(u_l, u_0)}{\partial u_0} du_0 + 0 \\ \notag
&=C_l(u_l, u_k)-C_l(u_l, 0) \\ \notag
&=C_l(u_l, u_k) \\ \notag
\end{align}
\end{proof}

Proposition \ref{extraction1} can be generalized for the case of a NEOFC as follows: if $C_{lm}$ is the linking copula of interest, $l \in \{1, \ldots d\}$, $m \in \{1, \ldots w\}$, then the $[l,k]$-margin of $C_w^\odot(u_1, \ldots, u_d)$ is $C_{lm}(u_l, u_k)$ if $C_{t_w}^{\diamond_t}$ is the independence copula while the linking copulas related to $T_m$ are set to the upper Fréchet-Hoeffding bound, except for $C_{lm}$, and the remaining linking copulas are set to the independence copula. The proof that the result of this setup is $C^{[l,k], \odot}(u_l, u_k)=C_{lm}(u_l, u_k)$ is trivial in view of the proof of Proposition \ref{extraction1} and is therefore left to the reader.

\subsection{Dependence properties}
To start, let us write representation \eqref{eq:representation 2} when $d=2$:

\begin{equation}\label{EOFC2d}
C(u_1,u_2)=\int_{0}^{1}  C_{Q_0(u_0)}^\diamond \bigg[\frac{\partial C_{1}(u_1,u_0)}{\partial u_0}, \frac{\partial C_{2}(u_2,u_0)}{\partial u_0}\bigg]\, du_0.
\end{equation}

Let's also recall some dependence properties from \cite{nelsen_introduction_2006}.

A copula $C$ is positively quadrant dependent (PQD) if $C(\mathbf{u}) \geq \Pi(\mathbf{u})$ for any $\mathbf{u} \in [0,1]^d$, where $\Pi(\mathbf{u})$ is the independence copula. Similarly, a copula $C$ is negatively quadrant dependent (NQD) if $C(\mathbf{u}) \leq \Pi(\mathbf{u})$ for any $\mathbf{u} \in [0,1]^d$.

If $C_1$ is the copula of $(U_1, U_0)$, then $U_1$ is said to be stochastically decreasing in $U_0$ if $\frac{\partial C_1(u_1, u_0)}{\partial u_0}$ is an increasing function of $u_0$, $\frac{\partial^2 C_1(u_1, u_0)}{\partial u_0^2} \geq 0$. This can be written as $SD(U_1 \vert U_0)$ and it will often be said that $C_1$ is SD in $U_0$ or that $C_1(u_1, u_0)$ is SD in its second argument.

Similarly, if $C_1$ is the copula of $(U_1, U_0)$, then $U_1$ is said to be stochastically increasing in $U_0$ if $\frac{\partial C_1(u_1, u_0)}{\partial u_0}$ is a decreasing function of $u_0$, $\frac{\partial^2 C_1(u_1, u_0)}{\partial u_0^2} \leq 0$. This can be written as $SI(U_1 \vert U_0)$ and it will often be said that $C_1$ is SI in $U_0$ or that $C_1(u_1, u_0)$ is SI in its second argument.

Finally, note that if a copula $C_1$ is stochastically increasing in either $U_0$ or $U_1$, then it is PQD. The same applies if $C_1$ is stochastically decreasing in either $U_0$ or $U_1$: $C_1$ is then NQD. The reverse is however not true: being SI or SD for a copula is a stronger dependence property than being PQD or NQD.

\begin{proposition} \label{firstdep}
Let the inner copula in \eqref{EOFC2d}, $C_{Q_0(u_0)}^\diamond$, be a PQD copula, $\forall u_0$. Then if $C_2$ is NQD and $C_1$ is $SD(U_1\vert U_0)$ or if $C_2$ is PQD and $C_1$ is $SI(U_1\vert U_0)$, the outer copula is PQD, too.
\end{proposition}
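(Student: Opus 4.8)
The plan is to show that the outer copula $C(u_1,u_2)$ dominates $\Pi(u_1,u_2)=u_1u_2$ pointwise, working directly from the integral representation \eqref{EOFC2d}. Since the inner copula $C_{Q_0(u_0)}^\diamond$ is PQD for every $u_0$, we have
\begin{equation*}
C_{Q_0(u_0)}^\diamond\!\left[\frac{\partial C_1(u_1,u_0)}{\partial u_0},\frac{\partial C_2(u_2,u_0)}{\partial u_0}\right]\ge \frac{\partial C_1(u_1,u_0)}{\partial u_0}\cdot\frac{\partial C_2(u_2,u_0)}{\partial u_0},
\end{equation*}
so it suffices to prove that $\int_0^1 \frac{\partial C_1(u_1,u_0)}{\partial u_0}\,\frac{\partial C_2(u_2,u_0)}{\partial u_0}\,du_0\ge u_1u_2$; that is, the conditionally independent one-factor copula built from $C_1,C_2$ is itself PQD. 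This reduces the proposition to a statement about the degenerate (conditional-independence) case, which is the heart of the matter.

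For that reduced statement I would use the covariance/correlation-inequality trick for monotone functions of a single uniform variable. Write $a(u_0)=\partial_2 C_1(u_1,u_0)=C_{1\mid 0}(u_1\mid u_0)$ and $b(u_0)=\partial_2 C_2(u_2,u_0)=C_{2\mid 0}(u_2\mid u_0)$, viewing $U_0\sim\mathrm{Unif}(0,1)$. Then $\int_0^1 a b\,du_0=\E[a(U_0)b(U_0)]$ and, because $a,b$ are conditional distribution functions, $\E[a(U_0)]=\Pr(U_1\le u_1)=u_1$ and $\E[b(U_0)]=u_2$. Hence $\int_0^1 ab\,du_0 - u_1u_2 = \cov(a(U_0),b(U_0))$. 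Now, $C_1$ being $SD(U_1\mid U_0)$ means $u_0\mapsto a(u_0)$ is increasing, and $C_2$ being NQD is equivalent to $b(u_0)=C_{2\mid 0}(u_2\mid u_0)\le u_2$; more usefully, I would instead record that $C_2$ NQD forces $u_0\mapsto b(u_0)$ to be, in an averaged sense, decreasing — precisely, the standard fact that NQD of $C_2$ is equivalent to $\cov(f(U_0),b(U_0))\le 0$ for every increasing $f$, since $\int_0^1 C_{2\mid0}(u_2\mid u_0)\,du_0=u_2$ already and NQD says $C_2(u_2,u_0)=\int_0^{u_0}b\le u_2u_0$, i.e. the partial sums of $b$ lie below those of the constant $u_2$, which is exactly the statement that $b$ is "smaller in the increasing-convex/Hardy–Littlewood order" and yields $\cov(a,b)\le 0$ whenever $a$ is increasing. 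So $\cov(a(U_0),b(U_0))\le 0$, giving $\int_0^1 ab\,du_0\le u_1u_2$ — wait, that is the wrong direction.

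Since that sign comes out backwards, the correct reading must be that the hypothesis pairs an increasing $a$ with a $b$ of the same monotone type: re-examining, PQD of the inner copula only helps when $a,b$ are \emph{positively} associated, so the intended argument is that $C_2$ NQD together with $C_1$ $SD(U_1\mid U_0)$ makes $u_0\mapsto b(u_0)$ increasing (NQD being the weak tail condition compatible with $C_{2\mid0}(u_2\mid\cdot)$ nondecreasing away from the independence value) — more cleanly, one shows $\cov(a(U_0),b(U_0))\ge 0$ by noting both $a$ and $1-b$, or both $a$ and $b$ after the reflection $U_0\mapsto 1-U_0$ that turns NQD of $C_2$ into PQD, are comonotone in $U_0$, so their covariance is nonnegative by the Chebyshev/FKG correlation inequality for monotone functions of one variable. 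The same reflection handles the second case ($C_2$ PQD, $C_1$ $SI(U_1\mid U_0)$) verbatim. I would then combine: outer copula $\ge$ conditionally-independent copula $= u_1u_2+\cov(a,b)\ge u_1u_2$, which is PQD.

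\textbf{Main obstacle.} The delicate point is pinning down the exact monotonicity/ordering extracted from ``$C_2$ NQD'' (a condition about the joint c.d.f.\ being below $\Pi$) versus the condition actually needed on the function $u_0\mapsto C_{2\mid 0}(u_2\mid u_0)$ (a condition about its derivative, or about a correlation inequality). These are not literally the same, so the proof must either (i) show NQD suffices via the integral identity $\cov(a,b)=\int_0^1 [C_2(u_2,u_0)-u_2u_0]\,da(u_0)$ obtained by integration by parts, whose sign is then controlled because $da\ge 0$ ($SD$) and $C_2(u_2,\cdot)-u_2(\cdot)\le 0$ (NQD) — this in fact gives $\cov\le 0$ honestly, forcing the statement to be read with a sign convention under which ``NQD'' of $C_2$ on $(U_2,U_0)$ pairs with decreasing conditional margins — or (ii) invoke the stronger hypothesis that $C_2$ is $SD$/$SI$ where $SI$/$SD$ is genuinely needed. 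I expect the cleanest route is the integration-by-parts identity
\begin{equation*}
\int_0^1 a(u_0)b(u_0)\,du_0 - u_1u_2 \;=\; -\int_0^1 \big(C_2(u_2,u_0)-u_2u_0\big)\,d\!\left(\frac{\partial C_1(u_1,u_0)}{\partial u_0}\right),
\end{equation*}
after which the two sign hypotheses of the proposition make the integrand's sign definite in each of the two cases, and the PQD property of the inner copula upgrades the conditionally-independent lower bound to $C(u_1,u_2)$.
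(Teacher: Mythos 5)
Your overall strategy --- use PQD of the inner copula to bound $C(u_1,u_2)$ below by the conditionally independent integral $\int_0^1 a\,b\,du_0$ with $a(u_0)=\partial C_1(u_1,u_0)/\partial u_0$, $b(u_0)=\partial C_2(u_2,u_0)/\partial u_0$, and then control $\int_0^1 ab\,du_0-u_1u_2$ by integration by parts --- is exactly the paper's proof. The genuine gap is an unresolved sign error in the middle of your argument, which leads you to assert that the honest computation gives $\cov(a,b)\le 0$ and that the proposition must be ``re-read with a different sign convention.'' Both of those claims are wrong: your asserted equivalence ``$C_2$ NQD iff $\cov(f(U_0),b(U_0))\le 0$ for every increasing $f$'' has the wrong sign (the correct conclusion is $\ge 0$), and your first version of the integration-by-parts identity is missing the minus sign that appears, correctly, in your final display. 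The FKG/Chebyshev detour and the reflection $U_0\mapsto 1-U_0$ do not repair this, because NQD (or PQD) of $C_2$ does not make $u_0\mapsto b(u_0)$ monotone, which is what that inequality would require; and as written that route would again produce the wrong sign. As the proposal stands, it simultaneously claims the covariance is negative and that the final identity makes everything work, so it does not establish the proposition.

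The fix is simply to do the bookkeeping once, carefully. Let $B(u_0)=C_2(u_2,u_0)-u_2u_0$, so that $B(0)=B(1)=0$ and $B'=b-u_2$, and recall $\int_0^1 a\,du_0=u_1$. Then
\begin{align*}
\int_0^1 a\,b\,du_0-u_1u_2&=\int_0^1 a\,B'\,du_0=\bigl[a\,B\bigr]_0^1-\int_0^1 B\,\frac{\partial^2 C_1(u_1,u_0)}{\partial u_0^2}\,du_0\\
&=-\int_0^1\bigl(C_2(u_2,u_0)-u_2u_0\bigr)\,\frac{\partial^2 C_1(u_1,u_0)}{\partial u_0^2}\,du_0,
\end{align*}
which is precisely the identity in your last display. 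Under the stated hypotheses the integrand is $\le 0$ in both cases: $C_2$ NQD gives $C_2-\Pi\le 0$ while $C_1$ being $SD(U_1\vert U_0)$ (in the paper's convention) gives $\partial^2C_1/\partial u_0^2\ge 0$; in the second case both signs flip. Hence the right-hand side is $\ge 0$, so $\int_0^1 ab\,du_0\ge u_1u_2$, and combining with the PQD lower bound on the inner copula yields $C\ge\Pi$ exactly as the proposition states --- no reinterpretation of the hypotheses is needed. This is the paper's argument; the paper reaches the same conclusion via two successive integrations by parts instead of subtracting $u_2u_0$ first, but the computation is identical.
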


\begin{proof}
If $C_{Q_0(u_0)}^\diamond(v_1, v_2)$ is PQD, then $C_{Q_0(u_0)}^\diamond(v_1, v_2) \geq v_1 v_2$. Therefore, it is clear that
\begin{multline}\nonumber
C(u_1,u_2)=\int_{0}^{1}  C_{Q_0(u_0)}^\diamond\bigg[\frac{\partial C_{1}(u_1,u_0)}{\partial u_0}, \frac{\partial C_{2}(u_2,u_0)}{\partial u_0}\bigg]\, du_0 \\ \geq\int_{0}^{1} \frac{\partial C_{1}(u_1,u_0)}{\partial u_0} \frac{\partial C_{2}(u_2,u_0)}{\partial u_0}\, du_0.
\end{multline}
Using integration by parts, one can then show that

\begin{equation}\nonumber
C(u_1,u_2)\geq u_2 \frac{\partial C_1(u_1, u_0)}{\partial u_0}\biggr\rvert_{u_0=1}-\int_{0}^{1} C_2(u_2, u_0) \frac{\partial^2 C_1(u_1, u_0)}{\partial u_0^2} du_0.
\end{equation}
Since $C_1$ is $SD(U_1\vert U_0)$ and $C_2$ is NQD, we have that
\begin{equation}\nonumber
0 \leq \int_{0}^{1} C_2(u_2, u_0) \frac{\partial^2 C_1(u_1, u_0)}{\partial u_0^2} du_0 \leq \int_{0}^{1} u_2 u_0 \frac{\partial^2 C_1(u_1, u_0)}{\partial u_0^2} du_0,
\end{equation}
and therefore

\begin{equation}\nonumber
C(u_1,u_2)\geq u_2 \frac{\partial C_1(u_1, u_0)}{\partial u_0}\biggr\rvert_{u_0=1}-\int_{0}^{1} u_2 u_0 \frac{\partial^2 C_1(u_1, u_0)}{\partial u_0^2} du_0.
\end{equation}
Using integration by parts again, we have that
\begin{equation}\nonumber
\int_{0}^{1} u_2 u_0 \frac{\partial^2 C_1(u_1, u_0)}{\partial u_0^2} du_0 = u_2 \frac{\partial C_1(u_1, u_0)}{\partial u_0}\biggr\rvert_{u_0=1}- u_1 u_2,
\end{equation}
leading to 
\begin{equation}\nonumber
C(u_1,u_2) \geq u_1 u_2.
\end{equation}
\end{proof}

A similar proposition as the one above can be made, ensuring that the outer copula is this time NQD. The proof is almost the same as the one from the previous proposition and is therefore left to the reader.

\begin{proposition} \label{ddweq}
Let the inner copula in \eqref{EOFC2d}, $C_{Q_0(u_0)}^\diamond$, be a NQD copula, $\forall u_0$. Then if $C_2$ is NQD and $C_1$ is $SI(U_1\vert U_0)$ or if $C_2$ is PQD and $C_1$ is $SD(U_1\vert U_0)$, the outer copula is NQD, too.
\end{proposition}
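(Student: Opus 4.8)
The plan is to mirror the proof of Proposition \ref{firstdep} step by step, reversing every inequality. First I would invoke the hypothesis that $C_{Q_0(u_0)}^\diamond$ is NQD for every $u_0$, i.e.\ $C_{Q_0(u_0)}^\diamond(v_1,v_2)\le v_1v_2$, and substitute $v_i=\partial C_i(u_i,u_0)/\partial u_0$ into \eqref{EOFC2d} to obtain
\begin{equation}\nonumber
C(u_1,u_2)\le\int_{0}^{1}\frac{\partial C_1(u_1,u_0)}{\partial u_0}\,\frac{\partial C_2(u_2,u_0)}{\partial u_0}\,du_0 .
\end{equation}
Then I would integrate by parts in $u_0$, differentiating $\partial C_1/\partial u_0$ and integrating $\partial C_2/\partial u_0$; since $C_2(u_2,1)=u_2$ and $C_2(u_2,0)=0$, this gives
\begin{equation}\nonumber
C(u_1,u_2)\le u_2\,\frac{\partial C_1(u_1,u_0)}{\partial u_0}\Big|_{u_0=1}-\int_{0}^{1} C_2(u_2,u_0)\,\frac{\partial^2 C_1(u_1,u_0)}{\partial u_0^2}\,du_0 .
\end{equation}

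The crux is to bound the remaining integral in the two cases. If $C_2$ is NQD and $C_1$ is $SI(U_1\vert U_0)$, then $\partial^2 C_1/\partial u_0^2\le 0$ and $0\le C_2(u_2,u_0)\le u_2u_0$; multiplying the latter inequality through by the nonpositive factor $\partial^2 C_1/\partial u_0^2$ reverses it, so $u_2u_0\,\partial^2 C_1/\partial u_0^2\le C_2\,\partial^2 C_1/\partial u_0^2\le 0$ and hence $-\int_0^1 C_2\,\partial^2 C_1/\partial u_0^2\,du_0\le-\int_0^1 u_2u_0\,\partial^2 C_1/\partial u_0^2\,du_0$. If instead $C_2$ is PQD and $C_1$ is $SD(U_1\vert U_0)$, then $\partial^2 C_1/\partial u_0^2\ge 0$ and $C_2(u_2,u_0)\ge u_2u_0\ge 0$, so $C_2\,\partial^2 C_1/\partial u_0^2\ge u_2u_0\,\partial^2 C_1/\partial u_0^2\ge 0$, which yields the same bound $-\int_0^1 C_2\,\partial^2 C_1/\partial u_0^2\,du_0\le-\int_0^1 u_2u_0\,\partial^2 C_1/\partial u_0^2\,du_0$. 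In either case one arrives at
\begin{equation}\nonumber
C(u_1,u_2)\le u_2\,\frac{\partial C_1(u_1,u_0)}{\partial u_0}\Big|_{u_0=1}-\int_{0}^{1} u_2u_0\,\frac{\partial^2 C_1(u_1,u_0)}{\partial u_0^2}\,du_0 .
\end{equation}

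Finally I would evaluate the last integral by a second integration by parts, exactly as in the proof of Proposition \ref{firstdep}, using $C_1(u_1,1)=u_1$ and $C_1(u_1,0)=0$, to get $\int_0^1 u_2u_0\,\partial^2 C_1/\partial u_0^2\,du_0=u_2\,\partial C_1/\partial u_0\big|_{u_0=1}-u_1u_2$. The boundary terms then cancel and $C(u_1,u_2)\le u_1u_2$, so the outer copula is NQD. The only delicate point I anticipate is the sign bookkeeping in the third step: one multiplies an inequality through by $\partial^2 C_1/\partial u_0^2$, whose sign is fixed by the $SI$ or $SD$ hypothesis but differs between the two cases, so the direction of the resulting inequality must be tracked carefully; the rest is a mechanical transcription of the PQD argument. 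As is already implicit in Proposition \ref{firstdep}, one should note that the two integrations by parts presuppose enough regularity of $C_1$ in its second argument.
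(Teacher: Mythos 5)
Your proof is correct and is exactly the route the paper intends: the paper omits the argument, stating it is ``almost the same'' as that of Proposition \ref{firstdep}, and your write-up is precisely that proof with every inequality reversed, handling both hypothesis cases with the correct sign bookkeeping under the paper's (nonstandard) SI/SD conventions. Nothing further is needed.
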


\begin{corollary}
Let the inner copula in a 2-dimensional EOFC be the independence copula, that is, we fall back to Equation \eqref{eq:current one-factor copulas}, where $X_0=U_0$. Then if $C_2$ is PQD, $C_1$ is $SI(U_1\vert U_0)$ and since the independence copula is PQD, by Proposition \ref{firstdep}, the outer copula is PQD.
Similarly, if $C_2$ is NQD, $C_1$ is $SI(U_1\vert U_0)$ and since the independence copula is NQD, by Proposition \ref{ddweq}, the outer copula is NQD.
\end{corollary}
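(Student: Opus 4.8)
The corollary is an immediate specialization of Propositions~\ref{firstdep} and~\ref{ddweq}, and the plan is simply to check that their hypotheses are met in the present setting. First I would record the elementary fact that the independence copula $\Pi$ is simultaneously PQD and NQD: indeed $\Pi(\mathbf{u})\geq\Pi(\mathbf{u})$ and $\Pi(\mathbf{u})\leq\Pi(\mathbf{u})$ hold trivially, so a statement requiring the inner copula to be PQD, as well as one requiring it to be NQD, may be applied with $C_{Q_0(u_0)}^\diamond=\Pi$.

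Next I would observe that substituting $C_{Q_0(u_0)}^\diamond=\Pi$ into \eqref{EOFC2d} collapses it to
\begin{equation}\nonumber
C(u_1,u_2)=\int_{0}^{1}\frac{\partial C_1(u_1,u_0)}{\partial u_0}\,\frac{\partial C_2(u_2,u_0)}{\partial u_0}\,du_0,
\end{equation}
which is exactly \eqref{eq:current one-factor copulas} in dimension $d=2$ with $X_0=U_0$. Hence the two-dimensional conditionally independent one-factor copula is genuinely an EOFC whose inner copula is $\Pi$, and both propositions apply verbatim.

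It then remains to invoke them. For the first claim, apply Proposition~\ref{firstdep} with inner copula $\Pi$ (PQD by the above), with $C_2$ PQD and $C_1$ being $SI(U_1\vert U_0)$: this is precisely the second alternative in the hypothesis of that proposition, so the outer copula is PQD. For the second claim, apply Proposition~\ref{ddweq} with inner copula $\Pi$ (NQD by the above), with $C_2$ NQD and $C_1$ being $SI(U_1\vert U_0)$: this is the first alternative in the hypothesis of that proposition, so the outer copula is NQD. There is no genuine obstacle here; the only point deserving a moment's care is the bookkeeping of matching the current hypotheses to the correct branch of each proposition, together with the trivial remark that $\Pi$ qualifies as both PQD and NQD, which is what makes Proposition~\ref{firstdep} and Proposition~\ref{ddweq} simultaneously available.
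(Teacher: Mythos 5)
Your proposal is correct and follows essentially the same route as the paper, which treats the corollary as an immediate application of Propositions \ref{firstdep} and \ref{ddweq} once one notes that the independence copula $\Pi$ is simultaneously PQD and NQD. Your verification that setting $C^\diamond_{Q_0(u_0)}=\Pi$ in \eqref{EOFC2d} recovers \eqref{eq:current one-factor copulas}, and your matching of the hypotheses to the correct branch of each proposition, is exactly the intended argument.
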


Note that Proposition 1 in \cite{krupskii2013factor} offers a similar result as the one from the above corollary, where it is shown that if both $C_1$ and $C_2$ are SI in $U_0$, the outer copula is PQD. Proposition \ref{firstdep} shows that both $C_1$ and $C_2$ do not need to be SI in $U_0$ for the outer copula to be PQD: as long as one of the two linking copulas is SI and the remaining one is just PQD, then the outer one is PQD.




Let us now write the expression of a NEOFC when $w, d=2$. Then:

\begin{equation}\label{NEOFC2d}
C_{2}^\odot (u_1,u_2)=\int_{[0,1]^2} C_{t_2}^{\diamond_2} \bigg(\frac{\partial C_{12}(\frac{\partial C_{11}(u_1 , t_1)}{\partial t_1} , t_2)}{\partial t_2}, \frac{\partial C_{22}(\frac{\partial C_{21}(u_2 , t_1)}{\partial t_1} , t_2)}{\partial t_2}\bigg) \,dt_2\, dt_1.
\end{equation}

The following proposition ensures that the outer copula is going to be PQD. 

\begin{proposition} \label{NEOFCpropungen}
Let the inner copula in \eqref{NEOFC2d}, $C_{t_2}^{\diamond_2}$, be a PQD copula, $\forall t_2$. Then if $C_{21}$ is NQD and $C_{11}$ is SD in $T_1$, while $C_{22}$ is NQD and $C_{12}$ is SD in $T_2$, the outer copula, $C_{2}^\odot (u_1,u_2)$, is PQD.
\end{proposition}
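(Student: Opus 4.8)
The plan is to view the two-layer NEOFC in \eqref{NEOFC2d} as an EOFC whose inner copula is itself an EOFC, and then to apply Proposition \ref{firstdep} twice, once to each layer. Following the nesting scheme of Section \ref{sec:spectralRepresentation}, write
\begin{equation*}
C_{2}^\odot(u_1,u_2)=\int_0^1 C_{t_1}^{\diamond_1}\big(H_{11}(u_1,t_1),H_{21}(u_2,t_1)\big)\,dt_1,
\end{equation*}
with
\begin{equation*}
C_{t_1}^{\diamond_1}(v_1,v_2)=\int_0^1 C_{t_2}^{\diamond_2}\big(H_{12}(v_1,t_2),H_{22}(v_2,t_2)\big)\,dt_2,
\end{equation*}
where $H_{ij}(\bullet,t_j)=\partial C_{ij}(\bullet,t_j)/\partial t_j$. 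The first thing to note is that, for each fixed $t_1$, $C_{t_1}^{\diamond_1}$ is a genuine bivariate copula: it has exactly the form of the EOFC \eqref{EOFC2d} built from the inner copula $C_{t_2}^{\diamond_2}$ and the linking copulas $C_{12},C_{22}$. Hence it is legitimate to feed it into Proposition \ref{firstdep} as the inner copula of the outer EOFC.

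Step one: apply Proposition \ref{firstdep} to $C_{t_1}^{\diamond_1}$. Its inner copula $C_{t_2}^{\diamond_2}$ is PQD for all $t_2$ by hypothesis; among its linking copulas, $C_{22}$ plays the role of ``$C_2$'' and is NQD, while $C_{12}$ plays the role of ``$C_1$'' and is $SD$ in its second argument ($T_2$). Proposition \ref{firstdep}, in the branch ``$C_2$ is NQD and $C_1$ is $SD(U_1\vert U_0)$'', then gives that $C_{t_1}^{\diamond_1}$ is PQD, and this holds for every $t_1$ (trivially so, since under the NEOFC convention $C_{t_1}^{\diamond_1}$ does not depend on $t_1$).

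Step two: apply Proposition \ref{firstdep} once more, now to $C_2^\odot$ itself, regarded as the EOFC with inner copula $C_{t_1}^{\diamond_1}$ — which is PQD for all $t_1$ by Step one — and linking copulas $C_{11},C_{21}$. Here $C_{21}$ plays the role of ``$C_2$'' and is NQD, while $C_{11}$ plays the role of ``$C_1$'' and is $SD$ in $T_1$. Proposition \ref{firstdep} then yields that $C_2^\odot$ is PQD, which is the claim.

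The argument is essentially bookkeeping: no estimates beyond those already established in the proof of Proposition \ref{firstdep} are required. The only point deserving care is to verify, at each of the two layers, that the SD/NQD hypotheses on the relevant pair of linking copulas line up precisely with the ``$C_1$ is $SD$, $C_2$ is NQD'' pattern of Proposition \ref{firstdep}, and to confirm — as done above — that the intermediate object $C_{t_1}^{\diamond_1}$ is a bona fide copula so that Proposition \ref{firstdep} applies to it verbatim.
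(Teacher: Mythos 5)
Your proposal is correct and follows essentially the same route as the paper: write $C_2^\odot$ as an EOFC whose inner copula $C_{t_1}^{\diamond_1}$ is itself the EOFC built from $C_{t_2}^{\diamond_2}$ and $C_{12},C_{22}$, then apply Proposition \ref{firstdep} at each of the two layers. Your additional remarks (that $C_{t_1}^{\diamond_1}$ is a bona fide copula and does not depend on $t_1$) only make explicit what the paper leaves implicit.
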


\begin{proof}
Let us write $C_2^\odot (u_1,u_2)$ as

\begin{equation}
C_{2}^\odot (u_1,u_2)=\int_{0}^1 C_{t_1}^{\diamond_1}\bigg(\frac{\partial C_{11}(u_1 , t_1)}{\partial t_1}, \frac{\partial C_{21}(u_2 , t_1)}{\partial t_1}\bigg) \,dt_1,
\end{equation}
where 
\begin{equation}
C_{t_1}^{\diamond_1}(v_1,v_2)=\int_{0}^1 C_{t_2}^{\diamond_2}\bigg(\frac{\partial C_{12}(v_1 , t_2)}{\partial t_2}, \frac{\partial C_{22}(v_2 , t_2)}{\partial t_2}\bigg) \,dt_2.
\end{equation}

By Proposition \ref{firstdep}, if $C_{t_1}^{\diamond_1}$ is PQD while $C_{21}$ is NQD and $C_{11}$ is SD in $T_1$, then $C_{2}^\odot (u_1,u_2)$ is PQD, too. By Proposition \ref{firstdep} again, in order for $C_{t_1}^{\diamond_1}$ to be PQD, one needs $C_{t_2}^{\diamond_2}$ to be PQD, while $C_{22}$ is NQD and $C_{12}$ is SD in $T_2$.
\end{proof}

Proposition \ref{NEOFCpropungen} can be generalized as follows:
\begin{proposition} \label{NEOFCpropgen}
Let the inner copula in a 2-dimensional NEOFC be PQD. If for each $j \in \{1, \ldots, w\}$ we have that $C_{1j}$ is NQD and $C_{2j}$ is SD with respect to the factor $T_j$ or that $C_{2j}$ is NQD and $C_{1j}$ is SD with respect to the factor $T_j$, then the outer copula, $C_w^\odot$, is PQD.
\end{proposition}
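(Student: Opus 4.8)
The plan is to argue by induction on the number of layers $w$, peeling off one integration at a time and invoking Proposition \ref{firstdep} at each step. The key observation is that a two-dimensional $w$-layer NEOFC is, once the outermost integration over $t_1$ is isolated, an EOFC whose inner copula is a $(w-1)$-layer NEOFC; if that inner copula is known to be PQD, Proposition \ref{firstdep} upgrades the EOFC to PQD as well.

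For the base case $w=1$, a NEOFC is the EOFC
\[
C_1^\odot(u_1,u_2)=\int_0^1 C_{t_1}^{\diamond_1}\!\left(\frac{\partial C_{11}(u_1,t_1)}{\partial t_1},\,\frac{\partial C_{21}(u_2,t_1)}{\partial t_1}\right)dt_1,
\]
and the layer-$1$ hypothesis is exactly the first alternative in Proposition \ref{firstdep}. One should note that Proposition \ref{firstdep} is insensitive to which of the two linking copulas is labelled $C_1$ and which $C_2$: its proof uses only the inequality $C^\diamond(v_1,v_2)\ge v_1 v_2$, which is symmetric in $v_1,v_2$, and the transpose $(v_1,v_2)\mapsto C^\diamond(v_2,v_1)$ of a PQD copula is again a PQD copula. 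Hence, whichever of the two orderings in the hypothesis of Proposition \ref{NEOFCpropgen} holds at layer $1$, Proposition \ref{firstdep} gives that $C_1^\odot$ is PQD. (The case $w=2$ is precisely Proposition \ref{NEOFCpropungen} and could equally well serve as the base.)

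For the inductive step, assume the claim for $w-1$ layers. Writing the $w$-layer NEOFC \eqref{NEOFC:finaleq} in its iterated form and isolating the outermost integration over $t_1$, and recalling \eqref{Gi},
\[
C_w^\odot(u_1,u_2)=\int_0^1 C^{\diamond_1}\!\left(\frac{\partial C_{11}(u_1,t_1)}{\partial t_1},\,\frac{\partial C_{21}(u_2,t_1)}{\partial t_1}\right)dt_1,
\qquad
C^{\diamond_1}(v_1,v_2)=\int_{[0,1]^{w-1}} C_{t_w}^{\diamond_w}\!\left(H_{1w}^{t_w}\!\circ\cdots\circ H_{12}^{t_2}(v_1),\,H_{2w}^{t_w}\!\circ\cdots\circ H_{22}^{t_2}(v_2)\right)dt_w\cdots dt_2,
\]
so that $C^{\diamond_1}$ is a $(w-1)$-layer NEOFC in the variables $v_1,v_2$ whose inner copula is $C_{t_w}^{\diamond_w}$ (PQD by hypothesis) and whose linking copulas are the $C_{ij}$ with $j\in\{2,\dots,w\}$, each layer of which satisfies the NQD/SD condition. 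By the induction hypothesis $C^{\diamond_1}$ is PQD; since it does not depend on $t_1$, it is PQD for every $t_1$. Applying Proposition \ref{firstdep} with $C^{\diamond_1}$ playing the role of the inner copula in \eqref{EOFC2d} and with $C_{11},C_{21}$ as the linking copulas (using the layer-$1$ NQD/SD hypothesis, in whichever order it holds, exactly as in the base case) shows that $C_w^\odot$ is PQD, which closes the induction. The only step that demands any care is the bookkeeping that exhibits $C_w^\odot$ as an EOFC over the $(w-1)$-layer NEOFC $C^{\diamond_1}$ — essentially unwinding the composition in \eqref{Gi} — and then checking that the hypotheses transfer verbatim to each peeled layer; beyond this there is no analytic obstacle, the argument being a mechanical iteration of Proposition \ref{firstdep}.
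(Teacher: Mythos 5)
Your proof is correct and follows essentially the same route as the paper: the paper declares the result ``trivial'' given the proof of Proposition \ref{NEOFCpropungen}, which is exactly your peel-one-layer-and-apply-Proposition-\ref{firstdep} argument, and you simply formalize the iteration as an induction on $w$. Your added observation that Proposition \ref{firstdep} applies under either labelling of the linking copulas (via symmetry of the bound and the fact that the transpose of a PQD copula is PQD) is a detail the paper leaves implicit, and it is needed since the general statement allows either ordering at each layer.
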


The proof of this last proposition is trivial in regard of the proof of Proposition \ref{NEOFCpropungen}. Proposition \ref{NEOFCpropgen} can also be expressed so that the resulting outer copula is NQD:

\begin{proposition}
Let the inner copula in a 2-dimensional NEOFC be NQD. If for each $j \in \{1, \ldots, w\}$ we have that $C_{1j}$ is NQD and $C_{2j}$ is SI with respect to the factor $T_j$ or that $C_{2j}$ is NQD and $C_{1j}$ is SI with respect to the factor $T_j$, then the outer copula, $C_w^\odot$, is NQD.
\end{proposition}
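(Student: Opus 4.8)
The plan is to mirror the structure of the proof of Proposition \ref{NEOFCpropungen} exactly, but invoking the NQD version of the base result, Proposition \ref{ddweq}, at each nesting level. First I would write the 2-dimensional NEOFC as an iterated construction, peeling off one layer at a time:
\begin{equation} \nonumber
C_w^\odot(u_1,u_2)=\int_0^1 C_{t_1}^{\diamond_1}\bigg(\frac{\partial C_{11}(u_1,t_1)}{\partial t_1},\frac{\partial C_{21}(u_2,t_1)}{\partial t_1}\bigg)\,dt_1,
\end{equation}
where $C_{t_1}^{\diamond_1}$ is itself a $(w-1)$-layer NEOFC built from the linking copulas $\{C_{ij}\}$ with $j\ge 2$ and the same inner copula.

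Then I would argue by downward induction on the number of layers. For the innermost layer, I would note that $C_{t_w}^{\diamond_w}$ is NQD by hypothesis. For the inductive step, suppose the NEOFC built from layers $j,j+1,\dots,w$ (call it $C^{\diamond_j}_{t_j}$) is NQD for every value of its factor; I want to conclude the same for the NEOFC built from layers $j-1,\dots,w$. That object has the form
\begin{equation} \nonumber
C^{\diamond_{j-1}}_{t_{j-1}}(v_1,v_2)=\int_0^1 C^{\diamond_j}_{t_j}\bigg(\frac{\partial C_{1,j-1}(v_1,t_j)}{\partial t_j},\frac{\partial C_{2,j-1}(v_2,t_j)}{\partial t_j}\bigg)\,dt_j,
\end{equation}
which is a 2-dimensional EOFC whose inner copula $C^{\diamond_j}_{t_j}$ is NQD (induction hypothesis) and whose two linking copulas are $C_{1,j-1}$ and $C_{2,j-1}$. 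By hypothesis, at layer $j-1$ one of the two configurations of Proposition \ref{ddweq} holds (one linking copula NQD, the other SI with respect to $T_{j-1}$), so Proposition \ref{ddweq} gives that $C^{\diamond_{j-1}}_{t_{j-1}}$ is NQD. Iterating down to $j=1$ yields that the full outer copula $C_w^\odot$ is NQD.

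The only delicate point is bookkeeping: one must check that at each level the object being fed in as the ``inner copula'' really is a copula for every fixed value of the relevant factor (so that Proposition \ref{ddweq} applies verbatim), and that the pair $(C_{1,j},C_{2,j})$ of linking copulas at level $j$ matches the hypotheses of Proposition \ref{ddweq} in one of its two branches. Both are immediate from the NEOFC construction and the statement of the proposition, so the argument is essentially a clean induction; accordingly the proof can be kept as brief as the one offered for Proposition \ref{NEOFCpropungen}, and I would simply remark that it is obtained from that proof by replacing ``PQD'' with ``NQD'', ``SD'' with ``SI'', and Proposition \ref{firstdep} with Proposition \ref{ddweq} throughout.
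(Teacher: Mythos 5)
Your argument is correct and is essentially the proof the paper intends: peel off one layer at a time, note that each intermediate object $C^{\diamond_j}_{t_j}$ is itself a (conditionally invariant) copula, and apply Proposition \ref{ddweq} at each level exactly as Proposition \ref{firstdep} is applied in the proof of Proposition \ref{NEOFCpropungen}, the paper itself leaving this NQD version to the reader as the obvious PQD$\to$NQD, SD$\to$SI substitution. The only flaw is a bookkeeping slip in your inductive display: the linking copulas integrated against $t_j$ should be $C_{1j}$ and $C_{2j}$ (with SI understood with respect to $T_j$), not $C_{1,j-1}$ and $C_{2,j-1}$, which does not affect the substance of the induction.
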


Hereafter is an example of NEOFC meeting the requirements of Proposition \ref{NEOFCpropgen}.
\begin{example} Start from \ref{NEOFC2d}. Replace the inner copula $C_{t_2}^{\diamond_2}$ by the Frank copula, with parameter
\begin{equation}\nonumber
\theta^{\diamond_2}(t_2)=\frac{\log(1-t_2)}{-\lambda},
\end{equation} 
that is, we use the inverse CDF of an exponential distribution on $t_2$ in order to calculate $\theta^{\diamond_2}$, where $\lambda > 0$ is the rate, that is, the maximum value of the corresponding exponential probability density function. Since the support of the exponential is $[0, \infty)$, the parameter $\theta^{\diamond_2}(t_2)$ is always positive and therefore $C_{t_2}^{\diamond_2}$ is PQD $\forall t_2 \in [0,1]$. 

Next, set both $C_{11}$ and $C_{12}$ to be a Mardia copula with a negative parameter's value, ensuring the copula is NQD \citep[p. 188]{nelsen_introduction_2006}, while both $C_{21}$ and $C_{22}$ are AMH copulas, each with a negative parameter's value, to ensure that $C_{21}$ is SD in $T_1$ and $C_{22}$ is SD in $T_2$. Indeed, for an AMH copula $C_{\theta_\text{AMH}}(\bullet,t_j)$, one have that
\begin{equation} \nonumber
\frac{\partial^2 C_{\theta_\text{AMH}}(\bullet,t_j)}{\partial t_j^2}=-\frac{2 \theta_\text{AMH}\times (\bullet-1)\bullet(\theta_\text{AMH}\times(\bullet-1)+1)}{(\theta_\text{AMH}\times(t_j-1)(\bullet-1)-1)^3},
\end{equation}
which is a positive quantity as long as $\theta_\text{AMH}<0$.

The outer copula resulting from this construction is, by Proposition \ref{NEOFCpropgen}, PQD.
\end{example}

Next we introduce an important conjecture, which describes the relationship between the inner copula and the outer copula of an EOFC.

\begin{conjecture} \label{taylorup}
Let the inner copula in \eqref{EOFC2d}, $C_{Q_0(u_0)}^\diamond$, be stochastically decreasing in both its arguments and be conditionally invariant (refer to Subsection \ref{ontheinnercopula} for details on conditional invariance). If, moreover, $C_2$ is NQD while $C_1$ is SD in $U_0$, then $C(u_1, u_2) > C_{Q_0(u_0)}^\diamond(u_1, u_2)$, $\forall (u_1, u_2) \in [0,1]^2$.
\end{conjecture}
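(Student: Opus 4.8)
The plan is to use conditional invariance to replace the inner copula by a single fixed copula $C^\diamond$ (this is also what makes the right-hand side $C^\diamond(u_1,u_2)$ well defined), write $a(u_0)=\partial C_1(u_1,u_0)/\partial u_0 = C_{1\vert 0}(u_1\vert u_0)$ and $b(u_0)=\partial C_2(u_2,u_0)/\partial u_0 = C_{2\vert 0}(u_2\vert u_0)$, and compare
\[
C(u_1,u_2)=\int_0^1 C^\diamond\big(a(u_0),b(u_0)\big)\,du_0 \qquad\text{with}\qquad C^\diamond(u_1,u_2),
\]
exploiting four structural facts: (i) $\int_0^1 a\,du_0=u_1$ and $\int_0^1 b\,du_0=u_2$, since $a,b$ are conditional c.d.f.'s of $C_1,C_2$; (ii) $a$ is non-decreasing, because $C_1$ is $SD(U_1\vert U_0)$, i.e. $\partial^2C_1/\partial u_0^2\ge 0$; (iii) $\int_0^{u_0} b(v)\,dv = C_2(u_2,u_0)\le u_2u_0$, because $C_2$ is NQD; and (iv) $C^\diamond$ is convex in each argument separately (this is exactly ``$C^\diamond$ is SD in both its arguments'' in the paper's convention) while $\partial_{12}C^\diamond=c^\diamond\ge 0$ since $C^\diamond$ is a copula.

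The heuristic behind the conjecture — presumably the ``Taylor'' in its label — is a second-order expansion of $C^\diamond$ about $(u_1,u_2)$: with $A=a(U_0)$, $B=b(U_0)$, $U_0$ uniform, the first-order terms vanish by (i) and one is left with $C^\diamond(u_1,u_2)+\tfrac12\partial_{11}C^\diamond\,\var A+\partial_{12}C^\diamond\,\cov(A,B)+\tfrac12\partial_{22}C^\diamond\,\var B$, where every correction is non-negative: the variance terms by (iv), and $\cov(A,B)=\int_0^1 ab\,du_0-u_1u_2\ge 0$ by exactly the integration-by-parts computation already used in the proof of Proposition~\ref{firstdep} (with $\Pi$ in place of the inner copula). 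To avoid the Taylor remainder I would instead argue directly. By convexity of $C^\diamond$ in its second argument, $C^\diamond(a(u_0),b(u_0))\ge C^\diamond(a(u_0),u_2)+\partial_2C^\diamond(a(u_0),u_2)\,(b(u_0)-u_2)$ pointwise; integrating, the first term is $\ge C^\diamond\big(\int_0^1 a\,du_0,u_2\big)=C^\diamond(u_1,u_2)$ by Jensen's inequality (convexity in the first argument plus (i)). For the second term put $k(u_0)=\partial_2C^\diamond(a(u_0),u_2)$: it is non-decreasing because $a$ is non-decreasing by (ii) and $x\mapsto\partial_2C^\diamond(x,u_2)$ is non-decreasing by (iv); and $M(u_0):=\int_0^{u_0}(b(v)-u_2)\,dv=C_2(u_2,u_0)-u_2u_0$ satisfies $M(0)=M(1)=0$ and $M\le 0$ by (iii). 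Integration by parts gives $\int_0^1 k(u_0)(b(u_0)-u_2)\,du_0=-\int_0^1 M(u_0)\,dk(u_0)\ge 0$. Combining the two pieces yields $C(u_1,u_2)\ge C^\diamond(u_1,u_2)$.

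The main obstacle is upgrading ``$\ge$'' to the strict ``$>$'' in the statement. Each inequality above is an equality only in degenerate configurations: the tangent-line bound is tight when $b\equiv u_2$ (i.e. $C_2=\Pi$) or when $C^\diamond$ is affine along the slice $\{v_1=a(u_0)\}$; Jensen is tight when $a$ is constant (i.e. $C_1=\Pi$) or $C^\diamond(\cdot,u_2)$ is affine on the range of $a$; and the integration-by-parts term vanishes when $k$ is constant wherever $M<0$. So one must show that the hypotheses — $C_1$ genuinely $SD(U_1\vert U_0)$, $C_2$ genuinely NQD, and $C^\diamond$ genuinely SD in both arguments — preclude simultaneous tightness; note that if $C_1=C_2=\Pi$ then $C\equiv C^\diamond$ exactly, so some non-degeneracy must be read into the hypotheses. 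Pinning down precisely which strictness on the linking copulas and on $C^\diamond$ makes the argument airtight is, I expect, exactly why the result is stated as a conjecture rather than a theorem.
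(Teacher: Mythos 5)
The statement you were asked about is only a \emph{conjecture} in the paper: the authors do not prove it, they merely motivate it by a second-order Taylor expansion of $C^\diamond_{Q_0(u_0)}$ about $(u_1,u_2)$ in which the remainder is explicitly neglected, the signs of the three second-order terms being controlled by exactly the facts you list as (i)--(iv) (the cross term via the identity \eqref{delta1delta2}). Your route is genuinely different and, for the weak inequality, stronger than what the paper offers: the tangent-line bound in the second argument plus Jensen in the first (both legitimate, since the paper's ``SD in both arguments'' is precisely convexity of $C^\diamond$ in each variable and $\int_0^1 a\,du_0=u_1$), together with the Stieltjes integration by parts $\int_0^1 k\,(b-u_2)\,du_0=-\int_0^1 M\,dk\ge 0$ --- where $k(u_0)=\partial_2C^\diamond(a(u_0),u_2)$ is non-decreasing because $a$ is non-decreasing (SD of $C_1$) and $x\mapsto\partial_2C^\diamond(x,u_2)$ is non-decreasing, while $M=C_2(u_2,\cdot)-u_2\,\mathrm{id}\le 0$ (NQD) vanishes at $0$ and $1$ --- gives a remainder-free proof that $C(u_1,u_2)\ge C^\diamond(u_1,u_2)$ everywhere. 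That is more than the paper establishes; the paper's expansion only suggests the size of the gap (variance and covariance of $a(U_0),b(U_0)$ weighted by second derivatives of $C^\diamond$), whereas your argument settles the inequality itself. Your diagnosis of what remains open is also accurate, and can be sharpened: the strict inequality cannot hold as literally stated, since on the boundary ($u_i\in\{0,1\}$) both sides coincide for \emph{any} copulas, and under the paper's weak notions of SD/NQD the choice $C_1=C_2=\Pi$ satisfies all hypotheses yet yields $C\equiv C^\diamond$; so strictness requires restricting to $(0,1)^2$ and imposing genuine non-degeneracy on $C_1$, $C_2$ and $C^\diamond$ --- presumably the very reason the authors left the statement as a conjecture. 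In short, your proposal proves a clean weak form of the conjecture by an argument not in the paper and correctly isolates the strictness issue as the only genuinely open part.
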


The motivation of this conjecture is now given.

Let $\Delta_{u_i}(u_0)=\frac{\partial C_i(u_i, u_0)}{\partial u_0}-u_i$. Note that $\int_0^1 \frac{\partial C_i(u_i, u_0)}{\partial u_0} du_0=C_i(u_i, u_0)\biggr\rvert_0^1=u_i$, and therefore that $\int_{0}^1 \Delta_{u_i}(u_0) du_0=0$. Also note that if $C_i$ is SI in $U_0$, then $\Delta_{u_i}(u_0)$ is a decreasing function of $u_0$. Similarly, if $C_i$ is SD in $U_0$, then $\Delta_{u_i}$ is an increasing function of $u_0$. 

Using integration by parts, one can show that
\begin{equation} \label{delta1delta2}
\int_{0}^1 \Delta_{u_1}(u_0) \Delta_{u_2}(u_0) \,du_0=-\int_0^1 \bigg(C_2(u_2, u_0)-u_2u_0\bigg)\times \frac{\partial^2 C_1(u_1, u_0)}{\partial u_0^2} du_0,
\end{equation}
the integral on the left remaining positive $\forall (u_1, u_2) \in [0,1]^2$ if $C_1$ is SI in $U_0$ while $C_2$ is PQD. Likewise, the integral on the left remains negative $\forall (u_1, u_2) \in [0,1]^2$ if $C_1$ is SI in $U_0$ while $C_2$ is NQD.

Let us now write the Taylor expansion for a bivariate copula $C_{Q_0(u_0)}^\diamond$:

\begin{multline} \label{taylorexp}
C_{Q_0(u_0)}^\diamond\bigg(u_1+\Delta_{u_1}(u_0), u_2+\Delta_{u_2}(u_0)\bigg) \\
=C_{Q_0(u_0)}^\diamond(u_1, u_2)+\frac{\partial C_{Q_0(u_0)}^\diamond(u_1, u_2)}{\partial u_1}\Delta_{u_1}(u_0)+\frac{\partial C_{Q_0(u_0)}^\diamond(u_1, u_2)}{\partial u_2}\Delta_{u_2}(u_0)\\
+\frac{\partial^2 C_{Q_0(u_0)}^\diamond(u_1, u_2)}{\partial u_1 \partial u_2}\Delta_{u_1}(u_0) \Delta_{u_2}(u_0)+\frac{\partial^2 C_{Q_0(u_0)}^\diamond(u_1, u_2)}{\partial u_1^2}\Delta^2_{u_1}(u_0)/2 \\
+\frac{\partial^2 C_{Q_0(u_0)}^\diamond(u_1, u_2)}{\partial u_2^2}\Delta^2_{u_2}(u_0)/2+R\bigg(u_1+\Delta_{u_1}(u_0), u_2+\Delta_{u_2}(u_0)\bigg).
\end{multline}
If one neglects the remainder term $R\bigg(u_1+\Delta_{u_1}(u_0), u_2+\Delta_{u_2}(u_0)\bigg)$, then one can write that the outer copula in Equation \eqref{EOFC2d} is, taking into account that $C_{Q_0(u_0)}^\diamond(u_1, u_2)=C^\diamond(u_1, u_2)$ is conditionally invariant,

\begin{multline}  \nonumber
C(u_1, u_2)\approx C^\diamond(u_1, u_2) \, du_0+ \frac{\partial^2 C^\diamond(u_1, u_2)}{\partial u_1 \partial u_2}\int_0^1 \Delta_{u_1}(u_0) \Delta_{u_2}(u_0) \, du_0 \\
+ \frac{\partial^2 C^\diamond(u_1, u_2)}{\partial u_1^2}\int_0^1 \frac{\Delta^2_{u_1}(u_0)}{2} \, du_0+\frac{\partial^2 C^\diamond(u_1, u_2)}{\partial u_2^2}\int_0^1 \frac{\Delta^2_{u_2}(u_0)}{2}\, du_0.
\end{multline}

Note that both
\begin{equation}
\frac{\partial^2 C^\diamond(u_1, u_2)}{\partial u_1^2}\int_0^1 \frac{\Delta^2_{u_1}(u_0)}{2} \, du_0
\end{equation}
and
\begin{equation}
\frac{\partial^2 C^\diamond(u_1, u_2)}{\partial u_2^2}\int_0^1 \frac{\Delta^2_{u_2}(u_0)}{2}\, du_0
\end{equation}
are positive since the inner copula $C^\diamond$ is stochastically decreasing in both its arguments and $\int_0^1 \frac{\Delta^2_{u_i}(u_0)}{2} \, du_0$ is necessarily positive $\forall i$.

Moreover, by Equation \eqref{delta1delta2},
\begin{equation} \nonumber
\int_0^1 \Delta_{u_1}(u_0) \Delta_{u_2}(u_0) \, du_0=- \int_0^1 \bigg(C_2(u_2, u_0)-u_2u_0\bigg)\times \frac{\partial^2 C_1(u_1, u_0)}{\partial u_0^2} \,du_0.
\end{equation}
is a positive quantity, since $C_2$ is NQD and $C_1$ is SD in $U_0$. Therefore the outer copula can be approximated by $C^\diamond(u_1, u_2)$ plus some positive quantity, $\forall (u_1, u_2) \in [0,1]^2$, leading to the claim of the conjecture that $C(u_1, u_2) > C^\diamond(u_1, u_2)$. 

It is also possible to express a similar conjecture for $C(u_1, u_2) < C^\diamond(u_1, u_2)$:

\begin{conjecture} \label{taylordown}
Let the inner copula in \eqref{EOFC2d}, $C_{Q_0(u_0)}^\diamond=C^\diamond$, be stochastically increasing in both its arguments and be conditionally invariant. If, moreover, $C_2$ is NQD while $C_1$ is SI in $U_0$, then $C(u_1, u_2) < C^\diamond(u_1, u_2)$, $\forall (u_1, u_2) \in [0,1]^2$.
\end{conjecture}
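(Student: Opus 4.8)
The plan is to reproduce, with all signs reversed, the Taylor-expansion heuristic that was used to motivate Conjecture~\ref{taylorup}. Keep the notation $\Delta_{u_i}(u_0)=\frac{\partial C_i(u_i,u_0)}{\partial u_0}-u_i$, and recall $\int_0^1\Delta_{u_i}(u_0)\,du_0=0$ since $\int_0^1\frac{\partial C_i(u_i,u_0)}{\partial u_0}\,du_0=u_i$. I would start from the second-order Taylor expansion \eqref{taylorexp} of $C^\diamond\!\left(u_1+\Delta_{u_1}(u_0),u_2+\Delta_{u_2}(u_0)\right)$, integrate it over $u_0\in[0,1]$ inside \eqref{EOFC2d}, and use conditional invariance of $C^\diamond$ — which makes $C_{Q_0(u_0)}^\diamond(u_1,u_2)=C^\diamond(u_1,u_2)$ constant in $u_0$ — together with $\int_0^1\Delta_{u_i}(u_0)\,du_0=0$ to annihilate the zeroth- and first-order contributions. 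Dropping the remainder this leaves
\begin{multline*}
C(u_1,u_2)\approx C^\diamond(u_1,u_2)+\frac{\partial^2 C^\diamond(u_1,u_2)}{\partial u_1\partial u_2}\int_0^1\Delta_{u_1}(u_0)\Delta_{u_2}(u_0)\,du_0\\
+\frac{\partial^2 C^\diamond(u_1,u_2)}{\partial u_1^2}\int_0^1\frac{\Delta_{u_1}^2(u_0)}{2}\,du_0+\frac{\partial^2 C^\diamond(u_1,u_2)}{\partial u_2^2}\int_0^1\frac{\Delta_{u_2}^2(u_0)}{2}\,du_0.
\end{multline*}

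Next I would check that each of the three second-order terms is nonpositive. The two pure second-derivative terms are nonpositive because $C^\diamond$ being SI in each argument means $\frac{\partial^2 C^\diamond}{\partial u_i^2}\le 0$, while $\int_0^1\Delta_{u_i}^2(u_0)\,du_0\ge 0$ trivially. For the cross term, $\frac{\partial^2 C^\diamond}{\partial u_1\partial u_2}$ is the density of the copula $C^\diamond$, hence nonnegative, and by \eqref{delta1delta2},
\begin{equation*}
\int_0^1\Delta_{u_1}(u_0)\Delta_{u_2}(u_0)\,du_0=-\int_0^1\big(C_2(u_2,u_0)-u_2u_0\big)\frac{\partial^2 C_1(u_1,u_0)}{\partial u_0^2}\,du_0,
\end{equation*}
which is nonpositive: $C_2$ being NQD gives $C_2(u_2,u_0)-u_2u_0\le 0$, and $C_1$ being SI in $U_0$ gives $\frac{\partial^2 C_1(u_1,u_0)}{\partial u_0^2}\le 0$, so the integrand on the right is nonnegative. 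Thus the outer copula equals $C^\diamond(u_1,u_2)$ minus a nonnegative quantity, with strictness on $(0,1)^2$ inherited from the fact that $\Delta_{u_i}$ is not identically zero unless $C_i$ is the independence copula; this gives $C(u_1,u_2)<C^\diamond(u_1,u_2)$.

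The main obstacle — and the reason the statement is phrased as a conjecture rather than a theorem — is the remainder $R\big(u_1+\Delta_{u_1}(u_0),u_2+\Delta_{u_2}(u_0)\big)$ that has been discarded: there is no a priori control showing that its integral against $du_0$ is small compared with the surviving (strictly negative) second-order terms, and $C^\diamond$ need not even be twice continuously differentiable on all of $[0,1]^2$. A rigorous proof would likely require either a quantitative third-order remainder bound paired with a lower bound on the magnitude of the negative second-order contribution, or, better, a remainder-free route: for instance a Jensen/convexity argument exploiting that $C^\diamond$ is concave in each coordinate, combined with the identity \eqref{delta1delta2} to absorb the cross term under the hypotheses that $C_1$ is SI in $U_0$ and $C_2$ is NQD. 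I expect carrying out such a remainder-free argument under exactly these hypotheses to be the crux of the difficulty.
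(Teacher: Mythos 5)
Your proposal reproduces, with the signs correctly reversed, exactly the Taylor-expansion heuristic that the paper uses to motivate Conjecture~\ref{taylorup} and explicitly says carries over to Conjecture~\ref{taylordown} (the paper offers no proof, only this motivation plus numerical checks, since the statement is a conjecture). You also correctly flag the neglected remainder term in \eqref{taylorexp} as the reason the claim remains conjectural, which matches the paper's own discussion, so your treatment is essentially the same as the paper's.
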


The motivation is almost exactly the same as that of Conjecture \ref{taylorup} and is therefore left to the reader.

Of course, one might wonder at this point how reasonable it is to neglect the remainder term in \eqref{taylorexp}. In what follows, an attempt at detecting at least one example where Conjecture \ref{taylordown} fails is presented.

Let the inner copula be one of the 9 following copulas: 
\begin{itemize}
  \setlength{\itemsep}{1pt}
  \setlength{\parskip}{0pt}
  \setlength{\parsep}{0pt}
\item a Farlie-Gumbel-Morgenstern \citep{nelsen_introduction_2006, balakrishnan2009continuous} copula with a parameter's value of 0.25, 0.5 or 1,
\item a Plackett copula \citep{kemp1992continuous, nelsen_introduction_2006} with a parameter's value of 3, 12 or 64 (ranging from low to high level of dependence),
\item or a Frank copula with a parameter's value of 2.5, 6 or 14 (again, in an effort to have low, moderate and high dependence). 
\end{itemize}

All these inner copulas are conditionally invariant and SI in both arguments as required by Conjecture \ref{taylordown}. Hereafter is the proof that the proposed FGM copulas are SI in both arguments.

The expression of a FGM copula is
\begin{equation} \nonumber
C_{\theta_{\text{FGM}}}(u_1, u_2)=u_1u_2+\theta_{\text{FGM}} u_1u_2 (1-u_1)(1-u_2),
\end{equation}
where $\theta_{\text{FGM}} \in [-1, 1]$. 

One can calculate that
\begin{equation} \nonumber
\frac{\partial^2 C_{\theta_\text{FGM}}}{\partial u_1^2}=2 \theta_\text{FGM} \times(u_2 (u_2-1)),
\end{equation}
and that
\begin{equation} \nonumber
\frac{\partial^2 C_{\theta_\text{FGM}}}{\partial u_2^2}=2 \theta_\text{FGM} \times(u_1 (u_1-1)).
\end{equation}

In order for the copula to be SI in both its arguments, these two expressions have to be negative. Obviously, this will be the case as long as $\theta_\text{FGM}$ is positive.

Let now $C_1$ be one of the following copulas:
\begin{itemize}
  \setlength{\itemsep}{1pt}
  \setlength{\parskip}{0pt}
  \setlength{\parsep}{0pt}
\item a Plackett copula with a parameter's value of 3, 12 or 64,
\item a Frank copula with a parameter's value of 2.5, 6 or 14,
\item or a FGM copula with a parameter's value of 0.25, 0.5, 1.
\end{itemize}

These copulas are all SI in (at least) their second argument, as requested by Conjecture \ref{taylordown}.

Finally, let $C_2$ be one of the following 9 copulas:
\begin{itemize}
  \setlength{\itemsep}{1pt}
  \setlength{\parskip}{0pt}
  \setlength{\parsep}{0pt}
\item a Mardia copula with a parameter's value of -0.25, -0.5 or -0.75,
\item a Frank copula with a parameter's value of -2.5, -6 or -14,
\item or a FGM copula with a parameter's value of -0.25, -0.5 or -1.
\end{itemize}

These copulas are all NQD, as required by Conjecture \ref{taylordown}. A Mardia copula \citep{mardia1970families, nelsen_introduction_2006} is defined as

\begin{multline} \nonumber
C_{\theta_{\text{Mardia}}}(u_1, u_2)=\frac{\theta_{\text{Mardia}}^2 (1+\theta_{\text{Mardia}})}{2}M(u_1, u_2)\\
+(1-\theta_{\text{Mardia}}^2)\Pi(u_1, u_2)+\frac{\theta_{\text{Mardia}}^2 (1-\theta_{\text{Mardia}})}{2} W(u_1, u_2),
\end{multline}
where $M(u_1, u_2)$ is the upper Fréchet–Hoeffding bound and $W(u_1, u_2)$ the lower one, with $\theta_{\text{Mardia}} \in [-1, 1]$. A Mardia copula is NQD as long as $\theta_{\text{Mardia}}<0$ \citep[p. 188]{nelsen_introduction_2006}.

In total, there are $9^3=729$ combinations or setups of interest for the 27 presented copulas. For each of these 729 combinations, it is checked if $C(u_1, u_2)< C^\diamond(u_1, u_2)$, that is, if the outer copula is below the inner one, for a grid of 100 points in $[0,1]^2$. Since $C(u_1, u_2)$ is calculated through numerical integration, one must take into account the error on $C(u_1, u_2)$ before being able to conclude that $C(u_1, u_2)< C^\diamond(u_1, u_2)$. This is performed through the following hypothesis test:

\begin{table}[h]
\centering
\small
\begin{tabular}{ll}
$H_0:$& $C(u_1, u_2)\geq C^\diamond(u_1, u_2)$, \\
$H_1:$& $C(u_1, u_2)<C^\diamond(u_1, u_2)$,
\end{tabular}
\end{table}
\noindent with the following test statistic: 
\begin{equation} \nonumber
\frac{\hat{C}(u_1, u_2)-C^\diamond(u_1, u_2)}{\text{error}} \begin{smallmatrix}
H_0 \\ \sim \\
\end{smallmatrix}
N(0,1),
\end{equation}
and where the p-value is the surface on the left of this test statistic.

Since for a given a setup, the test has to be run over a grid of 100 points, 100 p-values is obtained for each of the $9^3$ setups. To summarize all these p-values, the average p-value is calculated for each setup.

\begin{table}[]
\centering
\begin{tabular}{c}

\includegraphics[width=1\textwidth]{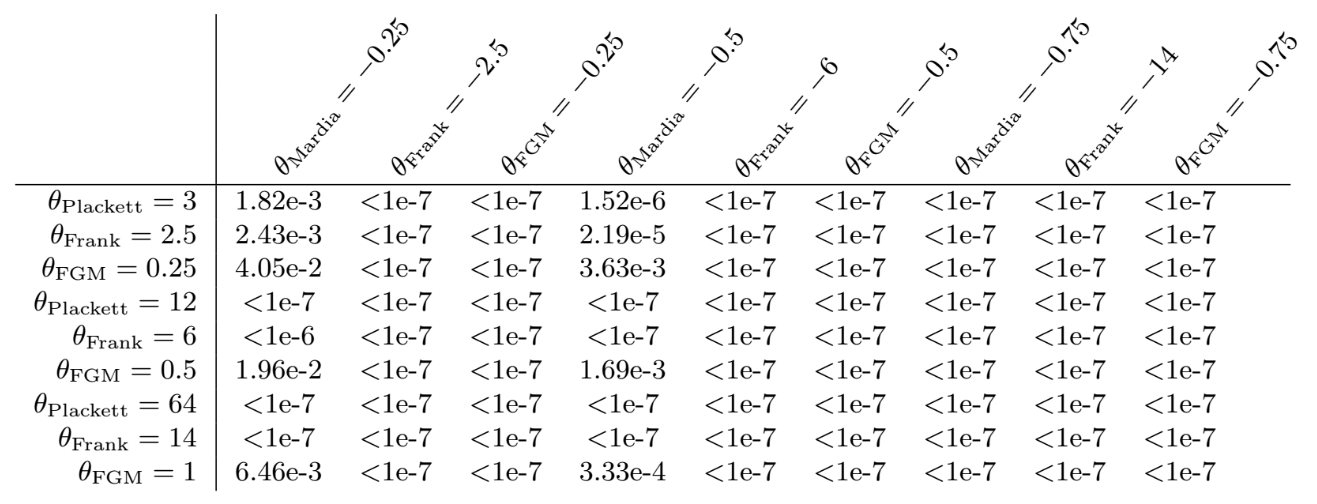}
\end{tabular}
\caption{Averaged p-values for various setups when the inner copula is a Frank copula with a parameter's value of 14. Rows are the various $C_1$ copulas, while columns describe the $C_2$ copulas. \label{Frank14inner}}
\end{table}

\textbf{Results:} For all setups except the one using as inner copula a Frank copula with a parameter's value of 14, the average p-value was less than 1e-7. Results for the inner Frank copula with a parameter's value of 14 are reported in Table \ref{Frank14inner}, where the rows are the various $C_1$ presented earlier, while the columns are the various $C_2$ presented earlier.

\subsection{Identifiability}

A model is said to be identifiable if it is theoretically possible to learn the true values of this model's underlying parameters after obtaining an infinite number of observations from it. Mathematically, this is equivalent to saying that different values of the parameters must generate different probability distributions of the observable variables. That is, if $(P_{\boldsymbol{\theta}} : \boldsymbol{\theta} \in \boldsymbol{\Theta})$ is a statistical model, with $P_{\boldsymbol{\theta}}$ a probability measure on a fixed space, the model is identifiable if $\boldsymbol{\theta}_1 \ne \boldsymbol{\theta}_2$ implies that $P_{\boldsymbol{\theta}_1} \ne P_{\boldsymbol{\theta}_2}$. 

Models built using Equations \eqref{OFC:finaleq}, \eqref{EOFC:finaleq} or \eqref{NEOFC:finaleq} are not necessarily identifiable.

\begin{example} \label{fgmidentifiability}
In Equation \eqref{OFC:finaleq}, let $d=2$ and $C_{1}, C_{2}$ be Farlie-Gumbel-Morgenstern copulas, that is,
\begin{equation*}
C_{i}(u_i,u_0;\theta_i)=u_i u_0 + \theta_i u_i u_0 (1-u_0) (1-u_i),
\end{equation*}
with $\theta_i \in [1, -1]$. 
The outer copula can be showed to be
\begin{equation*}
C(u_1, u_2;\theta_1,\theta_2)=\frac{u_1 u_2}{3}  (\theta_1 \theta_2(u_1-1)(u_2-1)+3)=u_1u_2+\frac{\theta_1 \theta_2}{3} u_1 u_2 (1-u_1)(1-u_2).
\end{equation*}
Thus, one can see that
\begin{equation*}
C(u_1, u_2;\theta_1,\theta_2)=C(u_1, u_2;\theta_1',\theta_2')
\end{equation*}
whenever $\theta_1\theta_2=\theta_1'\theta_2'$, and the last equation can be satisfied even if $(\theta_1,\theta_2)\ne(\theta_1',\theta_2')$.
\end{example}

\begin{example} \label{exnonident}
In Equation \eqref{OFC:finaleq}, let $C_i$ be the independance copula for $i \in \{2, \ldots, d\}$.
Then, the outer copula is
\begin{equation*}
C(\boldsymbol{u};\theta_1)=\int_0^1 \frac{\partial C_1(u_1, u_0 ; \theta_1)}{\partial u_0} \,u_2 \ldots u_d\, du_0,
\end{equation*}
which can be simplified to
\begin{equation*}
C(\boldsymbol{u};\theta_1)=u_2 \ldots u_d \bigg(C_1(u_1, 1;\theta_1)-C_1(u_1, 0;\theta_1)\bigg)=u_1 \ldots u_d.
\end{equation*}
The model is thus not identifiable: all values of $\theta_1$ lead to the same model, the independence copula.
\end{example}

\begin{example} \label{normalidenti}
In Equation \eqref{EOFC:finaleq}, let $d=2$, $C_1$ be a Gaussian copula with correlation $\Delta_1$ and $C_2$ be a Gaussian copula with correlation $\Delta_2$. Moreover, let $C^\diamond_{Q_0(u_0)}$ be a Gaussian copula with correlation $\rho_A$. Then, it can be shown that the outer copula is a Gaussian copula with correlation 

\begin{equation} \nonumber
\rho_A \sqrt{1-\Delta_1^2} \sqrt{1-\Delta_2^2} + \Delta_1 \Delta_2.
\end{equation}

Even in the case where $\Delta_1=\Delta_2=\Delta$, this becomes

\begin{equation} \nonumber
\rho_A \big(1-\Delta^2\big) + \Delta^2,
\end{equation}
for which one can easily find $(\rho_A, \Delta)$ and $(\rho_A', \Delta')$, $(\rho_A, \Delta) \neq (\rho_A', \Delta')$ such that $\rho_A \big(1-\Delta^2\big) + \Delta^2=\rho_A' \big(1-(\Delta')^2\big) + (\Delta')^2$. Refer to Section \ref{modelsexploration} for more details on Gaussian copula built using Equation \eqref{EOFC:finaleq}.
\end{example}

In order to shed more light on identifiability issues, a few case studies based on the Fisher information matrix are presented. The Fisher information matrix can indeed be used to check if a model is locally identifiable, see \cite{rothenberg1971identification} or \cite{iskrev2010evaluating}. If the model is not locally identifiable all over its parameter space, then it is not, in general, identifiable in that parameter space. Let $c(\boldsymbol{u};\boldsymbol{\theta})$ be the density of an EOFC. Then, the related Fisher information is defined as 

\begin{equation} \label{Fisher}
\mathcal{I}(\boldsymbol{\theta})=E\bigg(\frac{\partial \log (c(\boldsymbol{u};\boldsymbol{\theta}))}{\partial \boldsymbol{\theta}} \frac{\partial \log(c(\boldsymbol{u};\boldsymbol{\theta}))}{\partial \boldsymbol{\theta}^T}\bigg).
\end{equation}

Checking if a model is locally identifiable everywhere in $\boldsymbol{\Theta}$ is the same as checking where in $\boldsymbol{\Theta}$ this matrix is singular \citep{iskrev2010evaluating}.

Numerical computation of this matrix at a point $\boldsymbol{\theta}_r \in \boldsymbol{\Theta}$ can be done in various ways. First, one can generate $n$ observations $\boldsymbol{u}_1, \ldots, \boldsymbol{u}_n$ from the target model at point $\boldsymbol{\theta}_r$ using Algorithm \ref{data_gen} and then compute 

\begin{equation} \nonumber
\frac{1}{n}\sum_{i=1}^{n}\bigg(\frac{\partial \log (c(\boldsymbol{u}_i;\boldsymbol{\theta}))}{\partial \boldsymbol{\theta}} \frac{\partial \log(c(\boldsymbol{u}_i;\boldsymbol{\theta}))}{\partial \boldsymbol{\theta}^T}\bigg)\biggr\rvert_{\boldsymbol{\theta}=\boldsymbol{\theta}_r},
\end{equation}
where  $\frac{\partial \log (c(\boldsymbol{u}_i;\boldsymbol{\theta}))}{\partial \boldsymbol{\theta}}\bigr\rvert_{\boldsymbol{\theta}=\boldsymbol{\theta}_r}$  is numerically approached. 

So, for instance, if $\boldsymbol{\theta}_r=(\theta_{1r}, \theta_{2r}, \ldots)^T$, we have

\begin{equation} \nonumber
\frac{\partial \log(c(\boldsymbol{u}_i;\theta_1))}{\partial \theta_1}\bigr\rvert_{\theta_1=\theta_{1r}}\approx\frac{\log(c(\boldsymbol{u}_i;\theta_{1r}+h))-\log(c(\boldsymbol{u}_i;\theta_{1r}-h))}{2h},
\end{equation}
for $h$ set to an arbitrary small value.

Another approach to compute the Fisher information in \eqref{Fisher} is through numerical integration. Indeed,

\begin{equation} \nonumber
\mathcal{I}(\boldsymbol{\theta}_r)=\int_{[0,1]^d} \bigg(\frac{\partial \log (c(\boldsymbol{u};\boldsymbol{\theta}))}{\partial \boldsymbol{\theta}} \frac{\partial \log(c(\boldsymbol{u};\boldsymbol{\theta}))}{\partial \boldsymbol{\theta}^T}\bigg)\bigr\rvert_{\boldsymbol{\theta}=\boldsymbol{\theta}_r} c(\boldsymbol{u};\boldsymbol{\theta}_r) d\boldsymbol{u},
\end{equation}
where  $\frac{\partial \log (c(\boldsymbol{u};\boldsymbol{\theta}))}{\partial \boldsymbol{\theta}}\bigr\rvert_{\boldsymbol{\theta}=\boldsymbol{\theta}_r}$ has to be numerically approached, too.

Note that in both approaches, the computation of $c(\boldsymbol{u};\boldsymbol{\theta}_r)$ itself requires numerical integration, which is performed, in this paper, using adaptative multidimensional integration \citep{cubaR} or naive Monte Carlo integration.

\textbf{Case study 1 to 3.} Let $C_1=C_2$ be a Frank copula with a parameter's value such that $\tau_{\text{Frank}}=0.25, 0.5$ and 0.75. The inner copula in all three cases is a normal copula with correlation $\theta$. The model turned out to be locally identifiable for each of the 3 cases. Table \ref{case_final1} displays the results for case 3.

\begin{table}[H]
\centering
 \footnotesize{
 \begin{tabular}{r|cccccccccc}
 $\theta$ & 0.09 & 0.18 & 0.27 & 0.36 & 0.45 & 0.55 & 0.64 & 0.73 & 0.82  & 0.91  \\
 \hline
 $\hat{\mathcal{I}}(\theta)$      & 0.595 & 0.745 & 0.949 & 1.246 & 1.698 & 2.438 & 3.908 &  6.762 & 15.617 & 58.255 \\

 \end{tabular}}
 \caption{The Fisher information as a function of $\theta$ for case study 3. \label{case_final1}}
 \end{table}

\textbf{Case study 4.} In this case study, the inner copula is the independence one, $C_2$ is a Gumbel copula with a fixed parameter value corresponding to a moderate dependence and $C_1$ is the copula with the parameter of interest, $\theta_{\text{Gumbel}}$. Table \ref{case9} shows a rapid decrease of the Fisher information as the parameter of interest increases.

\begin{table}[H]
\centering
\footnotesize{
\begin{tabular}{r|cccccccccc}
$\tau(\theta_{\text{Gumbel}})$ & 0.09 & 0.18 & 0.27 & 0.36 & 0.45 & 0.55 & 0.64 & 0.73 & 0.82  & 0.91  \\
\hline
 $\hat{\mathcal{I}}(\theta_{\text{Gumbel}})$          & 1.30 & 0.73 & 0.43 & 0.24 & 0.13 & 0.06 & 0.02 & <1e-2 & <1e-3 & <1e-5
\end{tabular}}
\caption{The Fisher information for case study 4. \label{case9}}
\end{table}

\textbf{Case study 5.} In this last case study, $C_1$ is the independence copula, $C_2$ is the Frank copula with parameter $\theta_{\text{Frank}}$ and the inner copula is the normal copula with parameter $\theta_{\text{normal}}$. The Fisher information is this times a $2 \times 2$ matrix. Figure \ref{last_case} shows the determinant of the matrix as a function of $\theta_{\text{Frank}}$, expressed as a Kendall's tau for convenience, and $\theta_{\text{normal}}$.

\begin{figure}[H]
\centering
\begin{tabular}{c}

\includegraphics[width=0.55\textwidth]{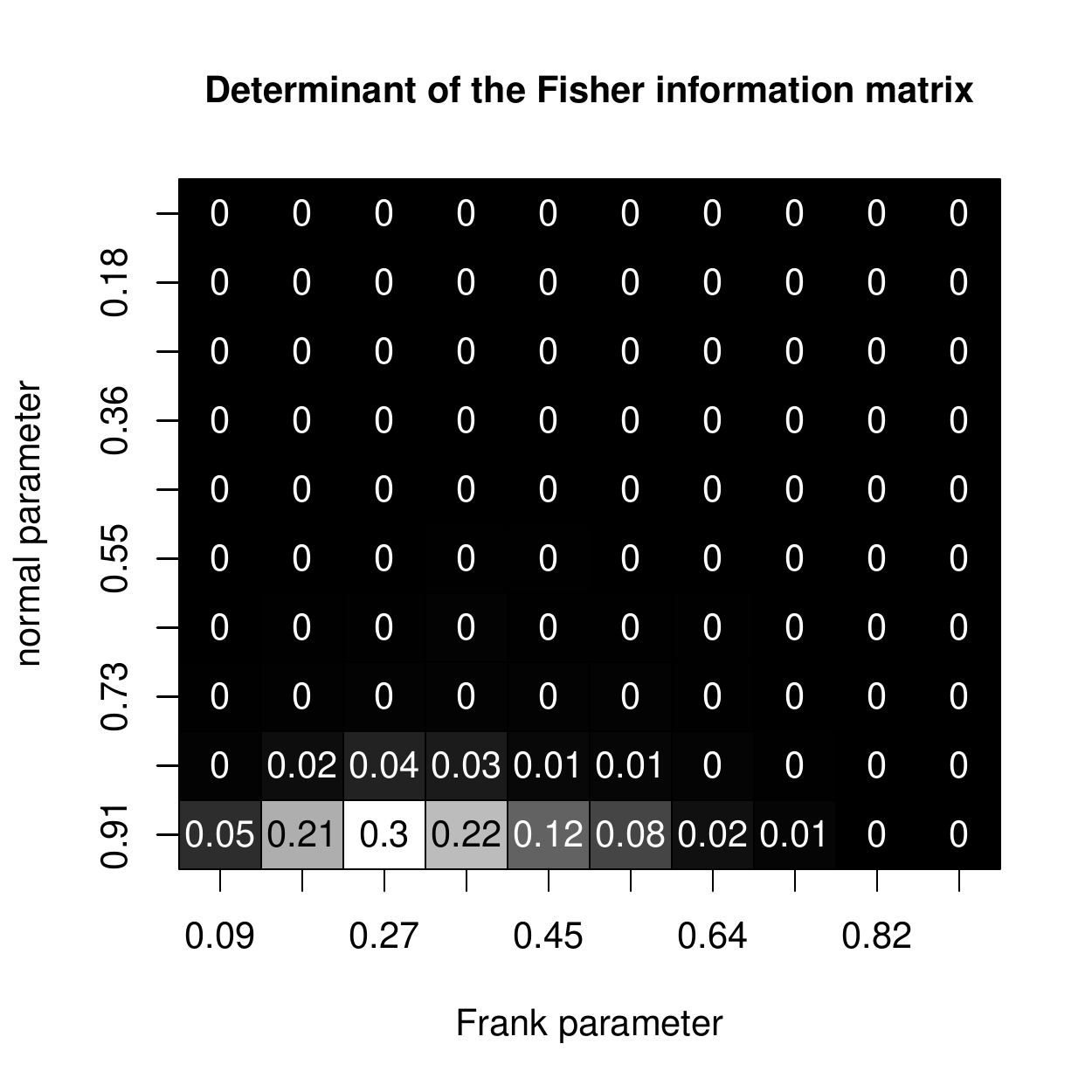}
\end{tabular}
\caption{The determinant of the Fisher information matrix for case study 5. \label{last_case}}
\end{figure}

As one can see, the model is locally identifiable in a narrow part of the space of the parameters, picking at around (0.91, 0.27).

\section{Data generation}

To generate one realization $(u_1,\dots,u_d)$ of the random vector $(U_1,\dots,U_d)$ with distribution $C$ given by \eqref{eq:representation 2}, one takes the second part of $(u_0,u_1,\dots,u_d)$, from $u_1$ to $u_d$, a realization of $(U_0,U_1,\dots,U_d)$, where $U_0$ is the latent factor. Remembering that, given $U_0=u_0$, the distribution of $(U_1,\dots,U_d)$ can be split into the inner copula $C_{u_0}^\diamond$ and a set of univariate margins $\{C_{i\vert 0}(u_i\vert u_0)\}=\{\frac{\partial C_i(u_i, u_0)}{\partial u_0}\}$, with $C_{i \vert 0}^{-1}(\bullet\vert u_0)$ denoting the inverse function, $i \in \{1,\dots,d\}$, the following algorithm produces the desired output.
\begin{algorithm}[H]
\caption{Generating one observation from \eqref{eq:representation 2}. \label{data_gen}}
\begin{algorithmic}[1]
\State Generate one observation $u_0$ from a standard uniform random variable.
\State Generate one observation $(u^\diamond_{1}, \ldots, u^\diamond_{d})$ from $C_{u_0}^\diamond$.
\State Put $u_i=C_{i \vert 0}^{-1}(u^\diamond_{i}\vert u_0)$ for $i=1,\dots,d$.
\end{algorithmic}
\end{algorithm}
\noindent Note that, in the presence of conditional invariance, step 1 in the above algorithm is not required for step 2. Needless to say, in the first step, one could have sampled from $F_0$, the distribution of $X_0$, and in the second step, one would have sampled from $C_{x_0}$ instead of $C_{u_0}^\diamond$. 

Data generation for a NEOFC can be performed using a generalized version of Algorithm \ref{data_gen}. The main hurdle for this generalization is to define $G_i^{-1}$, such that 
\begin{equation*}
G_i^{-1}(G_i(u_i ; t_w, \ldots t_1) ; t_w, \ldots, t_1)=u_i, 
\end{equation*}
or, in short, such that $G_i^{-1}(G_i(u_i))=u_i$.

Recall that
\begin{equation} \nonumber
G_i(u_i ; t_w, \ldots, t_1)=H_{iw}^{t_w} \circ \dots \circ H_{i1}^{t_1}(u_i)
\end{equation}
and that $H_{ij}^{t_j}(u_i)=\frac{\partial C_{ij}(u_i, t_j)}{\partial t_j}$.

Now define $H_{ij}^{-1,t_j}$ as the inverse function of $H_{ij}^{t_j}$, that is, 
\begin{equation} \nonumber
H_{ij}^{-1,t_j}(H_{ij}^{t_j}(u_i))=u_i.
\end{equation}

This is the same as writing
\begin{equation} \nonumber
H_{ij}^{-1,t_j}\circ H_{ij}^{t_j}(u_i)=u_i.
\end{equation}

Since $G_i$ is just a composition of the form $H_{iw}^{t_w} \circ \dots \circ H_{i1}^{t_1}(u_i)$, we can extract back $u_i$ using the composition $H_{i1}^{-1,t_1} \circ \dots \circ H_{iw}^{-1,t_w}(\bullet)$. The inverse of $G_i$ is therefore:
\begin{equation}
G_i^{-1}(\bullet)=H_{i1}^{-1,t_1} \circ \dots \circ H_{iw}^{-1,t_w}(\bullet).
\end{equation}


Now that $G_i^{-1}$ is defined, Algorithm \ref{data_gen_NEOFC}, the generalized version of Algorithm \ref{data_gen}, can be given.

\noindent {\bf Input.} A NEOFC, that is, a $d$-variate copula $C_{t_w}^{\diamond_w}$, the related mappings, $d\times w$ bivariate copulas $\{C_{ij}\}$, and the related parameters.

\noindent {\bf Output.} One observation from the input NEOFC, as a $d$-dimensional vector.

\begin{algorithm}[H]
\caption{Generating one observation from a NEOFC. \label{data_gen_NEOFC}}
\begin{algorithmic}[1]
\State Generate $w$ observations from a uniform random variable on [0,1]. Denote these values as $t_1, \ldots, t_w$.
\State Generate one observation from $C_{t_w}^{\diamond_w}$. Denote the resulting vector as $u_1^{\diamond_w} \ldots u_d^{\diamond_w}$.

\State On each $u_i^{\diamond_w}$ in $\{u_1^{\diamond_w}, \ldots, u_d^{\diamond_w}\}$, run the corresponding function $G_i^{-1}(\bullet ; t_w, \ldots, t_1)$. Denote the result as $u_i$.
\end{algorithmic}
\end{algorithm}
The end result of the last step of Algorithm \ref{data_gen_NEOFC} is a vector $(u_1, \ldots, u_d)$, one observation from the input NEOFC. Run Algorithm \ref{data_gen_NEOFC} multiple times to get multiple observations.

Critical to both algorithms in this section is the inversion of $\frac{\partial C_{i}(u_i, u_0)}{\partial u_0}$ with respect to $u_i$ in the case of an EOFC or of $\frac{\partial C_{ij}(u_i, t_j)}{\partial t_j}$ with respect to $u_i$ in the case of a NEOFC. This is usually not hard to get symbolically. In the worst cases, $C_{i \vert 0}^{-1}$ or $H_{ij}^{-1,t_j}$ can be get numerically.


\section{Models} \label{modelsexploration}
In this section, several models built from Equation \eqref{eq:representation 2} or Equation \eqref{NEOFC} are presented, including many well-known copula models from the literature.

\begin{paragraph}{Krupskii-Huser-Genton copulas.}
In their paper, \cite{krupskiicopulas} offer a new family of one-factor copulas that can be used to model replicated spatial data. Let $C_A$ be a $d$-variate Gaussian copula with correlation matrix $A=\{\rho_{ij}\}$. Let 
\begin{equation} \nonumber
F_\bullet(x_i)=\int_{-\infty}^{+\infty}\Phi(x_i-x_0)dF_0(x_0),
\end{equation}
where $F_0$ is some univariate CDF, while $\Phi$ is the CDF of the univariate standard Gaussian distribution. Then, the KHG family of copulas corresponds to an EOFC where $C_{Q_0(u_0)}^\diamond=C^\diamond=C_A$ and $\frac{\partial C_i(u_i, u_0)}{\partial u_0}=\Phi(F_\bullet^{-1}(u_i)-F_0^{-1}(u_0))$, $\forall i \in \{1, \ldots d\}$.

As shown by \cite{krupskiicopulas}, tail dependence and tail asymmetry  for this model is possible, based on the choice of $F_0$ and $A$.

Let $C^{[i,j]}(u_i, u_j)$, $i \neq j$, be a bivariate margin of $C$, the outer copula. Recall that the lower tail dependence of a bivariate copula, such as $C^{[i,j]}(u_i, u_j)$, is \citep{hartmann2004asset, mcneil2015quantitative}:

\begin{equation} \nonumber
\lambda^L_{ij}=\lim_{q\rightarrow0}\frac{C^{[i,j]}(q, q)}{q},
\end{equation}
while the upper one is defined as
\begin{equation} \nonumber
\lambda^U_{ij}=\lim_{q\rightarrow0}\frac{2q-1+C^{[i,j]}(1-q, 1-q)}{q}.
\end{equation}

As an example, if
\begin{equation} \nonumber
F_0(x_0)=1-K x_0^\beta e^{-\theta x_0^\alpha},
\end{equation}
\cite{krupskiicopulas} showed that for any bivariate margin $C^{[i,j]}(u_i, u_j)$ of $C$, as long as $\theta, K >0$, $0<\alpha<1$ and $\beta \in \mathbb{R}$, or as long as $\theta, K >0$, $\alpha=0$ and $\beta <0$, 

\begin{equation} \nonumber
\lambda^U_{ij}=1.
\end{equation}

A more interesting case however arises if $\theta, K >0$, $\alpha=1$ and $\beta \in \mathbb{R}$. Then, 
\begin{equation} \nonumber
\lambda^U_{ij}=2 \Phi \bigg[-\theta \sqrt{\frac{1-\rho_{ij}}{2}} \bigg], 
\end{equation}
meaning that tail asymmetry becomes possible through the correlation matrix $A=\{\rho_{ij}\}$.

The upper tail dependence coefficient can also be made such that $\lambda^U_{ij}=0$: one need to let $\theta, K >0$, $\alpha>1$ and $\beta \in \mathbb{R}$.

For more examples and details, refer to \cite{krupskiicopulas}.
\end{paragraph}

\begin{paragraph}{Farlie-Gumbel-Morgenstern copulas.} Let the inner copula of a bidimensional EOFC be such that $C^\diamond_{Q_0(u_0)}=\Pi$, while $C_1, C_2$ are Farlie-Gumbel-Morgenstern copulas with parameters $\theta_1$ and $\theta_2$. Then, the resulting outer copula is a FGM copula with parameter $\frac{\theta_1 \theta_2}{3}$. Although a bit tedious, the proof is trivial. Also refer to Example \ref{fgmidentifiability}. 

Note that in a bivariate FGM with parameter $\theta \in [-1, 1]$, the related Kendall's tau is $\tau=\frac{2\theta}{9}$ \citep[p. 162]{nelsen_introduction_2006}. This means that even if the parameters $\theta_1$ and $\theta_2$ of $C_1$ and $C_2$ are set to their upper limit, the parameter of the resulting FGM will be 1/3, corresponding to a rather low Kendall's tau of 0.074. The model where $C^\diamond_{Q_0(u_0)}=\Pi$, while $C_1, C_2$ are FGM copulas, is thus trapped between a correlation of -0.074 and 0.074 only. This can be fixed using EOFCs.

In an EOFC, the inner copula can be viewed as the baseline level of dependence. Instead of letting $C^\diamond_{Q_0(u_0)}=\Pi$, one could let $C^\diamond_{Q_0(u_0)}=M$, where $M$ is the upper Fréchet-Hoeffding bound. In this scheme, the two FGM copulas $C_1, C_2$ can only be used to decrease the dependence between the two random variables of interest, $U_1$ and $U_2$. Figure \ref{fgmfig} shows the empirical Kendall's tau calculated (sample size = 1000) for various combinations of $\theta_1$ and $\theta_2$ when the inner copula is $M$. As one can see, the improved model is now trapped between a correlation of $\sim0.55$ and 1. This range can thus be tuned by changing the inner copula.

\begin{figure}
\centering
\begin{tabular}{c}

\includegraphics[width=0.77\textwidth]{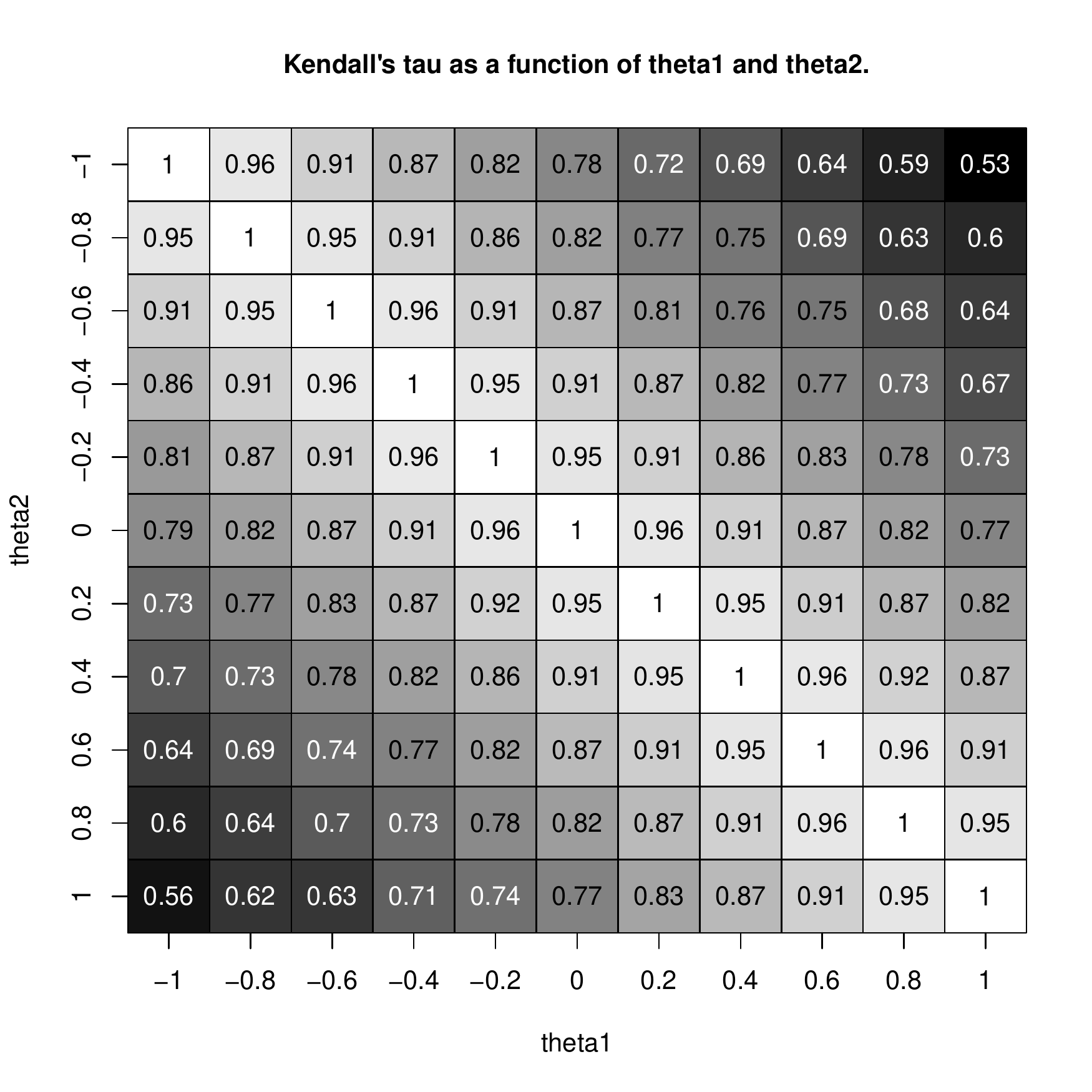}
\end{tabular}
\caption{Empirical Kendall's tau for an EOFC where the inner copula is $M$ and $C_1, C_2$ are FGM copulas, described through $\theta_1 \in [-1, 1]$ and $\theta_2 \in [-1, 1]$. \label{fgmfig}}
\end{figure}
\end{paragraph}

\begin{paragraph}{Archimedean copulas.}
Let $\psi$ be a  completely monotonic function on $[0,\infty]$, that is,
$(-1)^kd^k/dt^k\psi(t)\ge 0$ for all integers $k$ and all $t>0$, and such that $\psi(0)=1$ while
\begin{equation*}
\psi(\infty)=\lim_{t\to\infty}\psi(t)=0. 
\end{equation*}

If a copula $C$ can be written as 
\begin{equation*}
C(u_1,\dots,u_d)=\psi(\psi^{-1}(u_1)+\dots+\psi^{-1}(u_d)),
\end{equation*}
then it is called an Archimedean copula with generator $\psi$ \citep{doi:10.1080/00949650701255834}. Let us note that, in order to make sure $C$ is a proper copula, the above-mentioned conditions on $\psi$ are sufficient, but not necessary. For sufficient and necessary conditions, see \cite{2009arXiv0908.3750M}.

\begin{proposition}
\label{prop:archimedean}
In \eqref{eq:representation 2}, let $C_{x_0}=\Pi$, assume that the support of $X_0$ is $[0,\infty]$, and put
\begin{equation*}
C_{i}(u_i,u_0)=\int_0^{Q_0(u_0)} e^{-t\psi^{-1}(u_i)} f_0(t) \, dt, 
\end{equation*}
where
\begin{equation*}
\psi(x)=\int_0^\infty e^{-t x} f_0(t) \, dt,
\end{equation*}
$i\in \{1,\dots,d\}$, with $f_0$ being the derivative of $F_0$ and $Q_0$ its inverse, with $Q_0(u_0)=x_0$. It can be checked that $\psi$ is completely monotonic, see for instance \cite{joe_multivariate_2001}. Then $C$, the left-hand side of equation \eqref{eq:representation 2}, is an Archimedean copula with generator $\psi$.
\end{proposition}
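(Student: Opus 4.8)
The plan is to pass to the factor-centred form \eqref{eq:representation 1} of the construction (equivalent to \eqref{eq:representation 2} by the change of variable $u_0=F_0(x_0)$), since the Archimedean structure is most transparent there, and then to reduce everything to the three facts $\psi(0)=1$, $\psi(\infty)=0$, and $\psi(x)=\int_0^\infty e^{-tx}f_0(t)\,dt$. The first step is to compute the conditional margins. Writing $x_0=Q_0(u_0)$, so that $\partial/\partial u_0=f_0(x_0)^{-1}\,\partial/\partial x_0$, the fundamental theorem of calculus gives
\begin{align*}
P(U_i\le u_i\mid X_0=x_0)
&=\frac{\partial C_i(u_i,u_0)}{\partial u_0}\bigg|_{u_0=F_0(x_0)} \\
&=\frac{1}{f_0(x_0)}\,\frac{\partial}{\partial x_0}\int_0^{x_0} e^{-t\psi^{-1}(u_i)}f_0(t)\,dt
=e^{-x_0\psi^{-1}(u_i)}.
\end{align*}

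The second step is the substitution. Since $C_{x_0}=\Pi$, plugging the above into \eqref{eq:representation 1} and collapsing the product into the exponent gives
\begin{align*}
C(u_1,\dots,u_d)
&=\int_0^\infty \prod_{i=1}^d e^{-x_0\psi^{-1}(u_i)}\,f_0(x_0)\,dx_0 \\
&=\int_0^\infty e^{-x_0\sum_{i=1}^d\psi^{-1}(u_i)}\,f_0(x_0)\,dx_0
=\psi\!\left(\sum_{i=1}^d\psi^{-1}(u_i)\right),
\end{align*}
which is precisely the Archimedean copula with generator $\psi$, by the defining formula for $\psi$.

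The third step is to check the construction is well posed, i.e. that each $C_i$ is a genuine bivariate copula, so that the representation applies, and that the right-hand side is a legitimate $d$-copula. Using that the support of $X_0$ is $[0,\infty]$, so $Q_0(0)=0$ and $Q_0(1)=\infty$, uniformity of the first margin is $C_i(u_i,1)=\int_0^\infty e^{-t\psi^{-1}(u_i)}f_0(t)\,dt=\psi(\psi^{-1}(u_i))=u_i$, and uniformity of the second margin is $C_i(1,u_0)=\int_0^{Q_0(u_0)}f_0(t)\,dt=u_0$, using $\psi^{-1}(1)=0$ (a consequence of $\psi(0)=1$); the boundary conditions $C_i(0,u_0)=C_i(u_i,0)=0$ follow from $\psi^{-1}(0)=\infty$ and $Q_0(0)=0$, and $2$-increasingness is immediate since $\partial^2 C_i/\partial u_i\partial u_0\ge 0$. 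Finally, complete monotonicity of $\psi$ together with $\psi(0)=1,\psi(\infty)=0$ guarantees that $\psi(\sum_i\psi^{-1}(u_i))$ is a copula in every dimension. The only real care needed is the bookkeeping around the change of variables between $u_0$ and $x_0$ and the behaviour of the improper integral at $x_0=\infty$ and at the endpoints $u_i\in\{0,1\}$; both interchanges and limits are controlled because the integrand is dominated by $f_0$, so this is the main (but mild) obstacle, everything else being a direct consequence of the Laplace-transform identity.
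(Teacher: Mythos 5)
Your proposal is correct and follows essentially the same route as the paper: compute $\partial C_i(u_i,u_0)/\partial u_0=e^{-Q_0(u_0)\psi^{-1}(u_i)}$, substitute into the representation with $C_{x_0}=\Pi$, and recognize the resulting integral $\int_0^\infty e^{-x_0\sum_i\psi^{-1}(u_i)}f_0(x_0)\,dx_0$ as $\psi\big(\sum_i\psi^{-1}(u_i)\big)$. The only difference is that you spell out the verification that each $C_i$ is a bivariate copula, which the paper dismisses as straightforward; your margin and 2-increasingness checks are correct.
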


\begin{proof} 
Checking that $C_{i}$ is a copula is straightforward. Moreover, since
\begin{align*}
  \frac{\partial C_{i}(u_i,u_0)}{\partial u_0}
  =e^{-Q_0(u_0)\psi^{-1}(u_i)},
\end{align*}
it holds that
\begin{align*}
  C(u_1,\dots,u_d)=\int_0^\infty e^{-x_0 \, \sum_{i=1}^d \psi^{-1}(u_i)}
  f_0(x_0) \, dx_0
  =\psi \Big(\sum_{i=1}^d \psi^{-1}(u_i) \Big).
\end{align*}
\end{proof}

Note that a reformulation of the above construction can be found in \cite{joe_multivariate_2001}.
\end{paragraph}

\begin{paragraph}{Hierarchical Archimedean copulas.}
Archimedean copulas can be nested in order to get more flexible models. Hierarchical or Nested Archimedean copulas (HACs or NACs) were introduced by \cite{joe_multivariate_2001} and have been the main topic of many research papers since, see for instance \cite{doi:10.1080/00949650701255834}, \cite{hofert2013densities}, or \cite{okhrinOstap2013properties}. The simplest hiearchical Archimedean copula consists of a bivariate Archimedean copula 
\begin{equation*}
C_{12}(u_1, u_2)=\psi_{12}(\psi^{-1}_{12}(u_1)+\psi^{-1}_{12}(u_2)),
\end{equation*}
which is nested into another bivariate Archimedean copula 
\begin{equation*}
C_{123}(\bullet, u_3)=\psi_{123}(\psi^{-1}_{123}(\bullet)+\psi^{-1}_{123}(u_3))
\end{equation*} 
in order to get a copula of the form
\begin{multline} \label{nac}
C(u_1, u_2, u_3)=C_{123}(C_{12}(u_1, u_2), u_3) \\ =\psi_{123}\big(\psi_{123}^{-1}(\psi_{12}(\psi_{12}^{-1}(u_1) +\psi_{12}^{-1}(u_2)))+ \psi_{123}^{-1}(u_3)\big).
\end{multline}
In general, an arbitrary pair of generators $(\psi_{123}, \psi_{12})$ does not ensure that the copula in Equation \eqref{nac} is a proper copula. See \cite*{joe_multivariate_2001} and \cite*{doi:10.1080/00949650701255834} for more on this matter.

\begin{proposition}
\label{prop:narchimedean}
Define $\psi_{123}$ the same way $\psi$ was defined in Proposition \ref{prop:archimedean}. Also let $C_{i}$ as in Proposition \ref{prop:archimedean}. Further define
\begin{align*}
C_{x_0}(u, v, w)=\exp \Bigg(-x_0\times \nu \bigg(\nu^{-1}\Big[\frac{1}{x_0}\log \Big(\frac{1}{u}\Big)\Big]+\nu^{-1}\Big[\frac{1}{x_0}\log \Big(\frac{1}{v}\Big)\Big]\bigg)\Bigg) \times w;\\
\end{align*}
where $\nu(\bullet)=\psi_{123}^{-1}\big(\psi_{12}(\bullet)\big)$, $\nu(\bullet)^{-1}=\psi_{12}^{-1}\big(\psi_{123}(\bullet)\big)$ and $\psi_{12}(\bullet)$  is equal to the integral from $0$ to $\infty$ of $\exp(-t\bullet)dF_{12}(t)$ with $F_{12}$ some distribution function.
Then \eqref{eq:representation 2} is the copula given in \eqref{nac}.

\end{proposition}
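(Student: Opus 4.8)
The plan is to specialize \eqref{eq:representation 2} to $d=3$ and evaluate the integral directly, exploiting the fact that the inner copula $C_{x_0}$ has been rigged so that, once the chosen linking copulas $C_i$ are plugged in, its arguments collapse to exponentials of generator inverses. First I would record, exactly as in the proof of Proposition \ref{prop:archimedean} (now with $\psi_{123}$ in the role of $\psi$), that the chosen $C_i$'s satisfy
\begin{equation*}
\frac{\partial C_i(u_i,u_0)}{\partial u_0}=e^{-Q_0(u_0)\,\psi_{123}^{-1}(u_i)},\qquad i=1,2,3 .
\end{equation*}
Writing $x_0=Q_0(u_0)$, I substitute these three quantities into $C^{\diamond}_{Q_0(u_0)}=C_{x_0}$; since $\tfrac{1}{x_0}\log\big(1/e^{-x_0\psi_{123}^{-1}(u_i)}\big)=\psi_{123}^{-1}(u_i)$, the arguments of the two occurrences of $\nu^{-1}[\cdot]$ become $\psi_{123}^{-1}(u_1)$ and $\psi_{123}^{-1}(u_2)$, while the factor $w$ equals $e^{-x_0\psi_{123}^{-1}(u_3)}$.

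Next I would use the stated identities $\nu(\bullet)=\psi_{123}^{-1}(\psi_{12}(\bullet))$ and $\nu^{-1}(\bullet)=\psi_{12}^{-1}(\psi_{123}(\bullet))$. Because $\nu^{-1}(\psi_{123}^{-1}(u_i))=\psi_{12}^{-1}(\psi_{123}(\psi_{123}^{-1}(u_i)))=\psi_{12}^{-1}(u_i)$, the argument of the outer $\nu$ is just $\psi_{12}^{-1}(u_1)+\psi_{12}^{-1}(u_2)$, so
\begin{equation*}
x_0\,\nu\big(\nu^{-1}[\psi_{123}^{-1}(u_1)]+\nu^{-1}[\psi_{123}^{-1}(u_2)]\big)=x_0\,\psi_{123}^{-1}\!\big(\psi_{12}(\psi_{12}^{-1}(u_1)+\psi_{12}^{-1}(u_2))\big)=:x_0\,A(u_1,u_2),
\end{equation*}
and hence the integrand of \eqref{eq:representation 2} is $e^{-x_0 A(u_1,u_2)}\cdot e^{-x_0\psi_{123}^{-1}(u_3)}=\exp\big(-x_0[A(u_1,u_2)+\psi_{123}^{-1}(u_3)]\big)$.

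Finally I would integrate over $u_0\in(0,1)$ and change variables via $x_0=Q_0(u_0)$, so that $du_0=f_0(x_0)\,dx_0$ and $x_0$ sweeps the support $[0,\infty]$ of $X_0$; recognising the Laplace transform that defines $\psi_{123}$ gives
\begin{equation*}
C(u_1,u_2,u_3)=\int_0^{\infty}e^{-x_0[A(u_1,u_2)+\psi_{123}^{-1}(u_3)]}f_0(x_0)\,dx_0=\psi_{123}\big(A(u_1,u_2)+\psi_{123}^{-1}(u_3)\big),
\end{equation*}
and substituting the definition of $A(u_1,u_2)$ yields exactly the hierarchical Archimedean copula \eqref{nac}.

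The genuinely delicate point is not this algebra but the implicit well-posedness. For the right-hand side \eqref{nac} to be a bona fide copula one must assume the usual nesting condition on the pair of generators, namely that $\nu=\psi_{123}^{-1}\circ\psi_{12}$ has a completely monotone derivative; under the same condition the stated $C_{x_0}$ is a genuine $3$-copula for each fixed $x_0$ --- it factors as a bivariate copula in $(u,v)$ (precisely the conditional copula of the first two coordinates given $X_0=x_0$ in the mixture representation of \eqref{nac}) times the independence copula in $w$, and the boundary conditions $C_{x_0}(u,1,1)=u$, etc., follow from $\nu^{-1}(0)=0$ and $\nu(\infty)=\infty$. Consistently with the paper's view of \eqref{eq:representation 1}--\eqref{eq:representation 2} as a construction device rather than a theorem, I would flag this hypothesis and then carry out the computation above.
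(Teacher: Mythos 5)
Your argument is correct and follows essentially the same route as the paper's own proof: you use the identity $\partial C_i(u_i,u_0)/\partial u_0=e^{-Q_0(u_0)\psi_{123}^{-1}(u_i)}$ from Proposition \ref{prop:archimedean}, the cancellation $\nu^{-1}\big(\psi_{123}^{-1}(u_i)\big)=\psi_{12}^{-1}(u_i)$, and the change of variables $x_0=Q_0(u_0)$ to recognize the Laplace transform $\psi_{123}$, exactly as in the paper. The only difference is a verification detail: the paper establishes that $C_{x_0}$ is a proper copula by introducing the auxiliary distribution $G_{x_0}$ and citing \cite{joe_multivariate_2001}, whereas you justify it via the factorization of $C_{x_0}$ into a bivariate (conditional) copula in $(u,v)$ times independence in $w$ under the complete-monotonicity nesting condition on $\nu$ --- a reasonable alternative that makes the implicit hypothesis explicit rather than a genuinely different proof.
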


\begin{proof}
First, it is proven that $C_{x_0}$ is a copula.
Define, for $0\leq u,v,w\leq 1$,
\begin{multline*}
G_{x_0}(u, v, w)=\exp \Bigg(-x_0\times \nu \bigg(\nu^{-1}\Big[\psi_{123}^{-1}\Big(u\Big)\Big]+\nu^{-1}\Big[\psi_{123}^{-1}\Big(v\Big)\Big]\bigg)\Bigg) \\
\times \exp\Bigg(-x_0\times \psi_{123}^{-1}\bigg(w\bigg)\Bigg).
\end{multline*}
One can check that $C_{x_0}$ in Proposition \ref{prop:narchimedean} is the copula corresponding to the multivariate distribution $G_{x_0}$. And from \cite{joe_multivariate_2001}, page 88, it can be easily deduced that $G_{x_0}$ is indeed a distribution function. Therefore $C_{x_0}$ is a proper copula.

The proof then proceeds by showing that $C$ in \eqref{eq:representation 2} is a hierarchical Archimedean copula. $C$ in \eqref{eq:representation 2} becomes
\begin{align} \label{nac_simp}
  C(u_1,u_2,u_3)= 
\int_{0}^{1}&\exp \Bigg(-Q_0(u_0)\times \nu \bigg(\nu^{-1}\Big[\frac{1}{Q_0(u_0)}\log \Big(\frac{1}{C_{1\vert 0}(u_1\vert u_0)}\Big)\Big]\\
&+\nu^{-1}\Big[\frac{1}{Q_0(u_0)}\log \Big(\frac{1}{C_{2\vert 0}(u_2\vert u_0)}\Big)\Big]\bigg)\Bigg)
\times C_{3\vert 0}(u_3\vert u_0)\, du_0. \notag
\end{align}
In the proof of Proposition \ref{prop:archimedean}, it was shown that
$\frac{\partial C_i(u_i, u_0)}{\partial u_0}=C_{i\vert 0}(u_i\vert u_0)=\exp(-Q_{0}(u_0)\times \psi_{123}^{-1}(u_i))$.
Replacing in \eqref{nac_simp} gives
\begin{align*}
&\int_{0}^{1}\exp \Bigg(-Q_0(u_0)\times \nu \bigg(\nu^{-1}\Big[\psi_{123}^{-1}(u_1)\Big]+\nu^{-1}\Big[\psi_{123}^{-1}(u_2)\Big]\bigg)\Bigg)\\
&\bigqquad\bigqquad \exp(-F_{0}^{-1}(u_0)\times \psi_{123}^{-1}(u_3)) \, du_0\\
=&\int_{0}^{\infty}  \exp \Bigg( -x_0 \bigg[ \psi_{123}^{-1}\bigg(\psi_{12}\Big(\psi_{12}^{-1}(u_1)+\psi_{12}^{-1}(u_2)\Big)\bigg) + \psi_{123}^{-1}(u_3) \bigg]\Bigg) dF_0(x_0)\\
=&\,\,\psi_{123}\big(\psi_{123}^{-1}(\psi_{12}(\psi_{12}^{-1}(u_1) +\psi_{12}^{-1}(u_2)))+ \psi_{123}^{-1}(u_3)\big),
\end{align*}
which is the NAC from \eqref{nac}.
\end{proof}

\end{paragraph}

\begin{paragraph}{Hierarchical NEOFCs.}
OFCs are not suited to model hierarchical dependence. Assume that one wishes to build a one-factor copula on $(U_1, U_2, U_3, U_4)$ such that the resulting model is described by the matrix of Kendall's taus in \eqref{impcor}.

\begin{equation} \label{impcor}
\begin{matrix}
 & U_1 & U_2 & U_3 & U_4\\ 
U_1 & & 1 & 0.25 & 0.25\\ 
U_2 &  & & 0.25 & 0.25\\ 
U_3 &  &  & & 1\\ 
U_4 &  &  &  &
\end{matrix}
\end{equation}

Since $U_1$ and $U_2$ are related through the maximum value of 1, it means that their bivariate margin must be

\begin{equation} \nonumber
C^{[1,2]}(u_1, u_2)=\min(u_1, u_2)=\int_0^1 \frac{\partial \min(u_1, u_0)}{\partial u_0} \frac{\partial \min(u_2, u_0)}{\partial u_0} du_0.
\end{equation}

Similarly, the bivariate margin of $U_3$ and $U_4$ is

\begin{equation} \nonumber
C^{[3,4]}(u_3, u_4)=\min(u_3, u_4)=\int_0^1 \frac{\partial \min(u_3, u_0)}{\partial u_0} \frac{\partial \min(u_4, u_0)}{\partial u_0} du_0.
\end{equation}

Therefore, the OFC on $(U_1, U_2, U_3, U_4)$ has to be

\begin{equation} \nonumber
C(u_1, u_2, u_3, u_4)=\int_0^1 \frac{\partial \min(u_1, u_0)}{\partial u_0}\frac{\partial \min(u_2, u_0)}{\partial u_0} \frac{\partial \min(u_3, u_0)}{\partial u_0} \frac{\partial \min(u_4, u_0)}{\partial u_0} du_0.
\end{equation}

By proposition \ref{upperF} however, this corresponds to $C(u_1, u_2, u_3, u_4)=\min(u_1,\allowbreak u_2, u_3, u_4)$, for which the correlation matrix is not \eqref{impcor}. A OFC will never be able to reproduce a matrix of Kendall's taus such as the one in \eqref{impcor}. In general, OFCs do not seem to be suited to model matrices of the form

\begin{equation} \label{impcor2}
\begin{matrix}
 & U_1 & U_2 & U_3 & U_4\\ 
U_1 & & \tau_{12} & \tau_0 & \tau_0\\ 
U_2 &  & & \tau_0 & \tau_0\\ 
U_3 &  &  & & \tau_{34}\\ 
U_4 &  &  &  &
\end{matrix}
\end{equation}
in which $\tau_{12}, \tau_{34}>\tau_0$.

On the other hand, it is possible to build NEOFCs for which the matrix of Kendall's taus is as in \eqref{impcor2}.

Start from Equation \eqref{NEOFC}, and let $w=3$, that is, we have a triple layer of linking copulas. Let $C_{t_3}^{\diamond_3}=\Pi$. For $j=3$, let all linking copulas $C_{ij}$ be a Frank copula with a same parameter $\theta_3$. This layer gives the baseline level of dependence between the four random variables of interest. Next, use the second layer, $j=2$, to couple $U_1$ and $U_2$, that is, let $C_{12}$ and $C_{22}$ be a Frank copula with a common paramater $\theta_2$, while $C_{32}$ and $C_{42}$ are set to the independence copula. Finally use the first layer, $j=1$, to couple $U_3$ and $U_4$, that is, let $C_{31}$ and $C_{41}$ be a Frank copula with a common paramater $\theta_1$, while $C_{11}$ and $C_{21}$ are set to the independence copula. Table \ref{NEOFChierarchy1} allows one to visualize this scheme.

\begin{table}[H]
\centering

\begin{tabular}{rccccl}
\multicolumn{1}{l}{}       & $i=1$                      & $i=2$                      & $i=3$                      & $i=4$                      &            \\ \cline{2-5}
\multicolumn{1}{r|}{$j=1$} & \multicolumn{1}{c|}{$\Pi$} & \multicolumn{1}{c|}{$\Pi$} & \multicolumn{1}{c|}{Frank} & \multicolumn{1}{c|}{Frank} & $\theta_1$ \\ \cline{2-5}
\multicolumn{1}{r|}{$j=2$} & \multicolumn{1}{c|}{Frank} & \multicolumn{1}{c|}{Frank} & \multicolumn{1}{c|}{$\Pi$} & \multicolumn{1}{c|}{$\Pi$} & $\theta_2$ \\ \cline{2-5}
\multicolumn{1}{r|}{$j=3$} & \multicolumn{1}{c|}{Frank} & \multicolumn{1}{c|}{Frank} & \multicolumn{1}{c|}{Frank} & \multicolumn{1}{c|}{Frank} & $\theta_3$ \\ \cline{2-5}
\multicolumn{1}{r|}{$C^\diamond$} & \multicolumn{4}{c|}{$\Pi$}                                                                                        &            \\ \cline{2-5}
\end{tabular}
\caption{A hierarchical NEOFC with three layers and an inner copula set to independence. \label{NEOFChierarchy1}}

\end{table}

\begin{example}
In Table \ref{NEOFChierarchy1}, set $\theta_3$ to a value of 5.74, corresponding to a Kendall'tau of 0.5 for the related linking copulas and $\theta_2=\theta_1$ to a value of 6.73, corresponding to a Kendall's tau of 0.55 for the related linking copulas. The empirical Kendall's taus based on 10000 observations generated from the model through Algorithm \ref{data_gen_NEOFC} are as shown hereafter.
\begin{code}
> round(cor(generated.data, method="kendall"), 3)
      [,1]  [,2]  [,3]  [,4]
[1,] 1.000 0.559 0.119 0.123
[2,] 0.559 1.000 0.118 0.121
[3,] 0.119 0.118 1.000 0.563
[4,] 0.123 0.121 0.563 1.000
\end{code}

While the correlation between $U_1$, $U_2$ and $U_3$, $U_4$ is strong, one can note that even though $\theta_3$ corresponds to a correlation of 0.5, the correlation for the couples $(U_1, U_3)$, $(U_1, U_4)$, $(U_2, U_3)$ and $(U_2, U_4)$ appears to have dampen to $\sim0.12$. Pushing $\theta_3$ towards higher values can help: if $\theta_3$ is set to a value of 14.14, corresponding to a Kendall's tau of 0.75, the empirical Kendall's taus become as shown hereafter.
\begin{code}
> round(cor(generated.data, method="kendall"), 3)
      [,1]  [,2]  [,3]  [,4]
[1,] 1.000 0.743 0.220 0.222
[2,] 0.743 1.000 0.219 0.222
[3,] 0.220 0.219 1.000 0.738
[4,] 0.222 0.222 0.738 1.000
\end{code}

This increase remains however slow and even with a high value for $\theta_3$, such as 38.28, corresponding to a Kendall's tau of 0.9, the correlation between $U_1$ and $U_3$ will not exceed 0.27.
\end{example}



To overcome the issue of low dependence between the pairs $(U_1, U_3)$, $(U_1, U_4)$, $(U_2, U_3)$ and $(U_2, U_4)$, one can make use of the inner copula to set the baseline level of dependence, rather than using a layer of identical linking copulas. See Table \ref{NEOFChierarchy2}.

\begin{table}[H]
\centering

\begin{tabular}{rccccl}
\multicolumn{1}{l}{}       & $i=1$                      & $i=2$                      & $i=3$                      & $i=4$                      &                   \\ \cline{2-5}
\multicolumn{1}{r|}{$j=1$} & \multicolumn{1}{c|}{$\Pi$} & \multicolumn{1}{c|}{$\Pi$} & \multicolumn{1}{c|}{Frank} & \multicolumn{1}{c|}{Frank} & $\theta_1$        \\ \cline{2-5}
\multicolumn{1}{r|}{$j=2$} & \multicolumn{1}{c|}{Frank} & \multicolumn{1}{c|}{Frank} & \multicolumn{1}{c|}{$\Pi$} & \multicolumn{1}{c|}{$\Pi$} & $\theta_2$        \\ \cline{2-5}
\multicolumn{1}{r|}{$C^\diamond$} & \multicolumn{4}{c|}{Frank}                                                                                        & $\theta^\diamond$ \\ \cline{2-5}
\end{tabular}
\caption{A hierarchical NEOFC with 2 layers and an inner copula set to a Frank copula. \label{NEOFChierarchy2}}

\end{table}

\begin{example}
In Table \ref{NEOFChierarchy2}, set $\theta^\diamond$ to a value of 14.14, corresponding to a Kendall's $\tau$ of 0.75, $\theta_2$ to 1.38, corresponding to a Kendall's tau of 0.15, and $\theta_1$ to 6.73, corresponding to a Kendall's $\tau$ of 0.55. The empirical Kendall's taus based on 10000 observations generated from the related hierarchical NEOFC are shown hereafter.

\begin{code}
> round(cor(generated.data, method="kendall"), 3)
      [,1]  [,2]  [,3]  [,4]
[1,] 1.000 0.755 0.388 0.385
[2,] 0.755 1.000 0.387 0.385
[3,] 0.388 0.387 1.000 0.809
[4,] 0.385 0.385 0.809 1.000
\end{code}
\end{example}

As last example of a hierarchical NEOFC, we revisit the tree on the right of Figure~11 in \cite{segers2014nonparametric}, this tree corresponding to an attempt to map the dependencies between daily log returns using hierarchical Archimedean copulas. Figure~11 in \cite{segers2014nonparametric} is, for convenience, reproduced in this paper as Figure \ref{wanted}.

\begin{figure}[H]
\centering
\begin{tabular}{cc}

\includegraphics[width=0.49\textwidth]{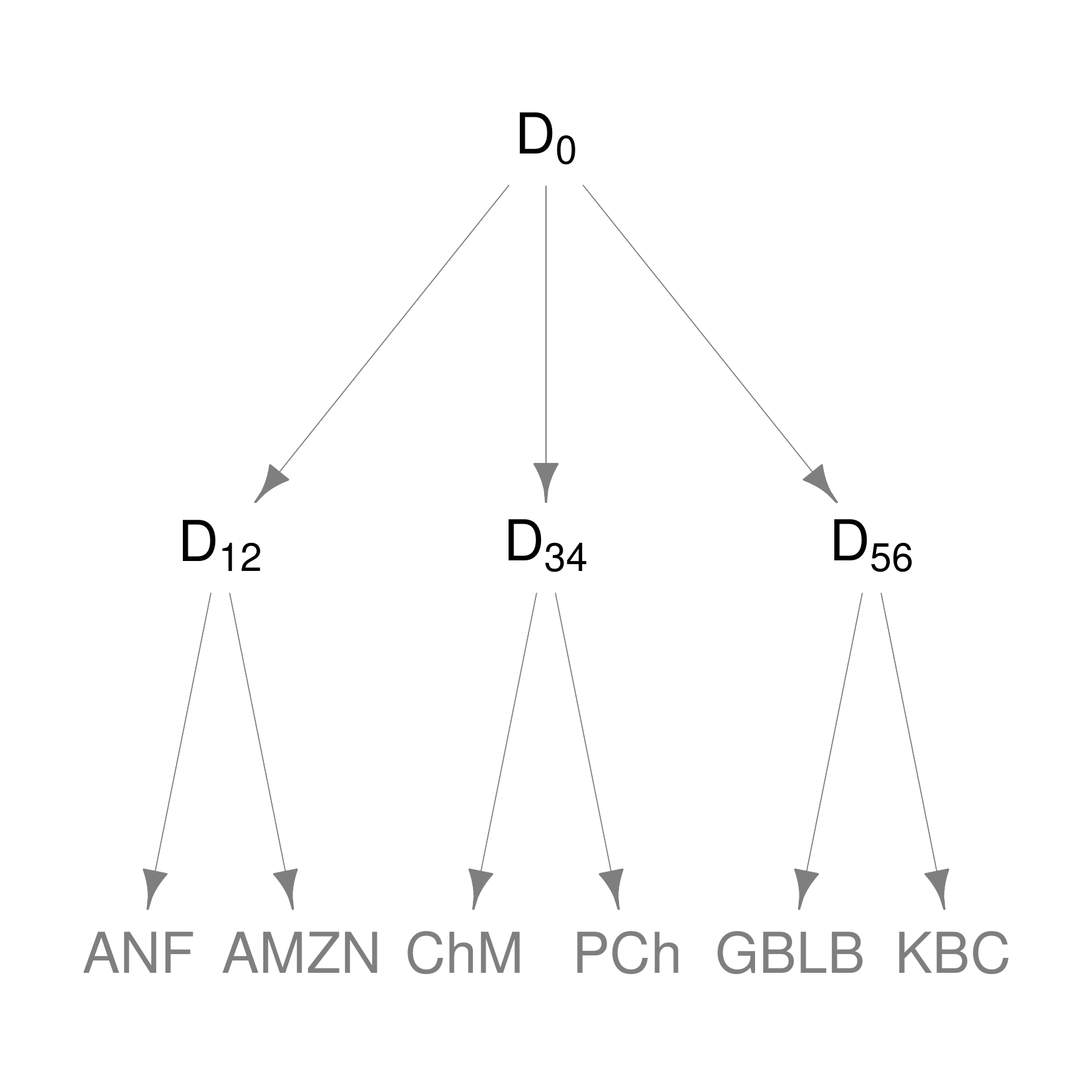}
&
\includegraphics[width=0.49\textwidth]{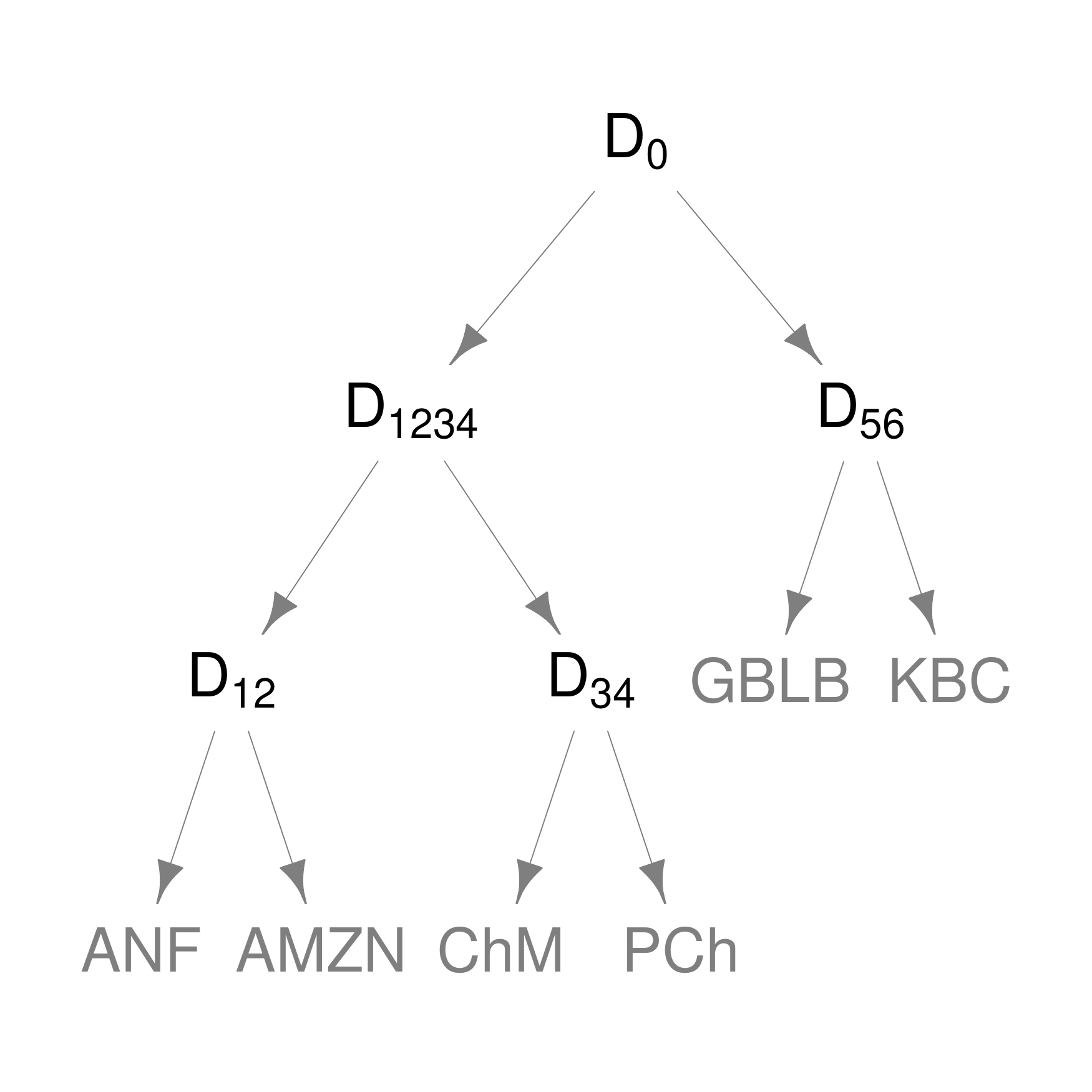}

\end{tabular}
\caption{Figure~11 in \cite{segers2014nonparametric}. \label{wanted}}
\end{figure}

The tree on the right of Figure \ref{wanted} corresponds to the hierarchical NEOFC from Table~\ref{NEOFChierarchy3}, where the symbols $\{\text{D}_0, \text{D}_{1234}, \text{D}_{56}, \text{D}_{12}, \text{D}_{34}\}$ on the right of the table are there to help one compare the table with the tree on the right of Figure \ref{wanted}.

\begin{table}[H]
\centering

\begin{tabular}{rccccccl}
                                  & $i=1$                      & $i=2$                      & $i=3$                      & $i=4$                      & $i=5$                      & $i=6$                      &            \\ \cline{2-7}
\multicolumn{1}{r|}{$j=1$}        & \multicolumn{1}{c|}{$\Pi$} & \multicolumn{1}{c|}{$\Pi$} & \multicolumn{1}{c|}{Frank} & \multicolumn{1}{c|}{Frank} & \multicolumn{1}{c|}{$\Pi$} & \multicolumn{1}{c|}{$\Pi$} & $\text{D}_{34}$   \\ \cline{2-7}
\multicolumn{1}{r|}{$j=2$}        & \multicolumn{1}{c|}{Frank} & \multicolumn{1}{c|}{Frank} & \multicolumn{1}{c|}{$\Pi$} & \multicolumn{1}{c|}{$\Pi$} & \multicolumn{1}{c|}{$\Pi$} & \multicolumn{1}{c|}{$\Pi$} & $\text{D}_{12}$   \\ \cline{2-7}
\multicolumn{1}{r|}{$j=3$}        & \multicolumn{1}{c|}{$\Pi$} & \multicolumn{1}{c|}{$\Pi$} & \multicolumn{1}{c|}{$\Pi$} & \multicolumn{1}{c|}{$\Pi$} & \multicolumn{1}{c|}{Frank} & \multicolumn{1}{c|}{Frank} & $\text{D}_{56}$   \\ \cline{2-7}
\multicolumn{1}{r|}{$j=4$}        & \multicolumn{1}{c|}{Frank} & \multicolumn{1}{c|}{Frank} & \multicolumn{1}{c|}{Frank} & \multicolumn{1}{c|}{Frank} & \multicolumn{1}{c|}{$\Pi$} & \multicolumn{1}{c|}{$\Pi$} & $\text{D}_{1234}$ \\ \cline{2-7}
\multicolumn{1}{r|}{$C^\diamond$} & \multicolumn{6}{c|}{Frank}                                                                                                                                                  & $\text{D}_0$      \\ \cline{2-7}
\end{tabular}
\caption{The hierarchical NEOFC corresponding to the tree structure on the right of Figure \ref{wanted}. \label{NEOFChierarchy3}}

\end{table}

Note that, given the tree structure of a hierarchical Archimedean copula, the corresponding NEOFC will have as many layers as there are internal nodes in the tree when the inner copula is set to independence, or, when the inner copula is used as root node, as many layers as there are internal nodes in the tree minus one.

\begin{example} \label{exhierarchy3}
Set the various parameters of Table \ref{NEOFChierarchy3} to the arbitrary values shown in Table \ref{NEOFChierarchy3_param}. The empirical Kendall's taus based on 10000 observations generated from the related hierarchical NEOFC are shown hereafter.

\begin{code}
> round(cor(generated.data, method="kendall"), 3)
      [,1]  [,2]  [,3]  [,4]  [,5]  [,6]
[1,] 1.000 0.776 0.400 0.406 0.476 0.478
[2,] 0.776 1.000 0.404 0.407 0.475 0.475
[3,] 0.400 0.404 1.000 0.775 0.485 0.483
[4,] 0.406 0.407 0.775 1.000 0.489 0.487
[5,] 0.476 0.475 0.485 0.489 1.000 0.755
[6,] 0.478 0.475 0.483 0.487 0.755 1.000
\end{code}
\end{example}

Note that, in contrast to what is usually required for hierarchical Archimedean copulas, the dependence in hierarchical NEOFCs is not required to be a nondecreasing function as one goes farther away from the root node. The empirical matrix from the last example indeed shows that the dependence at node $D_{12}$ is $\sim0.776$, that of node $D_{1234}=D_{1:4}$ is lower, around $0.400$, but that of the root is higher than $0.400$, at around 0.475.

\begin{table}[H]
\centering

\scriptsize{
\begin{tabular}{rccccccl}
                                  & $i=1$                             & $i=2$                             & $i=3$                             & $i=4$                             & $i=5$                             & $i=6$                             &            \\ \cline{2-7}
\multicolumn{1}{r|}{$j=1$}        & \multicolumn{1}{c|}{$\Pi$}        & \multicolumn{1}{c|}{$\Pi$}        & \multicolumn{1}{c|}{$\tau_{34}=0.4$} & \multicolumn{1}{c|}{$\tau_{34}=0.4$} & \multicolumn{1}{c|}{$\Pi$}        & \multicolumn{1}{c|}{$\Pi$}        & $D_{34}$   \\ \cline{2-7}
\multicolumn{1}{r|}{$j=2$}        & \multicolumn{1}{c|}{$\tau_{12}=0.4$} & \multicolumn{1}{c|}{$\tau_{12}=0.4$} & \multicolumn{1}{c|}{$\Pi$}        & \multicolumn{1}{c|}{$\Pi$}        & \multicolumn{1}{c|}{$\Pi$}        & \multicolumn{1}{c|}{$\Pi$}        & $D_{12}$   \\ \cline{2-7}
\multicolumn{1}{r|}{$j=3$}        & \multicolumn{1}{c|}{$\Pi$}        & \multicolumn{1}{c|}{$\Pi$}        & \multicolumn{1}{c|}{$\Pi$}        & \multicolumn{1}{c|}{$\Pi$}        & \multicolumn{1}{c|}{$\tau_{56}=0.2$} & \multicolumn{1}{c|}{$\tau_{56}=0.2$} & $D_{56}$   \\ \cline{2-7}
\multicolumn{1}{r|}{$j=4$}        & \multicolumn{1}{c|}{$\tau_{1234}=0.1$} & \multicolumn{1}{c|}{$\tau_{1234}=0.1$} & \multicolumn{1}{c|}{$\tau_{1234}=0.1$} & \multicolumn{1}{c|}{$\tau_{1234}=0.1$} & \multicolumn{1}{c|}{$\Pi$}        & \multicolumn{1}{c|}{$\Pi$}        & $D_{1:4}$ \\ \cline{2-7}
\multicolumn{1}{r|}{$C^\diamond$} & \multicolumn{6}{c|}{$\tau_{0}=0.75$}                                                                                                                                                                             & $D_0$      \\ \cline{2-7}
\end{tabular}}

\caption{Parameters for Example \ref{exhierarchy3}, expressed as Kendall's taus for convenience. \label{NEOFChierarchy3_param}}
\end{table}

\end{paragraph}

\begin{paragraph}{Gaussian copulas.}
A Gaussian copula is a copula whose density $c$ satisfies
\begin{align}\label{eq:formula for Gaussian copula}
  \log (c(u_1,\dots,u_d))=
  -\frac{1}{2}\log(\text{det}(R))-\frac{1}{2}z^\top(R^{-1}-I)z,
\end{align}
where $R$ is a $d\times d$ invertible correlation matrix, $z=(z_1,\dots,z_d)^\top$ 
and $z_i$ is the quantile of order $u_i$ of the standard normal distribution.
A Gaussian copula can be represented as in \eqref{eq:representation 2}. 
Let $\Delta=(\Delta_{1},\dots, \Delta_{d})^\top$ be a real vector in $[0,1]^d$ and let $D$ be a diagonal matrix with elements given by $1-\Delta_{i}^2,\,i=1,\dots,d$.
Finally let $C_A$ be a $d$-variate Gaussian copula with correlation matrix $A$.

\begin{proposition}
  \label{prop:GaussianInGaussianOut}
Let $C^\diamond_{Q_0(u_0)}=C^\diamond=C_A$ and let $C_{i}$ be a bivariate Gaussian copula with correlation $\Delta_{i}$, $i \in \{1,\dots,d\}$. Then the outer copula in \eqref{eq:representation 2} is a Gaussian copula with correlation matrix given by $R=D^{1/2}AD^{1/2}+\Delta\Delta^T$.
\end{proposition}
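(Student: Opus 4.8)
The plan is to exhibit an explicit stochastic representation of the outer copula in terms of jointly Gaussian variables and then to read off its correlation matrix. Recall from the data-generation discussion (Algorithm~\ref{data_gen}) that representation~\eqref{eq:representation 2} amounts to the following sampling scheme: draw $U_0\sim\mathrm{Unif}(0,1)$; conditionally on $U_0=u_0$, draw $(V_1,\dots,V_d)\sim C^\diamond=C_A$ and set $U_i=C_{i\mid0}^{-1}(V_i\mid u_0)$, where $C_{i\mid0}(u_i\mid u_0)=\partial C_i(u_i,u_0)/\partial u_0$. Since each $C_i$ is a bivariate Gaussian copula with correlation $\Delta_i$, writing $z_0=\Phi^{-1}(u_0)$ the standard conditional-Gaussian formula gives
\[
C_{i\mid0}(u_i\mid u_0)=\Phi\!\left(\frac{\Phi^{-1}(u_i)-\Delta_i z_0}{\sqrt{1-\Delta_i^2}}\right).
\]

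First I would set $Z_0=\Phi^{-1}(U_0)\sim N(0,1)$ and, using that $C_A$ is by definition the copula of a centred Gaussian vector with correlation matrix $A$, introduce $(W_1,\dots,W_d)\sim N(0,A)$ independent of $Z_0$ with $V_i=\Phi(W_i)$. Inverting the display above then yields $\Phi^{-1}(U_i)=\Delta_i Z_0+\sqrt{1-\Delta_i^2}\,W_i=:Z_i$. Since $(Z_0,W_1,\dots,W_d)$ is jointly Gaussian, so is $(Z_1,\dots,Z_d)$, and $\var(Z_i)=\Delta_i^2+(1-\Delta_i^2)=1$, so each $Z_i$ is standard normal and $U_i=\Phi(Z_i)$ is uniform. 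Hence $(U_1,\dots,U_d)$ has a Gaussian copula, namely the copula of $(Z_1,\dots,Z_d)$, and it remains only to compute $R=\cov(Z_1,\dots,Z_d)$.

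The remaining step is a short covariance computation: for $i\neq j$, independence of $Z_0$ and $(W_1,\dots,W_d)$ kills the cross terms, leaving $R_{ij}=\Delta_i\Delta_j+\sqrt{1-\Delta_i^2}\sqrt{1-\Delta_j^2}\,A_{ij}$, while $R_{ii}=1$. Recognising $\sqrt{1-\Delta_i^2}\sqrt{1-\Delta_j^2}\,A_{ij}$ as the $(i,j)$ entry of $D^{1/2}AD^{1/2}$ (the diagonal working out because $A_{ii}=1$) and $\Delta_i\Delta_j$ as the $(i,j)$ entry of $\Delta\Delta^\top$ gives $R=D^{1/2}AD^{1/2}+\Delta\Delta^\top$, as claimed; positive semidefiniteness of $R$ is automatic since it is a genuine covariance matrix.

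The main obstacle here is bookkeeping rather than depth: one must be careful that the conditional-CDF transform $C_{i\mid0}^{-1}(\cdot\mid u_0)$ is inverted in the correct direction, and that the inner copula $C_A$ is coupled to the \emph{conditional} margins $C_{i\mid0}(\cdot\mid u_0)$ rather than the unconditional ones, so that the $W_i$ carry the correlation structure $A$ while remaining independent of $Z_0$. A purely analytic alternative — substituting the conditional Gaussian CDFs directly into \eqref{eq:representation 2} and evaluating the resulting multivariate Gaussian integral — would also work but is considerably more tedious, which is why I would favour the stochastic-representation route above.
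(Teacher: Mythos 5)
Your proof is correct and rests on the same idea as the paper's: identifying the outer copula as the copula of the one-factor Gaussian vector $Z_i=\Delta_i Z_0+\sqrt{1-\Delta_i^2}\,W_i$, via the conditional-Gaussian transform $\Phi\bigl((\Phi^{-1}(u_i)-\Delta_i z_0)/\sqrt{1-\Delta_i^2}\bigr)$ and the covariance identity $R=D^{1/2}AD^{1/2}+\Delta\Delta^\top$. The only difference is direction: the paper starts from the $(d+1)$-variate Gaussian with correlation blocks $R$ and $\Delta$, conditions on $Z_0$, and verifies that this reproduces representation \eqref{eq:representation 2}, whereas you build the joint Gaussian constructively from the sampling scheme and derive $R$ --- a mirror image of the same computation.
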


\begin{proof}
Let $R$ be a symmetric nonnegative matrix whose diagonal elements are equal to $1$ and whose element in the $i$-th row and $j$-th column is denoted $\Delta_{ij}$.
Let $(Z_1,\dots,Z_d,Z_0)$ be distributed according to a $(d+1)$-variate centered Gaussian distribution with variance-covariance matrix given by
\begin{align*}
  \begin{pmatrix}
    \underset{d\times d}{R} & \underset{d\times 1}{\Delta} \\
    \underset{1\times d}{\Delta^T}          & \underset{1\times 1}{1}
  \end{pmatrix},  
\end{align*}
so that $(Z_1,\dots,Z_d\vert Z_0=z_0)\sim N(\Delta z_0, \, R-\Delta \Delta^\top)$. The partial correlations are given by
\begin{align*}
  \text{Cov}(Z_i,Z_j\vert Z_0=z_0)=\text{Corr}(Z_i,Z_j\vert Z_0=z_0)=
\frac{\Delta_{ij}-\Delta_{i}\Delta_{j}}
{\sqrt{(1-\Delta_{i}^2)(1-\Delta_{j}^2)}},
\end{align*}
or, in other words, the partial correlation matrix is 
$A=D^{-1/2}(R-\Delta \Delta^\top)D^{-1/2}$.  
Given $Z_0=z_0$, the margins $(Z_i\vert Z_0=z_0)$ are $N(\Delta_{i}z_0,1-\Delta_{i}^2)$, hence 

\begin{equation*}
P(Z_i\leq z\vert Z_0=z_0)=\Phi((z-\Delta_{i} z_0)/\sqrt{1-\Delta_{i}^2}), 
\end{equation*}
and, moreover, the corresponding copula is a Gaussian copula $C_{A}$, with correlation matrix $A$.

Calculate the copula corresponding to $(Z_1,\dots,Z_d)$. Let $\Phi$ be the cumulative distribution function of the univariate standard Gaussian distribution. 
\begin{align*}
  &C_{(Z_1,\dots,Z_d)}(u_1,\dots,u_d)\\
  =&\int P(\Phi(Z_i)\leq u_i,\, i=1,\dots,d\vert Z_0=z_0) \Phi'(z_0)dz_0 \\
  =&\int P(Z_i\leq \Phi^{-1}(u_i),\, i=1,\dots,d\vert Z_0=z_0) \Phi'(z_0)dz_0\\
  =&\int C_{A}\left\{ \Phi\left(\frac{\Phi^{-1}(u_i)-\Delta_{i}z_0}{\sqrt{1-\Delta_{i}^2}}\right), i=1,\dots,d\right\}\Phi'(z_0)dz_0 \\
  =&\int_0^1 C_{A}\left\{ \Phi\left(\frac{\Phi^{-1}(u_i)-\Delta_{i}\Phi^{-1}(u_0)}{\sqrt{1-\Delta_{i}^2}}\right), i=1,\dots,d\right\}du_0.
\end{align*}
This expression corresponds exactly to the copula given in \eqref{eq:representation 2}, with 

\begin{equation*}
\frac{\partial C_i(u_i, u_0)}{\partial u_0}=\Phi\left(\frac{\Phi^{-1}(u_i)-\Delta_{i}\Phi^{-1}(u_0)}{\sqrt{1-\Delta_{i}^2}}\right),
\end{equation*}
see \cite{meyer2013bivariate} for details.

\end{proof}

Note: in the case $d=2$, the outer copula is a Gaussian copula with correlation given by
\begin{equation} \nonumber
\rho_A \sqrt{1-\Delta_1^2} \sqrt{1-\Delta_2^2} + \Delta_1 \Delta_2.
\end{equation}
Also see Example \ref{normalidenti}.

\end{paragraph}

 \begin{paragraph}{C-Vine copulas.}
 Let $(U_0, U_1,\dots,U_d)$ be a random vector following a C-Vine copula distribution truncated after the second level. The density of this truncated C-Vine is given by
 \begin{align} \label{CV}
   c(u_0,\dots,u_d)=\prod_{i=1}^{d-1} c^*_{1,1+i\vert 0}(C^*_{1\vert 0}(u_1\vert u_0),C^*_{1+i\vert 0}(u_{1+i}\vert u_0)\vert u_0) \prod_{i=1}^d c_{i}^{*}(u_i,u_0)
 \end{align} 
where $c^*_{i}=\partial^2 C^*_{i}(u_i, u_0)/\partial u_i \partial u_0$, $C^*_{i\vert 0}(u_i\vert u_0)=\partial C^*_{i}(u_i, u_0)/\partial u_0$, $\{C^*_{i}(u_i, u_0)\}$ is a set of $d$ arbitrary bivariate copulas, and $\{c^*_{1,1+i\vert 0}\}$ is a set of $d-1$ abritrary copula densities for each $u_0$. Due to their extreme flexibility and ease of use (one only has to specify sets of bivariate copulas), Vine copulas have been used in an increasing number of applications and are still a hot topic of research, see for instance \cite{aas_pair-copula_2009}, \cite{kurowicka2011dependence} or  \cite{bedford2002vines}. 
 \begin{proposition} \label{CV_prop}
   If, in \eqref{densityEOFC}, where $T_1=U_0$ and $t_1=u_0$, for each $u_0$, $c_{u_0}$ is defined as
 \begin{align*}
   c_{u_0}(u_1,\dots,u_d)=\prod_{i=1}^{d-1} c^*_{1,1+i\vert 0}(u_1,u_{1+i}\vert u_0),
 \end{align*}
 and $c_{i}(u_i, u_0)=c_{i}^{*}(u_i, u_0)$ for all $i$, then the outer copula $c$ in \eqref{densityEOFC} is the $d$-variate marginal distribution, with respect to $u_0$, of \eqref{CV}, that is, its density is
 \begin{align*}
   c(u_1,\dots,u_d)=&
   \int_0^1
   c_{u_0}(C^*_{1\vert 0}(u_1\vert u_0),\dots,C^*_{d\vert 0}(u_d\vert u_0)) \, 
   \prod_{i=1}^d c^*_{i}(u_i,u_0)
   du_0\\
   =&\int_0^1
   \prod_{i=1}^{d-1} c^*_{1,1+i\vert 0}(C^*_{1\vert 0}(u_1\vert u_0),C^*_{1+i\vert 0}(u_{1+i}\vert u_0)\vert u_0)
   \prod_{i=1}^d c^*_{i}(u_i,u_0) 
   \, du_0.
 \end{align*}
 \end{proposition}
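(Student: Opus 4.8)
The proof is essentially a matter of unwinding definitions and matching the notation of \eqref{densityEOFC} with that of \eqref{CV}, so I do not anticipate any genuine difficulty; the only point worth pausing on is that the proposed inner density is a legitimate copula density, and the only real work is bookkeeping with the conditional-distribution notation. As a preliminary, I would note that for each fixed $u_0$ the map
\[
c_{u_0}(v_1,\dots,v_d)=\prod_{i=1}^{d-1} c^*_{1,1+i\vert 0}(v_1,v_{1+i}\vert u_0)
\]
is the density of a proper $d$-variate copula, namely the copula under which $V_2,\dots,V_d$ are mutually independent conditionally on $V_1$, with $(V_1,V_{1+i})$ having copula density $c^*_{1,1+i\vert 0}(\cdot,\cdot\vert u_0)$. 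In particular it has uniform univariate margins and integrates to $1$, so using it as the inner density $c^{\diamond_1}_{t_1}$ in \eqref{densityEOFC} is admissible.

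Next I would substitute the hypotheses into \eqref{densityEOFC}. With $T_1=U_0$ and $t_1=u_0$ one has $c^{\diamond_1}_{t_1}=c_{u_0}$; with $C_{i1}=C^*_i$, the defining relation $H_{i1}(\bullet,t_1)=\partial C_{i1}(\bullet,t_1)/\partial t_1$ from \eqref{Hijs} gives $H_{i1}(u_i,u_0)=\partial C^*_i(u_i,u_0)/\partial u_0=C^*_{i\vert 0}(u_i\vert u_0)$; and $c_i(u_i,u_0)=c^*_i(u_i,u_0)$. Hence \eqref{densityEOFC} reads
\[
c(u_1,\dots,u_d)=\int_0^1 c_{u_0}\big(C^*_{1\vert 0}(u_1\vert u_0),\dots,C^*_{d\vert 0}(u_d\vert u_0)\big)\prod_{i=1}^d c^*_i(u_i,u_0)\,du_0,
\]
and expanding $c_{u_0}$ through its definition turns this into
\[
c(u_1,\dots,u_d)=\int_0^1 \prod_{i=1}^{d-1} c^*_{1,1+i\vert 0}\big(C^*_{1\vert 0}(u_1\vert u_0),C^*_{1+i\vert 0}(u_{1+i}\vert u_0)\vert u_0\big)\prod_{i=1}^d c^*_i(u_i,u_0)\,du_0,
\]
which is precisely the two displayed identities in the statement.

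Finally I would observe that the integrand in the last display is, term for term, the truncated C-Vine density $c(u_0,u_1,\dots,u_d)$ of \eqref{CV}. Therefore $c(u_1,\dots,u_d)=\int_0^1 c(u_0,u_1,\dots,u_d)\,du_0$, i.e. the outer copula density delivered by \eqref{densityEOFC} under these choices is exactly the $d$-variate margin of \eqref{CV} obtained by integrating out $u_0$; and since \eqref{CV} is a copula density, $U_0$ is uniform on $[0,1]$ and this margin is itself a copula density, so nothing more is required.

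The main (and only mildly delicate) obstacle is keeping the ``$H$''-notation of \eqref{densityEOFC} aligned with the conditional-c.d.f.\ notation of \eqref{CV}; once $C_{i1}=C^*_i$ is imposed the two coincide and the computation is purely mechanical. No change of variables or integration-by-parts is needed, in contrast to the margin arguments earlier in the section, because here one is directly identifying two already-written-down densities rather than extracting a lower-dimensional margin of the EOFC itself.
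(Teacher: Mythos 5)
Your proof is correct and matches the intended argument: the paper states Proposition \ref{CV_prop} without a written proof, treating it as the direct substitution you perform, namely $c^{\diamond_1}_{t_1}=c_{u_0}$, $H_{i1}(u_i,u_0)=C^*_{i\vert 0}(u_i\vert u_0)$ (from $C_{i1}=C^*_i$) and $c_i=c^*_i$ in \eqref{densityEOFC}, after which the integrand is recognized term for term as the truncated C-Vine density \eqref{CV}, so integrating out $u_0$ gives its $d$-variate margin. Your preliminary verification that $c_{u_0}$ is a legitimate $d$-variate copula density (uniform margins, conditional-independence construction given $V_1$) is a useful detail the paper leaves implicit, and nothing further is needed.
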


If one assumes that, in \eqref{CV}, none of the elements of $\{c^*_{1,1+i\vert 0}\}$ actually depends on $u_0$, then the inner copula in Proposition \ref{CV_prop} becomes
 \begin{align*}
   c_{u_0}(u_1,\dots,u_d)=\prod_{i=1}^{d-1} c^*_{1,1+i}(u_1,u_{1+i}),
 \end{align*}
which is nothing more than a C-Vine on $(U_1, \ldots, U_d)$, truncated at the first level.

 \end{paragraph}

\begin{paragraph}{$p$-factor models.}
Define respectively $\Pi_1$-factor and $\Pi_2$-factor copulas as copulas of the form
\begin{align}
  &C^{(\Pi_1)}(u_1,\dots,u_d)=\int_0^1 \prod_{i=1}^d C_{i\vert 0}^{(2)}(u_i\vert v_2)\, dv_2, \text{ and}\label{eq:Pi1}\\
  &C^{(\Pi_2)}(u_1,\dots,u_d)=\int_0^1 \int_0^1 \prod_{i=1}^d C_{i\vert 0}^{(2)}(
  C_{i\vert 0}^{(1)}(u_i\vert v_1)\vert v_2)\,dv_2\,dv_1,\label{eq:Pi2}
\end{align}
where $C_{i\vert 0}^{(k)}(u_i\vert v_k)=\partial C_{i}^{(k)}(u_i, v_k)/\partial v_k$ for 
$k=1,2$ and $i=1,\dots,d$, and where the $C_{i}^{(k)}$ are (arbitrary) bivariate copulas. $\Pi_1$-factor and $\Pi_2$-factor copulas have been studied in \cite{krupskii2013factor,krupskiiJoeFactor2015} as copula models for conditionally independent variables given respectively one and two latent factors. 

The following (trivial) proposition aims at recovering $\Pi_1$-factor and $\Pi_2$-factor copulas as special cases of the model \eqref{eq:representation 2}.
\begin{proposition}
  Consider the copulas given in \eqref{eq:Pi1} and \eqref{eq:Pi2}. In \eqref{eq:representation 2}, put $C_{i}=C_{i}^{(2)}$. If, moreover, $C^\diamond_{Q_0(u_0)}=\Pi$ for each $u_0$, then the outer copula $C$ in \eqref{eq:representation 2} is the $\Pi_1$-factor copula given in \eqref{eq:Pi1}. 
  Likewise, if, in \eqref{eq:representation 2}, $C_{i}=C_{i}^{(1)}$ and moreover,
  \begin{align*}
    C^\diamond_{Q_0(u_0)}(u_1,\dots,u_d)=
    \int_0^1 \prod_{i=1}^d
    C_{i\vert 0}^{(2)}(u_i\vert \tilde{u}_0)\,d\tilde{u}_0,
  \end{align*}
then the outer copula $C$ in \eqref{eq:representation 2} is the $\Pi_2$-factor copula given in \eqref{eq:Pi2}.
\end{proposition}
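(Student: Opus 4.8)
The plan is to prove both assertions by direct substitution into representation \eqref{eq:representation 2}, unwinding the definition $C_{i\vert 0}(u_i\vert u_0)=\partial C_i(u_i,u_0)/\partial u_0$; no estimate or limiting argument is involved, which is why the authors call the statement trivial.

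For the first assertion I would set $C_i=C_i^{(2)}$ and $C^\diamond_{Q_0(u_0)}=\Pi$ in \eqref{eq:representation 2}. Since $\Pi(v_1,\dots,v_d)=v_1\cdots v_d$, the integrand collapses to $\prod_{i=1}^d \partial C_i^{(2)}(u_i,u_0)/\partial u_0=\prod_{i=1}^d C_{i\vert 0}^{(2)}(u_i\vert u_0)$, so that $C(u_1,\dots,u_d)=\int_0^1\prod_{i=1}^d C_{i\vert 0}^{(2)}(u_i\vert u_0)\,du_0$, which is precisely \eqref{eq:Pi1} after renaming the dummy variable $u_0$ as $v_2$.

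For the second assertion I would set $C_i=C_i^{(1)}$ and let $C^\diamond_{Q_0(u_0)}$ be the prescribed integral. Observe first that this $C^\diamond_{Q_0(u_0)}$ does not in fact depend on $u_0$ (the construction is conditionally invariant) and, by the first assertion applied with the linking copulas $C_i^{(2)}$, it is a genuine $d$-variate copula --- this is what makes representation \eqref{eq:representation 2} legitimately applicable. The $i$-th argument handed to $C^\diamond_{Q_0(u_0)}$ in \eqref{eq:representation 2} is $\partial C_i^{(1)}(u_i,u_0)/\partial u_0=C_{i\vert 0}^{(1)}(u_i\vert u_0)$; inserting this into the displayed formula for $C^\diamond_{Q_0(u_0)}$ and then renaming $u_0$ as $v_1$ and the inner dummy variable $\tilde{u}_0$ as $v_2$ yields
$$C(u_1,\dots,u_d)=\int_0^1\!\!\int_0^1\prod_{i=1}^d C_{i\vert 0}^{(2)}\big(C_{i\vert 0}^{(1)}(u_i\vert v_1)\,\big\vert\,v_2\big)\,dv_2\,dv_1,$$
which is exactly \eqref{eq:Pi2}.

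I do not anticipate any real obstacle: the argument is pure bookkeeping. The one point that deserves an explicit sentence is the one just made --- that in the second case $C^\diamond_{Q_0(u_0)}$ really is a copula for each $u_0$, which follows from the first part --- and, relatedly, that one may simply ``plug in and integrate'' because \eqref{eq:representation 2} is defined pointwise in the integration variable and the inner integral converges (a copula evaluated at admissible arguments is bounded and integrable on $[0,1]$).
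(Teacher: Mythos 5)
Your proposal is correct and is exactly the direct-substitution argument the paper has in mind (the paper states the proposition as trivial and omits a written proof, noting only afterwards that the prescribed inner copula does not depend on $u_0$, i.e.\ conditional invariance, which you also observe). Your added remark that the $\Pi_1$-factor copula used as the inner copula is itself a genuine $d$-variate copula is a sensible bit of bookkeeping but introduces nothing beyond what the paper already takes for granted.
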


Note that $C^\diamond_{Q_0(u_0)}$ in the above proposition actually does not depend on $u_0$ hence we have conditional invariance. This restriction can be easily removed as follows. Let, for each $u_0$, $\{\widetilde{C}_{i}(\bullet,\bullet;u_0)\}$ be a set of bivariate copulas and 
\begin{align*}
   C^\diamond_{Q_0(u_0)}(u_1,\dots,u_d)=
    \int_0^1 \prod_{i=1}^d
    \widetilde{C}_{i\vert 0}(u_i\vert \tilde{u}_0;u_0)\,d\tilde{u}_0,
\end{align*}
where $\widetilde{C}_{i\vert 0}(u_i\vert \tilde{u}_0;u_0)=\partial \widetilde{C}_{i}(u_i,\tilde{u}_0;u_0)/\partial \tilde{u}_0$. The outer copula is then
\begin{align}\nonumber
  C(u_1,\dots,u_d)=\int_0^1 \int_0^1 \prod_{i=1}^d \widetilde{C}_{i\vert 0}(
  C_{i\vert 0}(u_i\vert u_0)\vert \tilde{u}_0;u_0)\,d\tilde{u}_0\,du_0.  
\end{align}

\end{paragraph}

\section{Inference}
\label{sec:methods}

This section discusses estimation of models of the form \eqref{eq:representation 2} or \eqref{NEOFC}, as well as how one can test for conditional independence for models of the form \eqref{eq:representation 2}.

\subsection{Estimation}
\label{sec:estimation}
Let, in \eqref{eq:representation 2}, $C_{i}(u_i,u_0)=C_{i}(u_i,u_0;\alpha_i)$ and 
\begin{equation*}
C_{x_0}(u_1,\dots,u_d)=C_{x_0}(u_1,\dots,u_d;\beta(x_0)), 
\end{equation*}
where $\beta$ is a mapping which, to each $x_0$ in the support of $X_0$, associates a parameter in the appropriate parameter space. If the mapping $\beta$ depends on a vector of parameters, as in \eqref{eq:correlation in complete example}, we also denote this vector by $\beta$. 
Likewise, we denote the parameter vector which contains the  parameters of the quantile function $Q_0$ of $X_0$ by $\lambda$. 
Accordingly, the notation for the copula of $(U_1,\dots,U_d\vert U_0=u_0)$ becomes $C^\diamond_{Q_0(u_0)}(u_1,\dots,u_d)=C^\diamond(u_1,\dots,u_d;\beta,\lambda)$.
Finally, let us denote by $(x_{h1},\dots,x_{hd})$, $h=1,\dots,n$, the sample of the distribution $F$ with margins $F_1,\dots,F_d$ and copula $C$.

The pseudo log likelihood function to maximize is
\begin{multline}   \label{eq:likelihood}
 L_n(\boldsymbol{\theta})=\sum_{h=1}^n  \log  \int_0^1 c\left[
    C_{1\vert 0}(\widehat{F}_1(x_{h1})\vert u_0;\alpha_1),\dots,
    C_{d\vert 0}(\widehat{F}_d(x_{hd})\vert u_0;\alpha_d)
    ;\beta,\lambda  \right] \\
	\times \prod_{i=1}^d
  c_{i}(\widehat{F}_i(x_{hi}), u_0;\alpha_i)
  \,du_0,
\end{multline}
where $\boldsymbol{\theta}$ stands for the complete parameter vector, that is, $\boldsymbol{\theta}=(\alpha,\beta,\lambda)$, $\alpha=(\alpha_1,\dots,\alpha_d)$ and 
$\widehat{F}_i$ denotes an estimate of $F_i$, $i\in\{1,\dots,d\}$.
There are many ways to estimate $F_i$. For instance, $\widehat{F}_i$ may be the empirical distribution function, as in  \cite{genest1995semiparametric}, or may be a parametric estimate, as in \cite{joe1996estimation}.

Regarding the computational aspects, especially in higher dimensions and for large datasets, the likelihood \eqref{eq:likelihood} may be costly to compute due to the repeated use of integrals (as many as the sample size). A brief discussion on these computational aspects are given in Section \ref{practical_issues}.

\subsection{Testing for conditional independence \label{test}}
This section provides procedures to test for conditional independence in models based on the representation \eqref{eq:representation 2}. Indeed, being able to assess  if the variables of interest are dependent or independent conditioned on the latent factor seems a crucial issue. Conditional independence would mean that the factor captures all the dependence in the data whereas no conditional independence would mean that there is a remaining, intrinsic dependence in the variables even though the factor has been accounted for. 

Throughout this subsection, the bivariate copulas $C_{i}$, $i \in \{1, \dots, d\}$, are assumed to belong to some known parametric families. The inner copula $C_{u_0}=C^\diamond_{Q_0(u_0)}$, however, can be left fully unspecified: either nothing is known about it, either a parametric form is assumed. The possibility to carry out a hypothesis test in this setting is new to the literature.

The hypothesis test for conditional independence is of the form 
\begin{align*}
&H_0:\,C^\diamond_{Q_0(u_0)}=\Pi \text{ for all }u_0\notag\\
\text{ versus }
&H_1:\text{ there exists some }u_0\text{ such that }C^\diamond_{Q_0(u_0)}\ne\Pi
\end{align*}
(recall that $\Pi$ stands for the independence copula or the product function), where for two functions $f$ and $g$, $f=g$ means that $f(t)=g(t)$ for all $t$ in their domain. 

If a certain parametric form is assumed for $C^\diamond_{Q_0(u_0)}$, such as in Section \ref{sec:spectralRepresentation}, then most likely the test will reduce to testing for a parameter to equate a certain value, and no conceptual difficulties refrain the task. For instance, in \eqref{eq:correlation in complete example}, testing for conditional independence amounts to testing for $\beta_0=\infty$ or $\beta_1=\infty$ (conceptually). Let us remark that testing for conditional invariance is also feasible: one need to test $\beta_1=0$. 

If $C^\diamond_{Q_0(u_0)}$ is left fully unspecified, the alternative hypothesis needs to be slightly restricted in order for a test to exist. Consider
\begin{align}\label{eq:hypothesis test}
&H_0:\, C^\diamond_{Q_0(u_0)}=\Pi \text{ for all }u_0\notag\\
\text{ versus }
&H_1:\, C^\diamond_{Q_0(u_0)}>\Pi\text{ for all } u_0.
\end{align}
Then, the alternative hypothesis is: ``conditioned on the factor, the variables of interest are positively dependent''. 

Let $\pi$ be the risk of type I error. One rejects $H_0$ if $T_n \leq c_{\pi}$, where $c_{\pi}$ is chosen so that $P_{H_0}(T_n\leq c_{\pi})=\pi$ and where
\begin{align} \label{stat_test}
  T_n=\underset{t\in[0,1]^d}{\sup} (M(t)-\widehat{C}(t)),
\end{align}
where $M(t)=M(t_1,\dots,t_d)=\min(t_1,\dots,t_d)$ is the Fr\'echet-Hoeffding upper bound for copula and 
\begin{align}\label{eq:empirical copula}
  \widehat{C}(t)=\frac 1 n \sum_{h=1}^n\mathbf{1}(\{\widehat{F}_i(X_{hi})\leq t_i\},\,i=1,\dots,d)
\end{align}
is the empirical estimator of $C$ ; $(X_{h1},\dots,X_{hd})$, $h \in \{1,\dots,n\}$, being the data ; and $\widehat F_i$ being the empirical distribution function of $X_{hi}$.

The heuristic underlying the expression of $T_n$, the test statistic, is as follows. 
Denote by $C^{(H_0)}$ the outer copula under $H_0$, that is, one substitutes $\Pi$ for the inner copula $C^\diamond_{Q_0(u_0)}$ in \eqref{eq:representation 2} and gets
\begin{align}\label{eq:outer copula under H0}
  C^{(H_0)}(u_1,\dots,u_d)=\int_0^1 \prod_{i=1}^d C_{i\vert 0}(u_i\vert u_0)\, du_0.
\end{align}
If $H_0$ is true, then the true copula $C$ verifies $C=C^{(H_0)}$. But if $H_0$ is false, $C_{u_0}>\Pi$ implies $C>C^{(H_0)}$. In order to reject the null, one could therefore compare the empirical copula $\widehat{C}$ to $\widehat{C}^{(H_0)}$, the latter being an estimate of $C^{(H_0)}$, and reject the null if $\widehat{C}-\widehat{C}^{(H_0)}$ is too large. However, as the distribution of this difference under $H_0$ is unknown, one would need to use bootstrap and fit $C^{(H_0)}$ to simulated data a large number of times, a process far too expensive for most hardware.

To overcome this issue, the comparison of $\widehat{C}$ to $\widehat{C}^{(H_0)}$ is done indirectly using the Fr\'echet-Hoeffding upper bound as a baseline (note that the Fr\'echet-Hoeffding lower bound for copulas could also be considered as a baseline for comparison, the upper one was however picked since it has the nice property to be a copula $\forall d$, although this property is by no means required for the test). The result of this suggestion is \eqref{stat_test}, which describe a test statistic always positive and expected to take small values when $H_0$ is false.

The bootstrap of $T_n$ is now described. Note that under $H_0$, the outer copula in \eqref{eq:outer copula under H0} is fully parametric: one can obtain an estimate $\widehat C^{(H_0)}$ by maximum pseudo-likelihood \citep{krupskii2013factor}. New bootstrap samples (say $N$ of them) are then drawn using Algorithm \ref{data_gen} in order to get $N$ test statistics $T_n^{(1)},\dots,T_n^{(N)}$. These can be used, for instance, to compute a p-value as $P_{H_0}\bigg(T_n\leq T_n^{(obs)}\bigg)\approx N^{-1}\sum_{k=1}^N \mathbf{1}\bigg(T_n^{(k)}\leq T_n^{(obs)}\bigg)$. In other words, the p-value is estimated by counting how many test statistics in $\{T_n^{(t)}\}, t \in \{1, \ldots, N\}$, are lower or equal than the test statistic $T_n^{(obs)}$, built using the data from the original sample.


Finally, let us note that the test $H_0: C^\diamond_{Q_0(u_0)}=\Pi$ against $H_1: C^\diamond_{Q_0(u_0)}< \Pi$ can be carried out by considering \eqref{stat_test} again, but this time with a rejection region on the right, that is, we reject $H_0$ if $T_n\ge c_\pi$, where $c_{\pi}$ is chosen so that $P_{H_0}(T_n\geq c_{\pi})=\pi$.

\section{Illustrations}
\label{sec:numericalExperiments}

The purpose of this section is to illustrate how one can take advantage of the framework presented in Section \ref{sec:spectralRepresentation} in practice and to study the power of the test statistic $T_n$ in \eqref{stat_test} by means of a simulation experiment. We first provide a few technical details on how some numerical operations were performed.

\subsection{Computational aspects} \label{practical_issues}
In this section, log-likelihoods are maximized using either gradient descent algorithms, which can be found in the \verb+optim+ function of the statistical software \verb+R+, or differential evolution \citep{storn1997differential, qin2005self, fregrgdhgth}, for which a \verb+R+ package also exists: \verb|DEoptim| \citep{refrfre}.

Note that gradient descent algorithms usually require to provide a starting parameter vector. It is advised to try several such points and retain the best result for the likelihood. 

For the numerical evaluation of the integral in \eqref{eq:likelihood}, either naive Monte-Carlo integration was used, where the integration space is sampled a thousand times according to a uniform distribution that stretches over the integration space, or adaptative multidimensional integration was used \citep{cubaR}.

Note that in case of a one-factor copula or of an extended one-factor copula, the integration space is the segment $[0,1]$. However, with a nested extended one-factor copula, the integration space becomes $[0,1]^w$ and is therefore multi-dimensional.

\subsection{Revisiting the daily log returns from \cite{segers2014nonparametric}}

In \cite{segers2014nonparametric}, raw daily log returns from January 2010 to December 2012 of the following indices were gathered with the help of Yahoo! Finance: 
\begin{itemize}
  \setlength{\itemsep}{1pt}
  \setlength{\parskip}{0pt}
  \setlength{\parsep}{0pt}
\item Abercrombie \& Fitch Co. (ANF), traded in New York;
\item Amazon.com Inc. (AMZN), traded in New York;
\item China Mobile Limited (ChM), traded in Hong Kong;
\item PetroChina (PCh), traded in Hong Kong;
\item Groupe Bruxelles Lambert (GBLB), traded in Brussels;
\item and KBC Group (KBC), traded in Brussels.
\end{itemize}

They then analyzed the filtered data using hierarchical Archimedean copulas. The result was the estimation of the two tree structures in Figure \ref{wanted}. Refer to their paper for more details.

In this subsection, we fit the hierarchical NEOFC described by Table \ref{NEOFChierarchy3} on the same filtered data, in an effort to reproduce their results. The fit is the result of the maximization of the related likelihood using the \verb|DEoptim| function. The estimated parameters for the model in Table~\ref{NEOFChierarchy3} are, expressed as Kendall's taus for convenience,

\begin{itemize}
  \setlength{\itemsep}{1pt}
  \setlength{\parskip}{0pt}
  \setlength{\parsep}{0pt}
\item $\tau_0=0.176813$,
\item $\tau_{1234}=0.072114$,
\item $\tau_{12}=0.251929$,
\item $\tau_{34}=0.354454$,
\item and $\tau_{56}=0.383513$.
\end{itemize}
Also refer to Table \ref{NEOFChierarchy3_param}.

The Kendall's taus between the random variables of interest induced by the fitted model are however unknown but can be calculated through simulation. Hereafter is the matrix of Kendall's taus between the random variables of interest, calculated using 10000 observations from the fitted hierarchical NEOFC.

\begin{code}
      ANF AMZN  ChM  PCh GBLB  KBC
 ANF 1.00 0.28 0.13 0.14 0.12 0.12
AMZN .... 1.00 0.12 0.12 0.13 0.13
 ChM .... .... 1.00 0.34 0.10 0.11
 PCh .... .... .... 1.00 0.10 0.11
GBLB .... .... .... .... 1.00 0.36
 KBC .... .... .... .... .... 1.00
\end{code}

This can be compared to the matrix of Kendall's taus induced by the tree structure on the right of Figure \ref{wanted}, also see \cite{segers2014nonparametric}:

\begin{code}
      ANF AMZN  ChM  PCh GBLB  KBC
 ANF 1.00 0.31 0.08 0.08 0.18 0.18
AMZN .... 1.00 0.08 0.08 0.18 0.18
 ChM .... .... 1.00 0.35 0.18 0.18
 PCh .... .... .... 1.00 0.18 0.18
GBLB .... .... .... .... 1.00 0.44
 KBC .... .... .... .... .... 1.00
\end{code}

The Kendall's taus from the fitted hierarchical NEOFC however match the ones induced by the model on the left of Figure \ref{wanted} more:

\begin{code}
      ANF AMZN  ChM  PCh GBLB  KBC
 ANF 1.00 0.31 0.14 0.14 0.14 0.14
AMZN .... 1.00 0.14 0.14 0.14 0.14
 ChM .... .... 1.00 0.35 0.14 0.14
 PCh .... .... .... 1.00 0.14 0.14
GBLB .... .... .... .... 1.00 0.44
 KBC .... .... .... .... .... 1.00
\end{code}

In conclusion, the hierarchical NEOFC described by Table \ref{NEOFChierarchy3}, fitted on the filtered data from \cite{segers2014nonparametric}, is able to capture a similar hierarchical dependence structure as the one captured by the tree on the left of Figure \ref{wanted}.

\subsection{Power of the test for conditional independence}\label{sec:illustrations testing}
In this section, we study the power of the test statistic $T_n$ in \eqref{stat_test} by means of a simulation experiment. 
We considered the test \eqref{eq:hypothesis test}
and set the type I error risk to $\pi=0.1$. We drew $N=500$ datasets of size $n=50$ and $500$ from 
the model \eqref{eq:representation 2}, with $d=3$ and $C_{i}$ being Clayton copulas as in \eqref{eq:formula for Clayton bivariate copula}
with parameters $\alpha_i$, $i=1,2,3$, such that Kendall's $\tau$ coefficients are equal to 0.4 for $i=1$, 0.5 for $i=2$ and 0.6 for $i=3$. The inner copula $C_{x_0}$ was a normal copula as in \eqref{eq:formula for Gaussian copula} with correlation matrix 
\begin{align}\label{eq:correlation matrix x_0}
  R=
  \begin{pmatrix}
    1& & & \\
     &\ddots&\beta& \\
     &\beta&\ddots& \\
     & & &1
  \end{pmatrix},
\end{align}
for $\beta=0.0, 0.1, 0.2, 0.3, 0.4$, and, finally, $0.5$. (There are $N=500$ samples of size $n=50$ and $500$ for each $\beta$). Note that $\beta=0$ corresponds to the null hypothesis $H_0$.

For each sample, $k=1,\dots,N$, we calculated a $p$-value $p^{(k)}$ based on 200 boostrap replications. That is, we calculated the proportion of 200 simulated test statistics that where lower or equal than the observed one. As rejection occurs whenever the $p$-value is lower or equal to the type I error risk $\pi$, we approximated the power by the proportion of the $p^{(k)}$  falling below $\pi$. See Section \ref{test} for details.

Figure \ref{power_study} shows the estimated power of $T_n$ in \eqref{stat_test}. As it was expected, the power of the test increases as $n$ and $\beta$ grow. Furthermore the power is equal to the type I error risk $\pi$ under the null, that is when $\beta=0$.

\begin{figure}[H]
\centering
\begin{tabular}{c}

\includegraphics[width=0.5\textwidth]{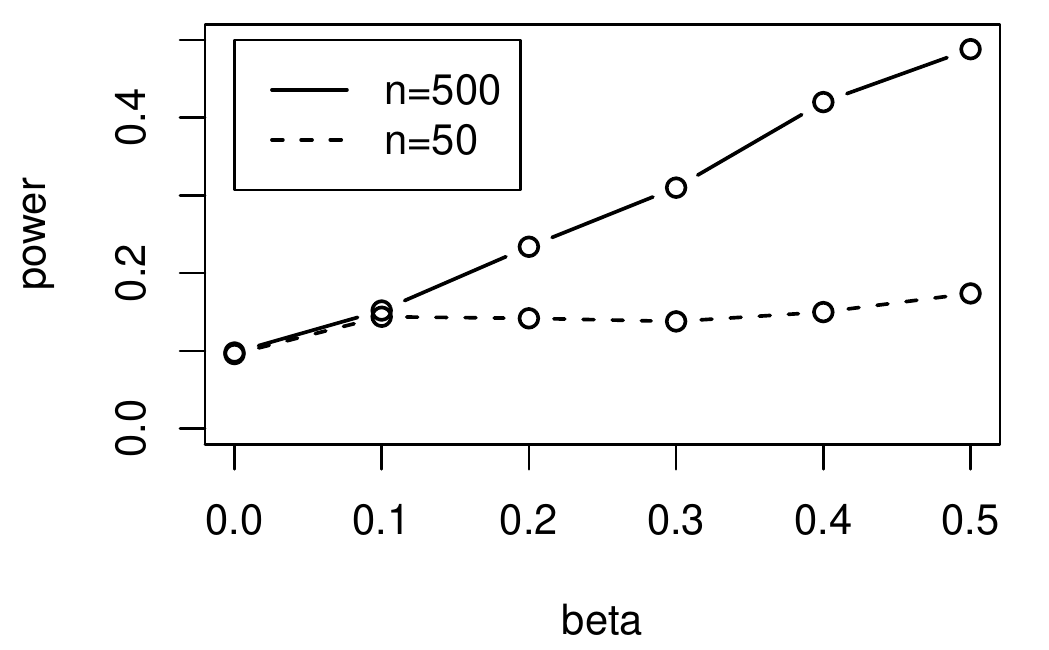}
\end{tabular}
\caption{Power of \eqref{stat_test} as a function of $\beta$, for the setting described above. \label{power_study}}
\end{figure}

\section{Conclusion}
In this paper, we extended the scope of one-factor copulas thanks to Equations \eqref{eq:representation 1}, \eqref{eq:representation 2} and \eqref{NEOFC}. Models built using these equations can now feature a varying conditional dependence structure and a factor's distribution not restricted to be the standard uniform. General dependence properties for these models were given and explicit examples were discussed. The usefulness of these models was illustrated by considering the estimation of the dependence of 6 daily log returns already analyzed in \cite{segers2014nonparametric}. Furthermore, a novel hypothesis test was constructed in order to assess whether conditional independence holds or not, and the power of that test was studied through a simulation experiment.

\newpage
\phantomsection
\bibliographystyle{plainnat}

\end{document}